\documentclass[12pt]{article}
\usepackage[top=1in, bottom=1in, left=1in, right=1in]{geometry}

\usepackage{graphics}
\usepackage{float}
\usepackage{theorem}
\usepackage{latexsym}
\usepackage{amsfonts}
\usepackage{amssymb}
\usepackage{amsmath}
\usepackage{graphicx}
\usepackage{natbib}
\usepackage{lscape}
\usepackage{booktabs}
\usepackage{threeparttable}
\usepackage{appendix}
\usepackage{subfig}
\usepackage[colorlinks=true,linkcolor=black,citecolor=black,urlcolor=black,pdfstartview=FitH,bookmarks=false]{hyperref}
\usepackage{setspace,enumerate}
\usepackage{titlesec}
\usepackage{color}
\usepackage{natbib}
\usepackage{bm}
\usepackage{multicol}
\usepackage{soul}
\usepackage{tabularx} 

\usepackage[utf8]{inputenc}

\allowdisplaybreaks

\newtheorem{thm}{Theorem}
\newtheorem{assumption}{Assumption}

\newtheorem{lemma}{Lemma}

\newtheorem{remark}{Remark}

\newtheorem{remarkapp}{Remark}

\numberwithin{algo}{section}

\numberwithin{equation}{section}
\numberwithin{thm}{section}
\numberwithin{lemma}{section}
\numberwithin{assumption}{section}

\usepackage{authblk}

\begin{document}
\thispagestyle{empty}
\vspace*{5mm}
\thispagestyle{empty}
\vspace*{5mm}
\begin{center}

{\Large \bf  Testing for Monotone Equilibrium Strategies\\ in Games of Incomplete Information}\footnote{This paper supersedes an earlier version that is entitled ``Testing for Monotone Equilibrium Strategies in First Price Auctions.'' We thank Phil Haile, Matt O'Keefe and Andres Santos for helpful comments on the early version of the paper.   Yu-Chin Hsu gratefully acknowledges the research support from the
National Science and Technology Council of Taiwan (NSTC113-2628-H-001-001 and NSTC114-2410-H-001-080-MY3) and the Academia Sinica
Investigator Award of the Academia Sinica, Taiwan (AS-IA-110-H01).  Chu-An Liu gratefully acknowledges the research support by the Academia Sinica Career Development Award
(AS-CDA-110-H02).  Yu-Chin Hsu and Chu-An Liu gratefully acknowledge the research support by the Center for Research in Econometric Theory
and Applications, Taiwan (Grant No. 113L8601). Tong Li and Hidenori Takahashi gratefully acknowledge financial support from the Joint Usage/Research Center, Institute of Economic Research, Kyoto University. Hidenori Takahashi gratefully acknowledges financial support from KAKENHI (Grant No. 24K00252).}

\vspace*{5mm}

\textbf{\large Yu-Chin Hsu$^{\dag}$}{\large{} }

\vspace*{1mm}
Institute of Economics, Academia Sinica\\
Department of Finance, National Central University\\
Department of Economics, National Chengchi University and\\
CRETA, National Taiwan University

\vspace*{3mm} \textbf{\large Tong Li$^{*}$}

\vspace*{1mm} Department of Economics, Vanderbilt University

\vspace*{3mm} \textbf{\large Chu-An Liu$^{\ddag}$}

\vspace*{1mm} Institute of Economics, Academia Sinica

\vspace*{3mm}

\textbf{\large Hidenori Takahashi$^{\star}$}

\vspace*{1mm}
Kyoto Institute of Economic Research, Kyoto University

\vspace*{5mm}
This version: \today
\end{center}

\vfill{} 
\noindent $^{\dag}$ {\fontsize{8.5pt}{9pt}\selectfont ychsu@econ.sinica.edu.tw. }, $^{*}$ {\fontsize{8.5pt}{9pt}\selectfont tong.li@vanderbilt.edu}, $^{\ddag}$ {\fontsize{8.5pt}{9pt}\selectfont caliu@econ.sinica.edu.tw},
\noindent $^{\star}$ {\fontsize{8.5pt}{9pt}\selectfont  takahashi.hidenori@kier.kyoto-u.ac.jp}

\newpage
\vspace*{30mm}
\begin{abstract}

This paper develops a unified framework for testing monotonicity of Bayesian Nash equilibrium strategies in unobserved types in games of incomplete information. We show that, under symmetric independent private types, monotonicity of differentiable equilibrium strategies is equivalent to monotonicity of a quasi-inverse strategy identified from observed actions. This allows the problem to be reformulated as testing a countable set of moment inequalities involving unconditional expectations. We propose a Cramér–von Mises–type statistic with bootstrap critical values. The method accommodates covariates and game heterogeneity. Monte Carlo simulations demonstrate finite-sample performance, and an application to procurement auctions illustrates cartel detection.

\medskip
\medskip
\noindent \textit{Keywords:}  Monotone equilibrium strategies, Bayesian games, Moment inequalities, Rationalization, 
Testing for monotonicity, Game heterogeneity 

\noindent \textit{JEL codes: C12, C14, L70} 
\end{abstract}

\thispagestyle{empty} \newpage \setcounter{page}{1}

\section{Introduction}
In games of incomplete information, or Bayesian games, players' strategies are functions mapping their private types such as signals, beliefs, valuations, or costs, to actions. A Bayesian Nash equilibrium (BNE) is a profile of such strategy functions where each player's strategy is a best response to others, given beliefs about their types. Monotonicity of equilibrium strategies, which means that a player's action is monotone with their type, is one of the most empirically informative predictions in games of incomplete information: higher valuations, lower costs, or stronger signals lead to more aggressive actions. This “sorting” property underlies how economists interpret actions as revealing private information in a wide range of environments, from auctions to contests, public good provision, and quantity competition with private information. It is also a cornerstone of the theory: monotonicity is typically the condition that delivers well-behaved equilibria (existence and uniqueness), intuitive comparative statics, and tractable mechanism design. 
On the theoretical ground, \cite{Athey2001} establishes seminal results on the existence of a monotone pure strategy Bayesian Nash equilibrium under the Spence-Mirrlees single-crossing condition, which are further extended to accommodate multi-dimensional type and action spaces by \cite{McAdams2003},  and to allow for more general spaces by \cite{Reny2011}. From an econometric perspective, monotone equilibrium strategies are equally central because they justify the inversion logic that maps observed actions back into latent types, enabling identification and inference in structural models.
See, e.g., \cite{GPV2000}, \cite{LPV2002} and a large related literature in first price auctions, \cite{Kline2025} in allocation-transfer games, \cite{LiZhangZhao2025} and \cite{LiZhangZhao2026} for Bayesian games with continuous payoffs, and \cite{AryalZincenko2024} for Cournot competition with private information, to name only a few.

While there has been significant progress in identification and estimation of Bayesian games, there is remarkably little work on testing the monotonicity of Bayesian Nash equilibrium strategies, despite its role as a key equilibrium condition and as a cornerstone in identification and estimation. This gap is consequential because monotonicity is routinely imposed as an equilibrium restriction and as a prerequisite for inversion-based identification and estimation. If monotonicity fails in the data, the structural model and its implied counterfactuals may be invalid. Moreover, monotonicity has independent empirical content in contexts where departures from competitive behavior are of interest. For example,  \citet{BajariYe2003} derive a set of sufficient and necessary conditions for the symmetric and competitive bidding equilibrium including monotonicity of the bidding equilibrium, thus rejection of monotonicity can constitute a test for collusion. However, they bypass testing for monotonicity, and only test for other conditions such as exchangeability, largely because tools for testing monotone equilibrium strategies have been absent. 

Testing monotone equilibrium strategies in types is challenging because types are unobserved to the econometrician. To date, the only development is in the auction setting considered in \cite{LV2021}, who exploit the equivalence between monotonicity of Bayesian Nash equilibrium bidding strategies in private values and monotonicity of the quasi-inverse of the bidding strategy, which is identified from observed bids.\footnote{In a somewhat related yet different vein, noting that the nonparametric estimator of the quasi-inverse bidding strategy in \cite{GPV2000} does not impose the monotonicity condition, \cite{HLMPP2012} and \cite{MMSX2021} develop monotonicity constrained nonparametric estimators of the quasi-inverse strategies.}  They reformulate the test as one for the concavity of the integrated quasi-inverse strategy to avoid nonparametric estimation of bid densities in the quasi-inverse formulation.  They 
then construct a statistic based on the difference between this function and its least concave majorant (LCM).

This paper aims to fill the gap in the literature by developing a new and easy-to-implement test for monotone equilibrium strategies in general Bayesian games. To this end, we first characterize the sufficient and necessary conditions for the
existence of a private type distribution that can rationalize the observed action distribution in general Bayesian games considered in \cite{Athey2001} within the symmetric independent private type paradigm. As such, we extend the rationalization logic in the empirical auction literature to games of incomplete information, which include not only auctions that have discontinuous payoffs, but also other games with continuous payoffs such as contests, public good provision, and Cournot competition with private information, among others.\footnote{The rationalization result in the empirical auction literature was first established in \cite{GPV2000} for the symmetric independent private value setting, and has been extended to the symmetric affiliated private value setting in \cite{LPV2002}, and the asymmetric affiliated private value setting in \cite{CPV2003}, as well as for risk-averse bidders in \cite{GPV2009}.}

Our rationalization result delivers an equivalence that forms the basis of our test: within the maintained framework, monotonicity of the equilibrium strategy in private types holds if and only if the quasi-inverse equilibrium strategy is monotone in actions. The quasi-inverse is a ratio of the derivatives of two reduced-form objects—the expected allocation rule and the expected payment rule—that are functions of observed actions and can therefore be identified from the distribution of actions given the known institutional mapping from actions to allocations and transfers. As a result, the testing problem becomes to test whether this identified quasi-inverse map is monotone in the action.

Although this insight is in a similar spirit to 
\cite{LV2021}, our testing approach differs fundamentally. First, we move beyond first-price auctions and develop a single and unified framework that covers a broad class of Bayesian games with continuous actions, including contests, public good provision, and Cournot competition with private information, among others. Second, rather than relying on an LCM-based concavity test, we reformulate the monotonicity null into a countable collection of moment inequalities that involve only unconditional expectations and can be estimated by simple sample analogs, following the approach of \cite{HLS2019}, \cite{HsuShen2021} and \cite{HHLL2025}, who reformulate the null hypothesis of monotonicity in regressions, conditional treatment effects under regression discontinuity designs, and mean potential outcomes in continuous treatment effect models, respectively, to moment inequalities. Third, this moment-inequality formulation naturally extends to heterogeneous games, allowing us to control for game-specific covariates and to develop procedures that allow for both observed and unobserved game heterogeneity—features that are essential in empirical work and difficult to accommodate with LCM-type approaches.

Building on this moment-inequality representation, we construct a Cram\'{e}r–von Mises (CvM) type statistic based on the estimated inequalities.  
We combine a nonparametric bootstrap and the generalized moment selection method of \cite{AndrewsShi2013} to construct the critical values.  We show that our test controls size asymptotically and is consistent against any fixed alternative. We also provide a practical guide by specializing the general procedure to four widely used models: first-price auctions, Tullock contests, public good provision, and Cournot competition with private information. Monte Carlo evidence indicates that the test performs well in finite samples, including sample sizes typical in auction applications. 

We apply our test to the asphalt paving procurement datasets studied in Aaltio et al. (2025), where cartel detection is a central concern. The empirical findings are consistent with the evidence reported in \cite{Aaltio_etal2025}: we do not reject monotonicity in the dataset where no collusion was found, while in datasets where collusion was detected, our results provide evidence against monotonicity. To our knowledge, this is the first empirical application to formally test monotonicity of equilibrium strategies in a Bayesian Nash equilibrium framework using real data.

An important advantage of our moment inequalities-based test for monotonicity  is it naturally accommodates heterogeneous games. Controlling for game-specific covariates is essential in almost all empirical applications. Moreover, our approach can accommodate unobserved game heterogeneity, noting that it has been an important and yet challenging issue in the structural auction literature to develop methods in identification and inference with unobserved auction heterogeneity.\footnote{While it is important to control for unobserved game heterogeneity, little attention has been paid to controlling for unobserved heterogeneity in Bayesian games other than auctions. For identification and estimation of auction models with unobserved heterogeneity, see \cite{LiZheng2009}, \cite{Krasnokutskaya2011},  \cite{Roberts2013}, \cite{CHS2020}, and \cite{LSX2024}, among others, and \cite{HaileKitamura2019}  for a survey.} Specifically, We develop fully nonparametric tests that can control for observed heterogeneity and also provide tests based on implications that remain valid under unobserved heterogeneity. Because fully nonparametric procedures can suffer from the curse of dimensionality, we additionally propose semiparametric tests based on action homogenization, extending the bid homogenization method proposed in \cite{HHS2003} beyond auctions to general Bayesian games. As such, our paper provides a monotonicity testing framework for incomplete-information games that can accommodate both observed and unobserved heterogeneity in a way suitable for applied work.  .

Our work contributes to the econometrics of games of incomplete information. Our focus is on Bayesian games with continuous actions, which have witnessed vast development in auctions, and recent emerging literature in other games. See \cite{LarzenZhang2018}, \cite{ALS2023}, \cite{Kline2025},  \cite{LiZhangZhao2025}, and \cite{LiZhangZhao2026} for identification of general Bayesian games with different payoff/information structures; \cite{HeHuang2021} for estimation of a contest model, \cite{AryalGabrielli2020} for estimation of a competitive nonlinear pricing model, and \cite{AryalZincenko2024} for estimation of a Cournot competition model with private information.\footnote{Notably, there has been significant development in the econometrics of games of incomplete information with discrete actions, e.g., \cite{Seim2006}, \cite{Aradillas-Lopze2010},  \cite{Tang2010}, \cite{dePaulaTang2012}, \cite{WanXu2014}, \cite{AguirregabiriaMira2019}, among others.} 

Our work also contributes to the econometric literature on testing shape restrictions, and monotonicity in particular.\footnote{See, e.g., \cite{CSS2018} for a review on the econometrics of shape restrictions.} The recent projection-based approach of \cite{FS2021}, for example, provides a general framework for testing shape restrictions defined by convex cones. As \cite{LV2021} emphasize, most existing tests focus on monotonicity in observed variables, whereas both their work and ours address monotonicity in latent variables.
While \cite{LV2021} rely on the LCM in the spirit of \cite{FS2021},  our test employs a transformation that leads to moment inequalities akin to \cite{HLS2019}, \cite{HsuShen2021} and  \cite{HHLL2025}; the advantages of our test highlight a flexible route to testing shape restrictions in structural models and demonstrate the practical value of such methods in applications.

This paper is organized as follows. Section 2 characterizes differentiable symmetric equilibrium strategies within the symmetric independent private-type paradigm and establishes the rationalization result linking monotonicity in types to monotonicity of the quasi-inverse map in actions. It then develops our benchmark test for homogeneous games and derives its asymptotic properties. Section 3 illustrates implementation of our benchmark test in auctions, contests, public good provision, and Cournot competition with private information. Section 4 considers the extensions of our tests in Section 2 to allow for heterogeneous games, controlling for observed and unobserved game heterogeneity. Section 5 reports Monte Carlo evidence. Section 6 presents the asphalt paving procurement auction application. Section 7 concludes. Technical proofs are collected in the Appendix.  

\section{Games of Incomplete Information}
\label{sec: incomplete information}
In this section, we first characterize the sufficient and necessary conditions for the
existence of a private type distribution that can rationalize the observed action distribution in general Bayesian games considered in \cite{Athey2001} within the symmetric independent private type paradigm, in which all agents play a symmetric, monotone and differentiable strategy.  The framework here covers a large class of Bayesian games, including the trading games with auction games as special cases in \cite{LarzenZhang2018}, the allocation-transfer games considered in \cite{Kline2025}, which include auctions, contests, and public good provision, and general Bayesian games with continuous payoffs in \cite{LiZhangZhao2025}, which include contests, public good provision, and Cournot competition with incomplete information. 

\subsection{Strictly Monotone Bayesian Nash Equilibrium and Rationalization Result}
\label{sec: necessary and sufficient conditions}

We consider a model with symmetric independent private types. We impose the following conditions on the types.  
\begin{assumption} \label{assu: value}
Assume that (i) Agent $i$, $i = 1, \ldots, N,$ draws his/her type independently from the other agents from a CDF $F(\cdot)$ with the PDF $f(\cdot)$. 
(ii)
$f(\cdot)$ is strictly positive and bounded away from zero on its support, a compact
interval $\mathcal{V}=[\underline{v}, \overline{v}] \subseteq R$ , and is twice continuously differentiable on $(\underline{v}, \overline{v})$.
\end{assumption}
Each agent only observes his/her own type, but not the other agents' types, while the number of agents $N$, and $F(\cdot)$ ($f(\cdot)$) are common knowledge to all agents, who are assumed to be risk neutral. Let $D_{i}$ be the allocation indicating $i$ attaining the portion of the good or the probability of attaining the good, and let $C_{i}$ be any net payment made or cost incurred by $i$. 
The outcome allocations and transfers for all agents are calculated as functions of all agents' actions  and we introduce an individual allocation function $D_i \equiv D(B_i,B_{-i})$, which determines the allocation of the item for a given vector of actions $B_1, ...,B_N$.  The eventual payment  $C_{i}$ is also a function of all agents' actions and thus we introduce an individual payment function $C(B_i,B_{-i})$ such that $C_i \equiv C(B_i,B_{-i})$, which is the final payment made by agent $i$.  The outside option is normalized to zero. If agent $i$ has type $V_{i}$, agent $i$'s utility is linear in type: $V_{i}D_{i} - C_{i}$. 

Having observed their $V_{i}$'s, agents choose a monotone strategy mapping values into an action $b \in \mathcal{B}$, where $\mathcal{B}$ is the space of actions available to agents. We assume that $\mathcal{B} = [\underline{b}, \overline{b}]$ with $0 \leq \underline{b} < \overline{b} < \infty$. For a given strategy $s$, we write $s(v)$ for the action played by type $v$ under $s(\cdot)$.

We define $P(b)$ and $T(b)$ to be the expected outcome allocations and expected payment, respectively, for agent $i$ when agent $i$ plays action $b$ that is integrated over the distribution of rivals' strategies $B_{-i}$.  Specifically,
\begin{align}
P(b)=E_{B_{-i}}[D_i(B_i,B_{-i})|B_i=b],~~T(b)=E_{B_{-i}}[C_i(B_i,B_{-i})|B_i = b].
\label{eq: P and T}
\end{align}
We make the following assumption on the smoothness of both $P(\cdot)$ and $T(\cdot)$.
\begin{assumption} \label{assu: PT}
Both $T(b)$ and $P(b)$ are twice continuously differentiable in $b$ on $(\underline{b}, \overline{b})$.
\end{assumption}
A similar assumption is made in \cite{LarzenZhang2018} for point identification of trading games and \cite{Kline2025} for point identification of allocation-transfer games, respectively. 

Now define the interim expected payoff for an agent with type $v$ and action $b$ as $\Pi(b,v) \equiv vP(b)-T(b)$. we make the following assumptions:
\begin{assumption} \label{assu: SSM}
$\frac{\partial^2 \Pi(b,v)}{\partial b \partial v} > 0$ for all $v\in(\underline{v}, \overline{v})$ and $b\in(\underline{b}, \overline{b})$.
\end{assumption}
\begin{assumption} \label{assu: SC}
$\frac{\partial^2 \Pi(b,v)}{\partial b^2} < 0$  for all $v\in(\underline{v}, \overline{v})$ and $b\in(\underline{b}, \overline{b})$.
\end{assumption}
Assumption \ref{assu: SSM} is the strict supermodularity assumption, which is stronger than the single crossing condition in \cite{Athey2001} who establishes existence of monotone Bayesian Nash equilibrium in games of incomplete information with continuous action spaces. With the specification of $\Pi (b, v)$ as $vP(b)-T(b)$, Assumption \ref{assu: SSM} is equivalent to $P'(b) > 0$, which means that the expected outcome allocation is strictly increasing in actions.  Assumption \ref{assu: SC} assumes the expected payoff to be strictly concave. These two assumptions are needed to get strictly monotone Bayesian Nash equilibrium (SMBNE) in our framework where we use the first order condition (FOC) approach in considering differentiable equilibrium strategies. 
\begin{lemma}\label{lemma: SMBNE}
Under Assumptions \ref{assu: value}, \ref{assu: PT}, \ref{assu: SSM} and \ref{assu: SC}, $s'(v) > 0$  for all $v\in(\underline{v}, \overline{v})$. 
\end{lemma}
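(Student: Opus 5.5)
The plan is to work directly from the first order condition that characterizes a differentiable equilibrium strategy and then differentiate it with respect to the type. Let $s(\cdot)$ be a symmetric differentiable equilibrium strategy. For every interior type $v\in(\underline{v},\overline{v})$ with $s(v)\in(\underline{b},\overline{b})$, the action $s(v)$ maximizes $\Pi(b,v)=vP(b)-T(b)$ over $b$. Assumption \ref{assu: PT} makes $\Pi$ twice continuously differentiable in $b$, so the interior first order condition applies and, holding identically in $v$, it reads
\begin{align*}
\frac{\partial \Pi(s(v),v)}{\partial b} = vP'(s(v)) - T'(s(v)) = 0 .
\end{align*}
Assumption \ref{assu: SC} makes $\Pi(\cdot,v)$ strictly concave. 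Hence this condition identifies the unique best response, and the equilibrium action is the solution of the FOC.

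Next I will differentiate the identity with respect to $v$. This is the implicit function theorem step. Write $G(b,v)=\partial\Pi(b,v)/\partial b$. Then $G$ is $C^1$, its partial derivative in $b$ is $\partial^2\Pi/\partial b^2<0$ (Assumption \ref{assu: SC}), and its partial derivative in $v$ is $\partial^2\Pi/\partial b\partial v = P'(b)>0$ (Assumption \ref{assu: SSM}). By the implicit function theorem, the solution $s(v)$ of $G(s(v),v)=0$ is differentiable, and
\begin{align*}
s'(v) = -\,\frac{\partial^2 \Pi(s(v),v)/\partial b\,\partial v}{\partial^2 \Pi(s(v),v)/\partial b^2} = -\,\frac{P'(s(v))}{vP''(s(v))-T''(s(v))} .
\end{align*}
The numerator is strictly positive and the denominator is strictly negative, so $s'(v)>0$. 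Assumption \ref{assu: value} enters only to guarantee that the type space is a nondegenerate interval and that $F$ is smooth enough for $P$ and $T$ to inherit the smoothness in Assumption \ref{assu: PT}.

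The main obstacle I expect is justifying that the FOC holds with equality at every interior $v$, that is, that $s(v)$ lies in the open interval $(\underline{b},\overline{b})$ rather than at a corner. The derivative formula is then immediate. I would first show interiority directly. Suppose $s(v_0)=\underline{b}$ for some interior $v_0$. The boundary condition $v_0P'(\underline{b})-T'(\underline{b})\le 0$, together with $P'>0$, pins down at most a set of types $v\le v_0$ on which $s$ sits at the corner. Strict supermodularity then forces types above this set into the interior, and by continuity the FOC binds there.

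If this cannot be made fully rigorous from the stated assumptions, I would fall back on a weaker reading. Since the paper restricts attention to differentiable equilibrium strategies characterized by the FOC, I would treat interiority as part of the maintained notion of equilibrium: $s$ is a differentiable strategy satisfying the FOC on $(\underline{v},\overline{v})$. Under that reading, the implicit function argument above completes the proof.
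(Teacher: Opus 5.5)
Your proposal is correct and follows essentially the same route as the paper: strict concavity (Assumption \ref{assu: SC}) gives a unique best response characterized by the interior FOC, and the implicit function theorem with $\Pi_{bv}=P'>0$ and $\Pi_{bb}<0$ yields $s'(v)=-P'(s(v))/(vP''(s(v))-T''(s(v)))>0$. The paper also states the argument only ``on any interval where the equilibrium best response is interior,'' so your fallback of treating interiority as part of the maintained equilibrium notion is exactly the paper's reading; your sketched direct interiority argument is therefore not needed, and as written it would not rule out corners at $\overline{b}$ anyway.
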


For an SMBNE where the FOC is satisfied, each agent's strategy $s(v_i)$ needs to be a best response given that all rival players play according to $s(v)$ for all $v \in [\underline{v}, \overline{v}]$. This means that incentive compatibility holds:
\begin{align}
s(v_i) \in \arg\max_{b' \in \mathcal{B}} \; v_{i} P(b') - T(b').
\label{eq: IC 1}
\end{align}
Suppose that the SMBNE is an interior solution. The FOC-based lemma below applies on intervals where best responses are in the interior and $s(\cdot)$ is differentiable. 

With agents behaving non-cooperatively, a common strategy $s_i = s(v_i)$ is an equilibrium strategy if, when adopted by all agents but one, say agent $1$, it is optimal for that agent to adopt the given monotone strategy as well. Hence, we can write any action as $s_1 = s(x)$ and view agent $1$ as choosing $x$. It then follows that $s(v)$ is an equilibrium strategy if agent $1$ can do no better than choosing $x = v_1$, and so its action is $s(v_1)$. To this end, the expected gain to agent $1$ can be expressed as
\[
\Pi(x,v_1) = v_1 \, P\!\big(s(x)\big) - T\!\big(s(x)\big).
\]

Thus, the first-order condition for an interior optimum for any agent with valuation $v$ is to choose $x = v$, such that
\[
\frac{\partial \Pi}{\partial x}(x,v)
= \big[ v \, P'\!\big(s(x)\big) - T'\!\big(s(x)\big) \big] \, s'(x) = 0
\qquad \text{at } x = v .
\]
Because $s'(x) > 0$, we get
\begin{align}
v = \frac{T'\!\big(s(v)\big)}{P'\!\big(s(v)\big)} \;\equiv\; \xi\!\big(b\big),\label{eq:implicit}
\end{align}
where $b = s(v)$. 

(\ref{eq:implicit}) is established in Corollary 2 in \cite{LarzenZhang2018} for trading games, and in equation (26) in \cite{Kline2025} for allocation-transfer games in the independent valuation paradigm.  We refer to $\xi(\cdot)$ in (\ref{eq:implicit}) as the quasi-inverse equilibrium strategy in all the Bayesian games we consider. 

For a continuously differentiable CDF $G$ on $[\underline b,\bar b]$ with density $g>0$ on $(\underline b,\bar b)$, 
let $P(\cdot;G)$ and $T(\cdot;G)$ denote the reduced-form expected allocation and transfer
when opponents' actions are i.i.d. with the CDF \ $G(\cdot)$, and assume $P(\cdot;G),T(\cdot;G)\in C^{2}(\underline b,\bar b)$
with $P'(b;G)>0$ for all $b\in(\underline b,\bar b)$. Define
\[
\xi(b;G):=\frac{T'(b;G)}{P'(b;G)}, \qquad b\in(\underline b,\bar b),
\]
and assume $\xi(\cdot;G)$ extends continuously to $[\underline b,\bar b]$ with $\underline\xi:=\xi(\underline b;G)$ and $\bar\xi:=\xi(\bar b;G)$.

\begin{thm}[Rationalizability and Quasi-Inverse Equilibrium Strategy]
\label{thm: LZ2018}
Fix a continuously differentiable CDF $G(\cdot)$ on $[\underline b,\bar b]$ with density $g>0$ on $(\underline b,\bar b)$. Then $G$ is rationalizable by a strictly monotone, symmetric, and differentiable equilibrium if and only if the following (C1) and (C2) hold. 
\begin{itemize}
\item[] (C1) Actions $(B_1,\ldots,B_N)$ are i.i.d.\ with common CDF $G(\cdot)$.

\item[] (C2) $\xi(\cdot;G)$ is strictly increasing on $(\underline b,\bar b)$.
\end{itemize}

In addition, for any strictly monotone interior symmetric equilibrium rationalizing $G(\cdot)$, the type distribution $F(\cdot)$ is unique.
\end{thm}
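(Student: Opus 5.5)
We prove the two directions of the equivalence separately, following the rationalization argument of \cite{GPV2000} with the first-price-auction inverse bid function replaced by the general quasi-inverse $\xi(\cdot;G)$. Throughout, the reduced-form objects $P(\cdot;G)$ and $T(\cdot;G)$ are computed with opponents' actions i.i.d.\ $G$. This is legitimate under (C1), and it is automatic in any symmetric equilibrium with independent types.

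\emph{Necessity.} Suppose $G$ is rationalized by a strictly monotone, symmetric, differentiable equilibrium $s(\cdot)$ with type distribution $F$ satisfying Assumption \ref{assu: value}.
\begin{itemize}
\item[(a)] Types are i.i.d.\ and every agent plays the same strictly increasing $s$. Hence $B_i=s(V_i)$ are i.i.d.\ with $G(b)=F(s^{-1}(b))$, which gives (C1).
\item[(b)] For an interior equilibrium, the first-order condition derived before the theorem yields (\ref{eq:implicit}): $v=\xi(s(v);G)$ for all $v\in(\underline v,\bar v)$. Equivalently, $\xi(\cdot;G)=s^{-1}(\cdot)$ on $(\underline b,\bar b)=s((\underline v,\bar v))$.
\item[(c)] Lemma \ref{lemma: SMBNE} gives $s'>0$, so $s^{-1}$ is strictly increasing. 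Hence $\xi(\cdot;G)$ is strictly increasing, which is (C2).
\end{itemize}
The same identity gives uniqueness. For every $b$, $F(\xi(b;G))=\Pr(V\le s^{-1}(b))=G(b)$, so $F=G\circ\xi^{-1}(\cdot;G)$ on $[\underline\xi,\bar\xi]$. Since $\xi$ depends only on $G$ and the known mappings $D,C$, $F$ is pinned down.

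\emph{Sufficiency.} Assume (C1) and (C2). Since $\xi(\cdot;G)$ is continuous and strictly increasing, we can construct the primitives explicitly.
\begin{itemize}
\item[(a)] Define $s:=\xi^{-1}(\cdot;G)$ on $[\underline\xi,\bar\xi]$, and let $F(v):=G(s(v))$, i.e.\ the law of $V_i:=\xi(B_i;G)$.
\item[(b)] Check that $F$ is an admissible type distribution. If the paper's statement is read as requiring Assumption \ref{assu: value}, we need $s$ differentiable with $s'>0$. This follows from $\xi\in C^1$ (because $P,T\in C^2$) provided $\xi'>0$. We will need either the strict-increase condition read as $\xi'>0$, or to accept a density $f=g(s)s'$ that may be unbounded at isolated points. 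We will state the former and remark on the latter.
\item[(c)] Show that $s$ is a best response when rivals play $s$. Rivals' actions are then i.i.d.\ $G$, so agent $i$'s interim payoff from action $b$ is $\Pi(b,v)=vP(b;G)-T(b;G)$. Its derivative is $\partial_b\Pi(b,v)=P'(b;G)\,[v-\xi(b;G)]$.
\item[(d)] Since $P'>0$ and $\xi$ is strictly increasing, this derivative is positive for $b<s(v)$ and negative for $b>s(v)$. So $\Pi(\cdot,v)$ is single-peaked at $b=s(v)$, and $s(v)$ is the global maximizer on $\mathcal B$. This verifies (\ref{eq: IC 1}) without invoking the concavity Assumption \ref{assu: SC}.
\item[(e)] The equilibrium is symmetric, strictly monotone and differentiable, and it generates $G$ by construction.
\end{itemize}

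The main obstacle is the global optimality in step (d) of sufficiency. The first-order condition alone gives only local optimality. The key observation is that the factorization $\partial_b\Pi=P'(b)[v-\xi(b)]$, combined with monotonicity of $\xi$, delivers the sign pattern needed for a global maximum. A secondary delicacy is the boundary behavior: handling $b$ at $\underline b,\bar b$ via the continuous extension of $\xi$, and ensuring that types near $\underline v,\bar v$ are interior. We treat these by continuity.
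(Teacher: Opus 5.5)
Your proposal is correct and takes essentially the same route as the paper. The first-order condition gives $\xi(\cdot;G)=s^{-1}$ and hence $F=G\circ\xi^{-1}(\cdot;G)$ for necessity and uniqueness, and sufficiency builds $s=\xi^{-1}(\cdot;G)$ and gets global optimality from $\partial_b U(b,v;G)=P'(b;G)\,[v-\xi(b;G)]$; the paper integrates this identity while you read off its sign, which is the same argument.

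Your caveat that differentiability of $s$ requires $\xi'>0$, not just strict monotonicity, is a real point the paper's proof skips. Adopting it costs nothing on the necessity side, because $\xi=s^{-1}\in C^1$ with $s$ differentiable forces $\xi'=1/s'>0$. Lemma \ref{lemma: SMBNE} is also unnecessary there, since strict monotonicity of $s$ is already assumed.
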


Theorem \ref{thm: LZ2018} extends Theorem 1 in \cite{GPV2000} in first price auctions within the symmetric independent private value paradigm to general Bayesian games considered in \cite{Athey2001} within the symmetric independent private type paradigm and with differentiable equilibrium strategies. First, it provides necessary and sufficient conditions for the existence of a private type distribution that can rationalize the observed action distribution. These two conditions are independence of actions and also monotonicity of the quasi-inverse equilibrium strategy, with the latter being the basis for our tests for monotone equilibrium strategies. Second, it establishes nonparametric point identification of the private type distribution from the distribution of the observed actions in the model considered. Our result extends the point identification result in \cite{LiZhangZhao2026} in Bayesian games with continuous payoff functions to general Bayesian games. Relatedly, under somewhat different set of assumptions, \cite{Kline2025} establishes the point identification result in allocation-transfer games.

\subsection{Proposed Test}\label{sec: proposed test}
Based on Theorem \ref{thm: LZ2018}, we formulate the null hypothesis as
\begin{align}\label{eq: general H0}
			H_0:~ \xi(b)=\frac{T'(b)}{P'(b)}~\text{is weakly increasing in $b$.}  
\end{align}
In (\ref{eq: general H0}), we suppress the dependence of $\xi$, $T$ and $P$ functions on $G$ for notational simplicity. 
Ideally, the null should be that $\xi(b)$ is strictly increasing, which usually results in a degenerate case for the asymptotic distribution of the test statistics. Therefore, we include the equality in the null hypothesis, as is usually done in the literature, e.g., \cite{LV2021}.\footnote{Another reason to consider weakly increasing in the null hypothesis is that any test of the null of strictly increasing is likely to have power equal to size against alternatives that are weakly (but not strictly) increasing. This is related to the problems in \cite{Romano2004}. We thank Andres Santos for pointing this out.}
We propose a test for the null hypothesis in (\ref{eq: general H0}) and our test is similar to those in \cite{HLS2019}, \cite{HsuShen2021} and \cite{HHLL2025}.  We first  adapt the transformation in Lemma 3.1 of  \cite{HsuShen2021} to transform the null hypothesis in (\ref{eq: general H0}) to countably many inequalities without loss of information and we summarize this in the following lemma.  Define $a=\overline{b}-\underline{b}$, which is the length of $[\underline{b},\overline{b}]$.

\begin{lemma}\label{lemma: general H0 transformation} 
Let $h(b)$ be a known function such that $0 <h(b)P'(b) \leq M<\infty$.  
Then $H_0$ in (\ref{eq: general H0}) is equivalent to
\begin{align}
&H_0': \nu(b_1,b_2, q)=M(b_2,q)W(b_1,q)-M(b_1,q)W(b_2,q)\leq 0~\text{for all $(b_1,b_2, q)\in\mathcal{L}$} \label{eq: general H0'},
\end{align}
where 
\begin{align}
&M(b,q)= \int_{b}^{b+\frac{a}{q}} h(\tilde{b})T'(\tilde{b}) d\tilde{b}, 
~~~
W(b,q)=\int_{b}^{b+\frac{a}{q}} h(\tilde{b})P'(\tilde{b}) d\tilde{b},~\text{and}\label{eq: general M W}\\
&\mathcal{L}=\Big\{(b_1,b_2, q): \Big(\frac{b_1-\underline{b}}{a}, \frac{b_2-\underline{b}}{a}\Big)\cdot q\in(0,1,\ldots,q)^2, b_1>b_2~\text{for q=2,3,\ldots}  \Big\}.\label{eq: L}
\end{align}
\end{lemma}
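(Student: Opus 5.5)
The plan is to read $\nu(b_1,b_2,q)\le 0$ as a comparison of local weighted averages of $\xi$, and then prove the two directions of the equivalence separately.

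\textbf{Reduction to weighted averages.} Since $T'(\tilde b)=\xi(\tilde b)P'(\tilde b)$, we have
\[
M(b,q)=\int_b^{b+a/q}\xi(\tilde b)\,\omega(\tilde b)\,d\tilde b,\qquad W(b,q)=\int_b^{b+a/q}\omega(\tilde b)\,d\tilde b,
\]
where $\omega(\tilde b):=h(\tilde b)P'(\tilde b)\in(0,M]$. The bound $\omega\le M$ makes both integrals finite, and $\omega>0$ gives $W(b,q)>0$ whenever the interval $[b,b+a/q]$ is nondegenerate and contained in $[\underline b,\bar b]$.

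For such intervals, $\bar\xi(b,q):=M(b,q)/W(b,q)$ is a weighted average of $\xi$ over $[b,b+a/q]$, so
\[
\bar\xi(b,q)\in\Bigl[\min_{[b,b+a/q]}\xi,\ \max_{[b,b+a/q]}\xi\Bigr].
\]
Here we use that $\xi$ extends continuously to $[\underline b,\bar b]$. Dividing $\nu$ by $W(b_1,q)W(b_2,q)>0$ shows that $\nu(b_1,b_2,q)\le 0$ is equivalent to $\bar\xi(b_2,q)\le\bar\xi(b_1,q)$.

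The top grid point $b=\bar b$ needs separate treatment. There the interval leaves the support, and under the convention that the integrands vanish outside $[\underline b,\bar b]$ we get $W=M=0$ and hence $\nu=0$. So those constraints hold trivially, and only grid indices $k=0,\dots,q-1$ matter.

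\textbf{Necessity ($H_0\Rightarrow H_0'$).} Take grid points $b_1>b_2$ at the same resolution $q$. Then $b_1\ge b_2+a/q$, so the interval attached to $b_1$ lies to the right of the one attached to $b_2$; they share at most an endpoint. If $\xi$ is weakly increasing, the average over the left interval is at most $\xi(b_2+a/q)$. This value is at most $\xi(b_1)$, which in turn is at most the average over the right interval. Hence $\bar\xi(b_2,q)\le\bar\xi(b_1,q)$, i.e. $\nu\le 0$.

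\textbf{Sufficiency ($H_0'\Rightarrow H_0$).} Argue by contraposition. Suppose there exist $c<d$ in $[\underline b,\bar b)$ with $\xi(c)>\xi(d)$, and set $\varepsilon=(\xi(c)-\xi(d))/3$. Uniform continuity of $\xi$ on the compact interval gives $\delta>0$ such that $|\xi(x)-\xi(y)|<\varepsilon$ whenever $|x-y|<\delta$.

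Choose $q$ large enough that $2a/q<\min\{\delta,\,d-c\}$. Let $b_2$ and $b_1$ be the grid points $\underline b+ka/q$ with $c\in[b_2,b_2+a/q)$ and $d\in[b_1,b_1+a/q)$. If $d=\bar b$, first replace $d$ by a nearby interior point using continuity. Because $d-c>a/q$, these points satisfy $b_1>b_2$.

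By the weighted-average bracket, $\bar\xi(b_2,q)>\xi(c)-\varepsilon$ and $\bar\xi(b_1,q)<\xi(d)+\varepsilon$. Therefore $\bar\xi(b_2,q)>\bar\xi(b_1,q)$, so $\nu(b_1,b_2,q)>0$ for this point of $\mathcal{L}$, and $H_0'$ fails.

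\textbf{Main obstacle.} I expect the delicate part to be this sufficiency step, specifically the bookkeeping that the chosen grid points really lie in $\mathcal{L}$ with $b_1>b_2$ and nondegenerate intervals inside the support. Two further points need care there. First, $\xi$ is only assumed continuous on the closed interval via its extension, which is exactly what the bracket argument and the uniform-continuity choice of $\delta$ require; no continuity of $h$ is needed. Second, the boundary conventions at $\bar b$ must be handled as described above.
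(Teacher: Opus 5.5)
Your argument is correct. It starts from the same substitution as the paper. You write $T'=\xi P'$, so that $M(b,q)=\int_b^{b+a/q}\xi\,\omega$ and $W(b,q)=\int_b^{b+a/q}\omega$ with weight $\omega=hP'$. The paper stops there and imports the equivalence from Lemma 3.1 of \cite{HsuShen2021}, taking $\lambda=T'/P'$ and weight $hP'$.

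You instead prove that equivalence directly. For necessity, you use that same-resolution cells with $b_1>b_2$ are ordered and share at most an endpoint. For sufficiency, you use uniform continuity of the extended $\xi$ and a grid fine enough to place the two points of a violation in distinct cells strictly inside the support.

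The citation is shorter, but it leaves implicit which hypotheses of the external lemma are being verified. Your version makes them visible:
\begin{itemize}
\item necessity uses only monotonicity and boundedness of $\xi$;
\item sufficiency genuinely relies on the continuity of $\xi$ assumed just before Theorem~\ref{thm: LZ2018}, since the integrals cannot detect a violation confined to a null set;
\item $W(b,q)>0$ requires the cell to lie inside $[\underline b,\bar b]$;
\item the top grid point $b_1=\bar b$, which the paper's $\mathcal{L}$ formally includes but its proof never mentions, only produces constraints that hold automatically.
\end{itemize}
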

Lemma \ref{lemma: general H0 transformation} shows that the transformation of the null hypothesis $H_0$ to $H_0'$ in (\ref{eq: general H0'}) which includes a countable many moment inequalities is without loss of information.  The $h(b)$ function can be any function that satisfies $0 <h(b)P'(b) \leq M<\infty$, therefore, in practice, $h(b)$ can be case specific and one can pick $h(b)$ in a way such that $\nu(b_1,b_2, q)$ can be estimated and weakly converges to a Gaussian process. 

Specifically, let $\widehat{M}(b,q)$ and $\widehat{W}(b,q)$ denote the estimators for ${M}(b,q)$ and ${W}(b,q)$, respectively.  Denote the estimator for $\nu(b_1,b_2, q)$ as 
\begin{align}
&\widehat{\nu}(b_1,b_2, q)=\widehat{M}(b_2,q)\widehat{W}(b_1,q)-\widehat{M}(b_1,q)\widehat{W}(b_2,q). \label{eq: general nu estimator} 
\end{align}

\begin{assumption}\label{assu: general DGP}
Assume that we observe actions 
$\{B_{i,\ell}: i=1,\ldots, N,~\ell=1,\ldots, L \}$ 
that are i.i.d.\ random variables across $L$ games and the agents in each game. 
\end{assumption}
Assumption \ref{assu: general DGP} imposes conditions on the data that econometricians observe.  Let $S=NL$ denote the total number of observed actions.   

\begin{assumption}\label{assu: general gaussian limit}
Assume that \begin{align*}
\sqrt{S}(\widehat{\nu}(\cdot,\cdot,\cdot)-{\nu}(\cdot,\cdot,\cdot))\Rightarrow
\Phi_\kappa(\cdot,\cdot,\cdot)
\end{align*}
in which $\Rightarrow$ denotes weak convergence and  $\Phi_\kappa(\cdot,\cdot,\cdot)$ is a Gaussian process with covariance kernel being
$\kappa\big((b'_1,b'_2,q'),(b''_1,b''_2,q'')\big)$ for $(b'_1,b'_2,q'),(b''_1,b''_2,q'')\in \mathcal{L}$. In addition, $\kappa$ is not a zero function.  
\end{assumption}
Assumption \ref{assu: general gaussian limit} requires that there exists a consistent estimator for $\nu(\cdot)$ which weakly converges to a Gaussian process with proper rescaling. In addition, that $\kappa$ is not a zero function implies that $\widehat{\nu}(\cdot)$ is not degenerate. 
Define $\sigma^2_\nu(b_1,b_2,q)=\kappa\big((b_1,b_2,q),(b_1,b_2,q)\big)$ which is the asymptotic variance of
$\sqrt{S}(\widehat{\nu}(b_1,b_2,q)-{\nu}(b_1,b_2,q))$ and its estimator as $\widehat{\sigma}^2_
\nu(b_1,b_2,q)$

\begin{assumption} \label{assu: general consistent variance}
Assume that $\widehat{\sigma}^2_\nu(b_1,b_2,q)$ is uniformly consistent for $\sigma^2_\nu(b_1,b_2,q)$ in that $\sup_{(b_1,b_2,q)\in \mathcal{L}}|\widehat{\sigma}^2_\nu(b_1,b_2,q)-{\sigma}^2_\nu(b_1,b_2,q)|\stackrel{p}{\rightarrow}0$. In addition, ${\sigma}^2_\nu(\underline{b}, (\underline{b}+\overline{b})/2,2)>0$. 
\end{assumption}
Assumption \ref{assu: general consistent variance} requires that there exists a uniformly consistent estimator for the asymptotic variance.  We specifically assume that one of the moments is not degenerate. In practice, if the influence functions of the $\widehat{\nu}(\cdot)$ are known, then one can use the estimated influence functions to construct an estimator for the variance function.   Or one might use the bootstrap estimator to estimate the variance function. 

For some small $\epsilon>0$, define
\begin{align}
\widehat{\sigma}^2_{\nu,\epsilon}(b_1,b_2,q)=\max\{\widehat{\sigma}^2_
\nu(b_1,b_2,q),\epsilon\cdot
\widehat{\sigma}^2_\nu(\underline{b}, (\underline{b}+\overline{b})/2,2)\}.\label{eq: general estimated sigma epsilon}
\end{align}
For a weighted function $Q(b_1,b_2,q)>0$ and $\sum_{(b_1,b_2,q)\in\mathcal{L}}Q(b_1,b_2,q) <\infty $, define our test statistic as
\begin{align}
\widehat{T}_{S}=\sum_{(b_1,b_2,q)\in\mathcal{L}} \max\Big\{ \sqrt{S} \frac{\widehat{\nu}(b_1,b_2,q)}
{\widehat{\sigma}_{\nu,\epsilon}(b_1,b_2,q)},0\Big\}^2 Q(b_1,b_2,q). \label{eq: general test statistics}
\end{align}

We propose a nonparametric bootstrap combining with the generalized moment selection (GMS) method introduced by \cite{AndrewsSoares2010} and \cite{AndrewsShi2013} to construct the critical values for our test.\footnote{The GMS method is similar to the
recentering method in \cite{Hansen2005} and \cite{DonaldHsu2016}, as
well as the contact set approach in \cite{LSW2010}.} For the nonparametric bootstrap, we bootstrap games. That is, let $\{(B^*_{1\ell},\ldots, B^*_{N\ell})
:\ell  \leq L \}$ be a
bootstrap sample in which $B^*_{i,\ell}=B_{i\ell^*}$ and $\{
\ell^*: \ell \leq L\}$ is an i.i.d.\ bootstrap sample
drawn from the empirical distribution of $\{\ell : \ell\leq L\}$.
Define the bootstrap estimator for $\nu(b_1,b_2, q)$ as 
\begin{align}
&\widehat{\nu}^*(b_1,b_2, q)
=\widehat{M}^*(b_2,q)\widehat{W}^*(b_1,q)-\widehat{M}^*(b_1,q)\widehat{W}^*(b_2,q), \label{eq: general bootstrap nu estimator},
\end{align}
where $\widehat{M}^*(b,q)$ and $\widehat{W}^*(b,q)$
are bootstrap estimators for ${M}(b,q)$ and ${W}(b,q)$, respectively, based on the bootstrap sample. 
The bootstrapped process is defined as $\Phi^*(\cdot,\cdot,\cdot)=\sqrt{S}(\widehat{\nu}^*(\cdot,\cdot,\cdot)-\widehat{\nu}(\cdot,\cdot,\cdot)).$

\begin{assumption}\label{assu: general bootstrap limit}
Assume that $\Phi^*(\cdot,\cdot,\cdot)\Rightarrow \Phi_\kappa(\cdot,\cdot,\cdot)$ conditional on the sample path with probability approaching 1 where $\Phi_\kappa(\cdot,\cdot,\cdot)$ is as in Assumption \ref{assu: general gaussian limit} and we denote it as $\Phi^*(\cdot,\cdot,\cdot)\stackrel{p}{\Rightarrow} \Phi_\kappa(\cdot,\cdot,\cdot)$.  
\end{assumption}
Assumption \ref{assu: general bootstrap limit} requires that we have a bootstrap method that approximates the limiting process $\Phi_\kappa(\cdot,\cdot,\cdot)$ well.

As most papers in the moment inequality literature, we use
the GMS method to construct the critical value.
By doing this, one can construct a more powerful test without resorting
to the least favorable configuration. Let $\{\beta_{\ell},~\ell=1,2,\ldots\}$ and $\{\kappa_{\ell},~\ell=1,2,\ldots\}$ be sequences of positive numbers.
Define the GMS
function $\psi_S(b_1,b_2,q)$ as
\begin{align}
\psi_S(b_1,b_2,q)=-\beta_S \cdot 1\Big(\frac{\sqrt{S}\widehat{\nu}(b_1,b_2,q)}{\widehat{\sigma}_{\nu,\epsilon}(b_1,b_2,q)}<-\kappa_S\Big). \label{eq: general GMS-function}
\end{align}
We impose conditions on $\beta_S$ and $\kappa_S$.
\begin{assumption}\label{assu: general GMS}
Assume that (i) $\lim_{S\rightarrow\infty}\kappa_S=\infty$ and
$\lim_{S\rightarrow\infty}\kappa_S/\sqrt{S}=0$ and (ii)
$\beta_S$ is non-decreasing, $\lim_{S\rightarrow\infty}\beta_S=\infty$ and
$\lim_{S\rightarrow\infty}\beta_S/\kappa_S=0$.
\end{assumption}
For a significance level $\alpha<1/2$, define the bootstrapped critical value
$\hat{c}_\eta$ as
\begin{align*}
\hat{c}_\eta=\sup\Big\{c\big|P^*\Big(\sum_{(b_1,b_2,q)\in\mathcal{L}} \max\Big\{
\frac{{\Phi}^*(b_1,b_2,q)}
{\widehat{\sigma}_{\nu,\epsilon}(b_1,b_2,q)}+\psi_S(b_1,b_2,q),0\Big\}^2 Q(b_1,b_2,q)\leq c\Big)\leq 1-\alpha+\eta\Big\}+\eta,
\end{align*}
where $\eta>0$ is an arbitrarily small positive number. Note that $\hat{c}_\eta$ is the $(1-\alpha+\eta)$-th
quantile of the simulated null distribution plus $\eta$ which is
called an infinitesimal uniformity factor in \cite{AndrewsShi2013}. 
Following \cite{HLS2019}, we set $\beta_{S}=0.85\cdot\ln(S)/\ln\ln(S)$, $\kappa_{S}=0.15\cdot\ln(S)$ and
$\eta=10^{-6}$. 

The decision rule is: ``Reject $H_0$ ($H_0'$) when $\widehat{T}_{S}>\hat{c}_\eta$."

The following theorem shows the asymptotic size control and the power against fixed alternatives of our test. 
\begin{thm} Suppose Assumptions \ref{assu: general DGP}, \ref{assu: general gaussian limit}, \ref{assu: general consistent variance}, \ref{assu: general bootstrap limit}, \ref{assu: general GMS} hold. Then,\\
(a) under $H_0$,
${\lim}_{S\rightarrow\infty}
P(\widehat{T}_{S}>\hat{c}_\eta)\leq \alpha$;\\
(b) under $H_1$,
${\lim}_{S\rightarrow\infty}
P(\widehat{T}_{S}>\hat{c}_\eta)= 1.$ \label{thm: general test size and power}
\end{thm}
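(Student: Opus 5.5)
The argument follows the standard template for Cramér–von Mises tests of countably many moment inequalities with GMS critical values (\citealp{AndrewsShi2013}; \citealp{HLS2019}; \citealp{HsuShen2021}). By Lemma \ref{lemma: general H0 transformation} it suffices to work with $H_0'$. Write $\sigma_{\nu,\epsilon}^2(b_1,b_2,q)=\max\{\sigma^2_\nu(b_1,b_2,q),\epsilon\sigma^2_\nu(\underline b,(\underline b+\overline b)/2,2)\}$. By Assumption \ref{assu: general consistent variance}, $\sigma_{\nu,\epsilon}$ is bounded below by a positive constant, and $\widehat\sigma_{\nu,\epsilon}/\sigma_{\nu,\epsilon}\to 1$ uniformly over $\mathcal L$ in probability.

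Because $Q$ is summable, the map $\phi\mapsto\sum_{\mathcal L}\max\{\phi,0\}^2Q$ is continuous on bounded functions in the sup norm, and tail sums can be made uniformly small. This reduces everything to finitely many $(b_1,b_2,q)$ plus an $o_p(1)$ remainder. The tail control uses tightness of $\sqrt S(\widehat\nu-\nu)$ from Assumption \ref{assu: general gaussian limit}, together with the boundedness of $\nu$, which follows from $0<hP'\le M$.

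\textbf{Size.} Under $H_0'$ we have $\nu\le0$ everywhere. Decompose
\[
\sqrt S\,\widehat\nu/\widehat\sigma_{\nu,\epsilon}=\sqrt S(\widehat\nu-\nu)/\widehat\sigma_{\nu,\epsilon}+\sqrt S\,\nu/\widehat\sigma_{\nu,\epsilon}.
\]
The second term is $\le0$, so $\widehat T_S$ is stochastically bounded by the statistic with the drift $\sqrt S\nu/\sigma_{\nu,\epsilon}$ replaced by any larger nonpositive function.

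Now use the GMS function. For points with $\sqrt S\nu/\sigma_{\nu,\epsilon}<-2\kappa_S$ (roughly), the indicator in $\psi_S$ equals one with probability approaching one. At those points $\psi_S=-\beta_S$, and since $\beta_S\le\kappa_S$ eventually, $-\beta_S\ge\sqrt S\nu/\widehat\sigma_{\nu,\epsilon}$ on that set. At the remaining points $\psi_S\in\{0,-\beta_S\}$. Points where $\nu$ is only mildly negative but $\psi_S=-\beta_S$ is selected have probability tending to zero in a finite truncation, because $\kappa_S\to\infty$ and $\sqrt S(\widehat\nu-\nu)=O_p(1)$.

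Hence, asymptotically, $\psi_S\ge\sqrt S\nu/\sigma_{\nu,\epsilon}$ pointwise on the finite truncation, up to events of vanishing probability. It follows that the limit distribution of $\widehat T_S$ is stochastically dominated by the conditional law of the bootstrap statistic $\sum\max\{\Phi^*/\widehat\sigma_{\nu,\epsilon}+\psi_S,0\}^2Q$. That law converges to the law of the limit with the Gaussian $\Phi_\kappa$, by Assumption \ref{assu: general bootstrap limit} and the continuous mapping theorem.

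The infinitesimal factor $\eta$ handles possible discontinuity of the limiting CDF at the quantile, in particular when it has an atom at zero because all moments are slack. Choosing the $(1-\alpha+\eta)$-quantile plus $\eta$ ensures $\lim P(\widehat T_S>\hat c_\eta)\le\alpha$, exactly as in Theorem 1 of \citet{AndrewsShi2013}.

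\textbf{Power.} Under $H_1$, some $(b_1^0,b_2^0,q^0)\in\mathcal L$ has $\nu>0$; this uses the converse direction of Lemma \ref{lemma: general H0 transformation}. Then
\[
\widehat T_S\ge Q(b_1^0,b_2^0,q^0)\,S\,\widehat\nu^2/\widehat\sigma^2_{\nu,\epsilon}\to\infty
\]
at rate $S$. Meanwhile $\psi_S\le0$ makes the bootstrap statistic bounded by $\sum\max\{\Phi^*/\widehat\sigma_{\nu,\epsilon},0\}^2Q=O_p(1)$, so $\hat c_\eta=O_p(1)$. Therefore the rejection probability tends to one.

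\textbf{Main obstacle.} The delicate step is the size argument: showing that the GMS-shifted bootstrap statistic dominates the test statistic uniformly across the countable index set $\mathcal L$ rather than on a finite truncation. I would first truncate to a finite $\mathcal L_K$ with $\sum_{\mathcal L\setminus\mathcal L_K}Q<\delta$. The tail contribution is bounded by $\delta\cdot\sup|\Phi|^2/\inf\sigma^2_{\nu,\epsilon}=O_p(\delta)$, for the statistic and for the bootstrap alike. I would then let $\delta\to0$ after $S\to\infty$, absorbing the error into $\eta$.
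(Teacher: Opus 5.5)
Your proposal is correct, but for size it takes a different route from the paper.

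\textbf{How the two arguments differ.} The paper uses the contact-set approach of \cite{DonaldHsu2016}. It identifies the exact null limit $\widehat T_S\stackrel{D}{\rightarrow}\sum_{\mathcal L^o}\max\{\Phi_{\kappa_\epsilon},0\}^2Q$, where $\mathcal L^o=\{\nu=0\}$. It then shows that the GMS bootstrap statistic has the same conditional limit: over $\mathcal L^o\cup\mathcal L^+$, which equals $\mathcal L^o$ under $H_0$. Size follows from $\hat c_\eta\stackrel{p}{\rightarrow}c(1-\alpha+\eta)+\eta$. You instead use the Andrews--Shi domination argument, namely $\psi_S\geq\sqrt S\nu/\widehat\sigma_{\nu,\epsilon}$ w.p.a.1.

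\textbf{What each route buys.} The paper's route gives an explicit limit. That yields asymptotic size close to $\alpha$ when $\mathcal L^o$ is nondegenerate. It also gives a transparent account of the all-slack case: $\mathcal L^o=\emptyset$ and the rejection probability tends to $0$, as in the $k=0.5$ simulations. Your route yields only an inequality, but it needs less and is the version that extends to uniform size. Your decomposition keeps $\widehat\sigma_{\nu,\epsilon}$ in the drift and uses only the drift's sign. It therefore sidesteps the paper's step $\sqrt S(\widehat\nu/\widehat\sigma_{\nu,\epsilon}-\nu/\sigma_{\nu,\epsilon})\Rightarrow\Phi_{\kappa_\epsilon}$, which needs more than uniform consistency of $\widehat\sigma_{\nu,\epsilon}$ at points where $\nu\neq0$. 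Your power argument, bounding $\hat c_\eta$ via $\psi_S\le 0$, is also simpler than the paper's claim about the bootstrap limit over $\mathcal L^o\cup\mathcal L^+$.

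\textbf{Two refinements.}

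First, the domination step needs no truncation. On $\{\psi_S=-\beta_S\}$ you have $\sqrt S\nu/\widehat\sigma_{\nu,\epsilon}<-\kappa_S+\sup_{\mathcal L}|\sqrt S(\widehat\nu-\nu)/\widehat\sigma_{\nu,\epsilon}|<-\beta_S$ uniformly w.p.a.1, because $\beta_S/\kappa_S\to 0$. Truncation is needed only for the distributional limit.

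Second, $\psi_S$ is built from the same data as $\sqrt S(\widehat\nu-\nu)$. So the claim of stochastic dominance by the conditional bootstrap law is cleanest after inserting the deterministic shift $d_S=-\beta_S 1(\sqrt S\nu/\sigma_{\nu,\epsilon}<-\kappa_S/2)$. W.p.a.1 this satisfies $\sqrt S\nu/\widehat\sigma_{\nu,\epsilon}\le d_S\le\psi_S$ uniformly over $\mathcal L$. The upper bound on $\widehat T_S$ and the lower bound on the bootstrap statistic then become the same functional, with a common nonrandom shift, applied to processes with a common Gaussian limit. Under a fixed DGP on a finite truncation, $d_S=-\beta_S 1(\nu<0)$ eventually, and there your argument recovers the paper's contact-set limit.
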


Theorem \ref{thm: general test size and power} shows that our test has asymptotic size control under the null hypothesis and is consistent against fixed alternatives.  

\begin{remark}\label{remark: hetero number}
In practice, the data may include games with varying numbers of agents. In the Appendix, we discuss how to conduct tests that allow for such heterogeneity. The key issue is that the quasi-inverse equilibrium strategy, the $\xi$ function, implicitly depends on the number of agents in a game. If this strategy varies with the number of agents, then the corresponding moment conditions must be treated separately for each group. Accordingly, when games involve different numbers of agents, we implement a joint test across these groups. For further details, see Appendix \ref{sec: hetero numbers}.
\end{remark}

\section{Examples} \label{sec: examples}
In this section, we consider four examples of Bayesian games.  The first example is the first-price sealed-bid auction model within the symmetric independent private value paradigm, the second one is a Tullock contest model, the third one is a public good provision model, and the last one is the Cournot competition model with incomplete information.

\subsection{First-price Sealed-bid Auction Model}\label{subsec: auction}
As the first example, we consider a first-price sealed-bid  auction model for homogeneous goods with a fixed number of bidders. 
The bids submitted by the bidders are the observed actions in each auction.  The econometricians observe $N$ bids $(B_{1\ell},\ldots, B_{N\ell})$ in the $\ell$-th auction for $\ell=1,\ldots, L$, but not bidders' valuations be $\{V_{i,\ell}:~i=1,\ldots, N,~\ell=1,\ldots, L \}$.
Thus, the allocation function  of the auctioned product for player $i$ is determined by the indicator function $D(B_{i}, B_{-i})$ defined as
\begin{align*}
D(B_i, B_{-i}) =
\left\{
\begin{array}{ll}
1, & \text{if } B_i > B_j \ \ \forall j \neq i, \\
0, & \text{otherwise.}
\end{array}\right.
\end{align*}
Also, the payment function of the auctioned product for player $i$ is determined by the cost function  defined as $C(B_{i}, B_{-i})=B_i\cdot D(B_{i}, B_{-i})$.
Integrating over the distribution of $B_{-i}$ gives the monotone equilibrium expected allocation for bidder $i$, i.e., $E_{B_{-i}}[D(B_i, B_{-i})]$. 
Let $V_i$ be the value for agent $i$.  Then for agent $i$, his/her expected equilibrium payoff $\pi(B_i, V_i)$ is
\begin{equation*}
\pi(B_i, V_i) = V_i\cdot E_{B_{-i}}[D(B_i, B_{-i})]  - B_i\cdot E_{B_{-i}}[D(B_i, B_{-i})]
=V_i P(B_i)-T(B_i).
\end{equation*}
  
The following assumption is summarized from \cite{GPV2000} on the distribution of the valuations, which gives rise to the symmetric independent private value paradigm. 

\begin{assumption}\label{assu: auction DGP}
Assume that (i) the unobserved valuations
$\{V_{i,\ell}: i=1,\ldots, N,~\ell=1,\ldots, L \}$ are i.i.d.\ with probability density function (PDF), $f(\cdot)$, and cumulative distribution function (CDF), $F(\cdot)$;  (ii)
$f(\cdot)$ is strictly positive and bounded away from zero on a convex and compact support, 
$[\underline{v}, \overline{v}] \subseteq R$ , and is twice continuously differentiable on $(\underline{v}, \overline{v})$.
\end{assumption}
Assume that the observed bids are generated from the valuations satisfying Assumption \ref{assu: auction DGP} and by a BNE bidding strategy, $s(V_{i,\ell})$.  It then follows that
\begin{align}
B_{i,\ell}= s(V_{i,\ell})\equiv V_{i,\ell}- \frac{1}{F(V_{i,\ell})^{N-1}} \int_{\underline{v}}^{V_{i,\ell}}F(u)^{N-1} du,
\label{eq: BNE strategy}
\end{align}
and the observed bids
$\{B_{i,\ell}: i=1,\ldots, N,~\ell=1,\ldots, L \}$ are i.i.d.\ with PDF $g(\cdot)$ and CDF $G(\cdot)$ with a convex and compact support, $[\underline{b},\overline{b}]\subseteq R$.\footnote{Note that $g(\cdot)$ and $G(\cdot)$ depend on $N$, the number of bidders in the auction, because the equilibrium bidding strategy in (\ref{eq: BNE strategy}) is a function of $N$.}
The BNE requires that the bidding strategy $s(\cdot)$ is strictly increasing.\footnote{Moreover, \cite{GPV2000} show that $s(\cdot)$ is at least three times continuously differentiable on $(\underline{v}, \overline{v})$.}  

It can be shown that $P(b)=G(b)^{N-1}$ and $T(b)=bG(b)^{N-1}$. As a result, $\xi(\cdot)$ in (\ref{eq:implicit}) becomes the quasi-inverse bidding strategy in \cite{GPV2000}:
\begin{align}
\xi(b) = b+\frac{1}{N-1}\frac{G(b)}{g(b)}, \label{eq: gpv}
\end{align}
where $G(\cdot)$ and $g(\cdot)$ are the CDF and PDF of bids, respectively.
\cite{GPV2000} show that the observed bid distribution $G(\cdot)$ is an equilibrium bid distribution if and only if the quasi-inverse strategy, $\xi(b)$, in (\ref{eq: gpv}) 
is strictly increasing, which is implied by Theorem \ref{thm: LZ2018}.   

Therefore, we define the null hypothesis as
\begin{align}
&H_{0,au}: \xi(b)= b+ \frac{1}{N-1}\frac{G(b)}{g(b)}~\text{is weakly increasing in $b$}. \label{eq: auction H0}
\end{align}

We apply Lemma \ref{lemma: general H0 transformation} to transform the null hypothesis in (\ref{eq: auction H0}) by setting $h(b)=G(b)^{2-N}(N-1)^{-1}$, so we have 
\begin{align*}
&M(b,q)=\int_{b}^{b+\frac{a}{q}} h(\tilde{b}) T'(\tilde{b}) d\tilde{b}=
\int_{b}^{b+\frac{a}{q}} \big(\tilde{b} g(\tilde{b})+(N-1)^{-1}G(\tilde{b})\big) d\tilde{b},\\
&W(b,q)=\int_{b}^{b+\frac{a}{q}} h(\tilde{b}) P'(\tilde{b}) d\tilde{b}=\int_{b}^{b+\frac{a}{q}} g(\tilde{b}) d\tilde{b}.
\end{align*}
Note that $M(b,q)$ and $W(b,q)$ are both identified and can be expressed as unconditional means.  We summarize the results in the following lemma. 

\begin{lemma}\label{lemma: auction H0 transformation} 
The $H_{0,au}$ in (\ref{eq: auction H0}) is equivalent to
\begin{align}
&H_{0,au}': \nu(b_1,b_2, q)=M(b_2,q)W(b_1,q)-M(b_1,q)W(b_2,q)\leq 0~\text{for all $(b_1,b_2, q)\in\mathcal{L}$} \label{eq: auction H0'},
\end{align}
where 
\begin{align}
&M(b,q)= E\Big[B_{i,\ell} 1\Big( b \leq  B_{i,\ell}\leq b+\frac{a}{q} \Big)\Big]\notag\\
&~~~~~~~~~~~~~~~~+
\frac{1}{N-1} E\Big[ 1\Big(B_{i,\ell}\leq b+\frac{a}{q} \Big) \Big(b+\frac{a}{q} - B_{i,\ell}\Big)
 - 1(B_{i,\ell}\leq b ) (b - B_{i,\ell}) \Big], \label{eq: auction M}\\
&W(b,q)=E\Big[1\Big( b \leq  B_{i,\ell}\leq b+\frac{a}{q} \Big) \Big]\label{eq: auction W}.
\end{align}
\end{lemma}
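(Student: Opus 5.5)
The plan is to reduce the statement to Lemma \ref{lemma: general H0 transformation}, and then show that the specific $M$ and $W$ obtained with $h(b)=G(b)^{2-N}(N-1)^{-1}$ equal the unconditional expectations in (\ref{eq: auction M}) and (\ref{eq: auction W}).

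\textbf{Step 1: check the hypotheses of Lemma \ref{lemma: general H0 transformation}.} In the auction model $P(b)=G(b)^{N-1}$, so $P'(b)=(N-1)G(b)^{N-2}g(b)$. Hence
\[
h(b)P'(b)=g(b).
\]
This is strictly positive on $(\underline b,\overline b)$. It is also bounded above by some $M<\infty$, because $g$ is continuous on the compact bid support under Assumption \ref{assu: auction DGP} and the smoothness of $s(\cdot)$ established in \cite{GPV2000}.

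\textbf{Step 2: compute $h(b)T'(b)$.} Since $T(b)=bG(b)^{N-1}$, we have $T'(b)=G(b)^{N-1}+(N-1)bG(b)^{N-2}g(b)$. Therefore
\[
h(b)T'(b)=bg(b)+(N-1)^{-1}G(b).
\]
The ratio is $T'/P'=b+G/((N-1)g)$, which is exactly $\xi$ in (\ref{eq: gpv}). So $H_{0,au}$ is the null in (\ref{eq: general H0}) for this model, and Lemma \ref{lemma: general H0 transformation} gives the equivalence with the inequalities $\nu\le 0$ on $\mathcal L$, where $M$ and $W$ are the integrals displayed just before the statement.

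\textbf{Step 3: rewrite $W$ and the first part of $M$ as expectations.} These are immediate:
\[
W(b,q)=\int_b^{b+a/q}g=E\bigl[1(b\le B_{i\ell}\le b+a/q)\bigr],\qquad \int_b^{b+a/q}\tilde b\, g(\tilde b)\,d\tilde b=E\bigl[B_{i\ell}1(b\le B_{i\ell}\le b+a/q)\bigr].
\]
Both use that $B_{i\ell}$ has a continuous distribution, so the endpoints of the interval carry no mass.

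\textbf{Step 4: rewrite $\int_b^{b+a/q}G(\tilde b)\,d\tilde b$.} This is the only step needing care. For any $c$, Fubini gives
\[
\int_{\underline b}^{c}G(t)\,dt=\int_{\underline b}^{c}E\bigl[1(B\le t)\bigr]\,dt=E\bigl[(c-B)1(B\le c)\bigr].
\]
Taking the difference at $c=b+a/q$ and $c=b$ yields the bracketed term in (\ref{eq: auction M}).

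\textbf{Main obstacle: boundary behaviour.} When $b+a/q=\overline b$, or $b=\underline b$, the integrals run up to the endpoints of the support. There $h$ may blow up, since $G(\underline b)=0$ and $N>2$. However, $h$ only ever enters through the products $hP'=g$ and $hT'=bg+G/(N-1)$, both of which are bounded. So all integrals are finite, and the equivalence in Lemma \ref{lemma: general H0 transformation} carries over, with the boundedness condition holding on the open interval $(\underline b,\overline b)$.
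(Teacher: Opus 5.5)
Your proposal is correct and takes the same route as the paper: apply Lemma \ref{lemma: general H0 transformation} with $h(b)=G(b)^{2-N}(N-1)^{-1}$, so that $hP'=g$ and $hT'=bg+G/(N-1)$, and then turn the integrals into unconditional means via $\int_{\underline b}^{c}G(t)\,dt=E[(c-B_{i,\ell})1(B_{i,\ell}\le c)]$. Your explicit check that $0<h(b)P'(b)=g(b)\le M$ and your remark on $h$ blowing up near $\underline b$ are details the paper leaves implicit; they tighten the argument without changing it.
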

Based on (\ref{eq: auction M}) and (\ref{eq: auction W}),  we can apply the test in Section \ref{sec: proposed test}. In Appendix, we provide a uniform consistent estimator for the asymptotic variance for each moment and show the sufficient conditions hold for Theorem \ref{thm: general test size and power} in our case.

\begin{remark} \label{remark: procurement} {\bf (Procurement Auctions)} 
The example above has considered high-bid first-price auctions to be in line with the literature where the identification results starting \cite{GPV2000} have been established for high-bid auctions. For low-bid first-price auctions such as procurement auctions considered in our empirical application, we can similarly show that $P(b)=(1-G(b))^{N-1}$ and $T(b)=b(1-G(b))^{N-1}$.  Therefore, it is straightforward to see that
\begin{align*}
\xi(b)=b-\frac{1}{N-1}\frac{1-G(b)}{g(b)}.
\end{align*}
In this case, the $M(b,q)$ and $W(b,q)$ are expressed as
\begin{align}
&M(b,q)= E\Big[B_{i,\ell} 1\Big( b \leq  B_{i,\ell}\leq b+\frac{a}{q} \Big)\Big]-\frac{1}{N-1}\frac{a}{q}\notag\\
&~~~~~~~~~~~~~~~~+
\frac{1}{N-1} E\Big[ 1\Big(B_{i,\ell}\leq b+\frac{a}{q} \Big) \Big(b+\frac{a}{q} - B_{i,\ell}\Big)
 - 1(B_{i,\ell}\leq b ) (b - B_{i,\ell}) \Big], \label{eq: Pro auction M}\\
&W(b,q)=E\Big[1\Big( b \leq  B_{i,\ell}\leq b+\frac{a}{q} \Big) \Big]\label{eq: Pro auction W},
\end{align}
respectively. 
\end{remark}

\subsection{Tullock Contest Model}\label{subsec: contest model}
In the context of a Tullock contest such as in \cite{Tullock1980}, \cite{Ryvkin2010} and \cite{Ewerhart2014}, let the contestants' valuations be $\{ V_{i,l}: i=1,...,N \}$ which are unobservable to the econometrician. The efforts exerted by contestants are the observed actions.
To be specific,  $N$ contestants choose effort levels $B_i \ge 0$ to compete for a prize. The winning probability, conditional on rivals' effort levels $B_{-i}$, is given by the symmetric contest–success function
\begin{equation}
D(B_i, B_{-i}) = \frac{B_i}{\sum_{s=1}^{N} B_s}.
\label{eq:csf}
\end{equation}
Also, the contestant $i$ has  a linear effort cost $C(B_i,B_{-i}) = B_i$.
Integrating over the distribution of $B_{-i}$ gives the monotone equilibrium expected allocation $P(B_i)=E_{B_{-i}}[D(B_i, B_{-i})]$ and expected effort cost $T(B_i)=E_{B_{-i}}[C(B_i, B_{-i})]=B_i$. 
The expected payoff is
\begin{equation*}
\pi(B_i, V_i) = V_i \, E_{B_{-i}}[D(B_i, B_{-i})] - E_{B_{-i}}[C(B_i, B_{-i})]= V_iP(B_i)-T(B_i).
\end{equation*}



Let the contestants' valuations be $\{ V_{i,l}: i=1,...,N, \ell=1,..,L \}$ which are unobservable to the econometrician. We impose the following assumption. 
\begin{assumption}\label{assu: contest DGP}
Assume that (i) the unobserved valuations
$\{V_{i,\ell}: i=1,\ldots, N,~\ell=1,\ldots, L \}$ are i.i.d.\ with probability density function (PDF), $f(\cdot)$, and cumulative distribution function (CDF), $F(\cdot)$;  (ii)
$f(\cdot)$ is strictly positive and bounded away from zero on a convex and compact support, 
$[\underline{v}, \overline{v}] \subseteq R$ , and is twice continuously differentiable on $(\underline{v}, \overline{v})$.
\end{assumption} 
\cite{Ewerhart2020} show the existence and uniqueness of a monotone equilibrium. The FOC for an interior optimum is
\[
\frac{\partial \Pi(B_{i}; V_{i})}{\partial B_{i}}
= V_{i,\ell}\,P'(B_{i}) - 1 = 0
\quad\Rightarrow\quad
V_{i}=\frac{1}{P'(B_{i}) }.
\]

In this case, we have $P(b)=E_{B_{-i}}[D(b,B_{-i})]$ and $T(b)=b$. 
It is straightforward to see that 
$P'(b)=b^{-1}E_{B_{-i}}[D(b,B_{-i})(1-D(b,B_{-i}))]$ which is strictly positive. 
We then have
\begin{align*}
\xi(b)=\frac{T'(b)}{P'(b)}=
\frac{1}{b^{-1}E_{B_{-i,\ell}}[D(b,B_{-i})(1-D(b,B_{-i}))]}
\end{align*}
which is the same as in \cite{Ewerhart2020}. 
Therefore, we define the null hypothesis as
\begin{align}
&H_{0,con}: \xi(b)= \frac{1}{b^{-1}E_{B_{-i}}[D(b,B_{-i})(1-D(b,B_{-i}))]}~\text{is weakly increasing in $b$}. \label{eq: contest H0}
\end{align}

We apply Lemma \ref{lemma: general H0 transformation} to transform the null hypothesis in (\ref{eq: contest H0}) by setting $h(b)=b g(b)$ with $g(b)$ being the density function of the actions under BNE,
so we have 
\begin{align*}
&M(b,q)=\int_{b}^{b+\frac{a}{q}} h(\tilde{b}) T'(\tilde{b}) d\tilde{b}=
\int_{b}^{b+\frac{a}{q}} \tilde{b} g(\tilde{b}) d\tilde{b},\\
&W(b,q)=\int_{b}^{b+\frac{a}{q}} h(\tilde{b}) P'(\tilde{b}) d\tilde{b}=\int_{b}^{b+\frac{a}{q}}E_{B_{-i}}[D(\tilde{b},B_{-i})(1-D(\tilde{b},B_{-i}))] g(\tilde{b}) d\tilde{b}.
\end{align*}
Note that $M(b,q)$ and $W(b,q)$ are both identified and can be expressed as unconditional means.  We summarize the results in the following lemma. 

\begin{lemma}\label{lemma: contest H0 transformation} 
The $H_{0,con}$ in (\ref{eq: contest H0}) is equivalent to
\begin{align}
&H_{0,con}': \nu(b_1,b_2, q)=M(b_2,q)W(b_1,q)-M(b_1,q)W(b_2,q)\leq 0~\text{for all $(b_1,b_2, q)\in\mathcal{L}$} \label{eq: contest H0'},
\end{align}
where 
\begin{align}
&M(b,q)= E\Big[B_{i,\ell} \cdot 1\Big( b \leq  B_{i,\ell}\leq b+\frac{a}{q} \Big)\Big]\label{eq: contest M}\\
&W(b,q)=E\Big[1\Big( b \leq  B_{i,\ell}\leq b+\frac{a}{q} \Big) \frac{B_{i,\ell}}{\sum_{j=1}^N B_{j,\ell} }\Big(1-\frac{B_{i,\ell}}{\sum_{j=1}^N B_{j,\ell}}\Big)\Big]\label{eq: contest W}.
\end{align}
\end{lemma}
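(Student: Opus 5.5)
The plan is to apply Lemma \ref{lemma: general H0 transformation} with the choice $h(b)=b\,g(b)$ and then rewrite the resulting integrals $M(b,q)$ and $W(b,q)$ as unconditional expectations of functions of observed efforts. Everything therefore reduces to two checks: that this $h$ satisfies the hypothesis of Lemma \ref{lemma: general H0 transformation}, and that the integrals equal the expectations in (\ref{eq: contest M}) and (\ref{eq: contest W}).

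\emph{Verifying the hypothesis of Lemma \ref{lemma: general H0 transformation}.} Since $T(b)=b$, we have $T'(b)=1$. Differentiating $P(b)=E_{B_{-i}}[b/(b+R)]$, with $R=\sum_{j\neq i}B_j$, under the expectation gives
\[
P'(b)=E_{B_{-i}}\big[R/(b+R)^2\big]=b^{-1}E_{B_{-i}}\big[D(b,B_{-i})(1-D(b,B_{-i}))\big].
\]
This is the displayed formula in the text, and differentiation under the expectation is justified by dominated convergence because the integrand is bounded when $b\ge \underline b>0$ or $R$ is bounded away from $0$. Hence
\[
h(b)P'(b)=g(b)\,E_{B_{-i}}[D(1-D)].
\]
This quantity is strictly positive on the interior, since $g>0$ and $D\in(0,1)$ when efforts are positive. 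It is also bounded above by $\sup g/4<\infty$, because $D(1-D)\le 1/4$ and $g$ is continuous on the compact support. So Lemma \ref{lemma: general H0 transformation} applies with $M=\sup g/4$, and $H_{0,con}$ is equivalent to the collection of inequalities $\nu\le 0$ on $\mathcal L$, with $M(b,q)$ and $W(b,q)$ given by the integrals displayed before the lemma.

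\emph{Identifying $M(b,q)$.} We have
\[
M(b,q)=\int_b^{b+a/q}\tilde b\,g(\tilde b)\,d\tilde b=E\big[B_{i,\ell}1(b\le B_{i,\ell}\le b+a/q)\big],
\]
because $B_{i,\ell}$ has density $g$. The endpoints do not matter, since $B_{i,\ell}$ is continuously distributed.

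\emph{Identifying $W(b,q)$.} We have
\[
W(b,q)=\int_b^{b+a/q}E_{B_{-i}}\big[D(\tilde b,B_{-i})(1-D(\tilde b,B_{-i}))\big]g(\tilde b)\,d\tilde b .
\]
Here we use (C1) of Theorem \ref{thm: LZ2018} (equivalently Assumption \ref{assu: general DGP}): within a game, $B_{i,\ell}$ is independent of $B_{-i,\ell}$, and $B_{-i,\ell}$ has the same distribution as the rivals' actions used to define $P$. By independence and Fubini, the integral is the iterated expectation
\[
E\Big[1(b\le B_{i,\ell}\le b+a/q)\,D(B_{i,\ell},B_{-i,\ell})\big(1-D(B_{i,\ell},B_{-i,\ell})\big)\Big],
\]
and $D(B_{i,\ell},B_{-i,\ell})=B_{i,\ell}/\sum_j B_{j,\ell}$, which gives (\ref{eq: contest W}).

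The main obstacle is the independence/Fubini step for $W$. The inner expectation has to be taken over the true joint distribution of actions within a game, so it is essential that own and rival actions are independent and identically distributed; this is guaranteed under the null-maintained structure (C1). A secondary technicality is measurability and integrability near $b=0$ if $\underline b=0$, which is handled by the bound $D(1-D)\le 1/4$.
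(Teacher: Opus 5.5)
Your proposal is correct and follows the same route as the paper: apply Lemma \ref{lemma: general H0 transformation} with $h(b)=b\,g(b)$, identify $M(b,q)$ through the density of $B_{i,\ell}$, and identify $W(b,q)$ by the law of iterated expectations, using independence of own and rival actions within a game. You also explicitly verify the hypothesis $0<h(b)P'(b)\le M$, via $D(1-D)\le 1/4$ and boundedness of $g$, together with the formula for $P'(b)$; the paper leaves both implicit in the discussion preceding the lemma.
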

Based on  (\ref{eq: contest M}) and (\ref{eq: contest W}), we are able to apply the test in Section \ref{sec: proposed test}.

\subsection{Public Good Provision Model}
\label{subsec: public}
Suppose there are $N$ contributors, and contributor $i$ voluntarily contributes $B_i$ to provide a public good. 
Let $\Omega$ be the common benefit function of the total contributions.  We assume that $\Omega$ is known, and is strictly increasing and concave. Therefore, we assume $D(B_i, B_{-i})=\Omega ( \sum_{s=1}^N B_s).$ Also, if contributor $i$ decides to contribute $B_i$, the net payment will be $B_i$ for contributor $i$, so we have $C(B_i, B_{-i})=B_i$. Contributor $i$'s private value of the public good $V_i$ affects the marginal utility of consuming the public good. Then given the actions $(B_1,\ldots, B_N)$, the utility for contributor $i$ is $V_i D(B_i, B_{-i})- C(B_i, B_{-i}).$
Define $P(b)=E_{B_{-i}}[D(b, B_{-i})]$ and $T(b) =E_{B_{-i}}[C(b, B_{-i})]=B_i$.
Following \cite{BagRoy2011} who extend \cite{Varian1994}'s model with complete information to allow for incomplete information , the expected payoff for contributor $i$ is
\begin{equation*}
\pi(B_i, V_i) = V_i \, E_{B_{-i}}[D(B_i, B_{-i})] - B_i= V_i P(B_i)-T(B_i),
\end{equation*}
and we have 
\begin{align*}
\xi(b) = \frac{T'(b)}{P'(b)} =
\frac{1}{E_{B_{-i}}\big[ \Omega'\big( b+ \sum_{s\neq i} B_s \big) \big]}. 
\end{align*}
Therefore, we define the null hypothesis as
\begin{align}
&H_{0,pg}: \xi(b)= \frac{1}{E_{B_{-i}}\big[ \Omega'\big( b+ \sum_{s\neq i} B_s \big) \big]}~\text{is weakly increasing in $b$}. \label{eq: public H0}
\end{align}

We apply Lemma \ref{lemma: general H0 transformation} to transform the null hypothesis in (\ref{eq: public H0}) by setting $h(b)= g(b)$ with $g(b)$ being the density function of the actions under BNE,
so we have 
\begin{align*}
&M(b,q)=\int_{b}^{b+\frac{a}{q}} h(\tilde{b}) T'(\tilde{b}) d\tilde{b}=
\int_{b}^{b+\frac{a}{q}} g(\tilde{b}) d\tilde{b},\\
&W(b,q)=\int_{b}^{b+\frac{a}{q}} h(\tilde{b}) P'(\tilde{b}) d\tilde{b}=\int_{b}^{b+\frac{a}{q}}E_{B_{-i}}\big[ \Omega'\big( \tilde{b}+ \sum_{s\neq i} B_s \big) \big] g(\tilde{b}) d\tilde{b}.
\end{align*}
Note that $M(b,q)$ and $W(b,q)$ are both identified and can be expressed as unconditional means.  We summarize the results in the following lemma. 

\begin{lemma}\label{lemma: public H0 transformation} 
The $H_{0,pg}$ in (\ref{eq: public H0}) is equivalent to
\begin{align}
&H_{0,pg}': \nu(b_1,b_2, q)=M(b_2,q)W(b_1,q)-M(b_1,q)W(b_2,q)\leq 0~\text{for all $(b_1,b_2, q)\in\mathcal{L}$} \label{eq: public H0'},
\end{align}
where 
\begin{align}
&M(b,q)= E\Big[1\Big( b \leq  B_{i,\ell}\leq b+\frac{a}{q} \Big)\Big]\label{eq: public M}\\
&W(b,q)=E\Big[1\Big( b \leq  B_{i,\ell}\leq b+\frac{a}{q} \Big) \Omega'(\sum_{j=1}^N B_{j,\ell} )\Big]\label{eq: public W}.
\end{align}
\end{lemma}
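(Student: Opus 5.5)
The plan is to obtain Lemma \ref{lemma: public H0 transformation} as a special case of Lemma \ref{lemma: general H0 transformation}, with $h(b)=g(b)$. After that, the population integrals $M(b,q)$ and $W(b,q)$ only need to be rewritten as the unconditional expectations in (\ref{eq: public M}) and (\ref{eq: public W}).

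\textbf{Step 1: check the hypothesis of Lemma \ref{lemma: general H0 transformation}.} We need $0<h(b)P'(b)\le M<\infty$ on $(\underline b,\overline b)$.

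Since $P(b)=E_{B_{-i}}[\Omega(b+\sum_{s\neq i}B_s)]$ and $\Omega$ is strictly increasing, concave and differentiable, we may differentiate under the expectation. Dominated convergence applies because $\Omega'$ is monotone and the total $b+\sum_{s\ne i}B_s$ lies in the compact set $[N\underline b,N\overline b]$. This gives
\[
P'(b)=E_{B_{-i}}\big[\Omega'(b+\textstyle\sum_{s\neq i}B_s)\big].
\]
We then argue two bounds on $\Omega'$ over $[N\underline b,N\overline b]$:
\begin{itemize}
\item[] \emph{Positivity.} Assumption \ref{assu: SSM} gives $\Omega'>0$ there.
\item[] \emph{Boundedness.} Concavity makes $\Omega'$ nonincreasing, so $\Omega'$ is bounded above on the compact set, taking $\Omega'(N\underline b)$ finite. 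If $\underline b=0$ this needs $\Omega'(0)<\infty$, which we impose as a mild regularity condition.
\end{itemize}
Together with $0<g\le\bar g<\infty$ on the support, this yields $0<g(b)P'(b)\le M$. Lemma \ref{lemma: general H0 transformation} then gives the equivalence of $H_{0,pg}$ with $\nu\le 0$ on $\mathcal L$, where $M(b,q)=\int_b^{b+a/q}g$ and $W(b,q)=\int_b^{b+a/q}P'(\tilde b)g(\tilde b)\,d\tilde b$.

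\textbf{Step 2: identify the integrals as expectations.}
\begin{itemize}
\item[] For $M$: $\int_b^{b+a/q}g(\tilde b)\,d\tilde b=E[1(b\le B_{i,\ell}\le b+a/q)]$, since $B_{i,\ell}$ has density $g$. This is (\ref{eq: public M}).
\item[] For $W$: by (C1) of Theorem \ref{thm: LZ2018} and Assumption \ref{assu: general DGP}, $B_{i,\ell}$ is independent of $B_{-i,\ell}$, whose joint law is that of the rivals entering $P$. Fubini then gives
\[
\int_b^{b+a/q} E_{B_{-i}}\big[\Omega'(\tilde b+\textstyle\sum_{s\ne i}B_s)\big]\,g(\tilde b)\,d\tilde b
= E\Big[1(b\le B_{i,\ell}\le b+a/q)\,\Omega'\big(\textstyle\sum_{j=1}^N B_{j,\ell}\big)\Big].
\]
Integrability is immediate because the integrand is bounded by Step 1. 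This is (\ref{eq: public W}).
\end{itemize}

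\textbf{Main obstacle.} The only delicate point is Step 1. One must justify differentiating inside the expectation and confirm that $\Omega'$ is bounded at the lower end of the support, particularly at $0$ when $\underline b=0$. The fallback is to note that Assumption \ref{assu: PT} already asserts that $P$ is $C^2$. Under $H_0$ the lemma's equivalence only requires the strict positivity and finiteness of $hP'$ on compact subintervals on which the moments are formed, so boundedness there suffices. The rest is a direct substitution into Lemma \ref{lemma: general H0 transformation}.
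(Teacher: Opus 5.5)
Your proposal is correct and follows the paper's route. The paper proves this lemma the same way as the contest case: it specializes Lemma~\ref{lemma: general H0 transformation} with $h(b)=g(b)$, then rewrites $\int g$ and $\int P'g$ as unconditional means by the law of iterated expectations, using independence of $B_{i,\ell}$ from $B_{-i,\ell}$. Your Step~1 check that $0<g(b)P'(b)\le M$ is something the paper leaves implicit; it needs $\Omega'$ finite at the bottom of the support, and positivity there comes from Assumption~\ref{assu: SSM} in the form $P'>0$ rather than pointwise $\Omega'>0$.
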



\subsection{Cournot Competition Model with Private Information}
For the last example, we consider testing in Cournot competition models. 
In the spirit of \cite{Vives2002}, let each firm $i$ choose output $B_i$ and have a private cost–efficiency parameter $V_i$ and $1/V_i$ denotes the cost for each unit produced for agent $i$. 
Let $\mathcal{I}(\cdot)$ be the inverse demand function which is known.\footnote{\cite{AryalZincenko2024} extend \cite{Vives2002} to allow for stochastic demand and a common technology shock.}$^,$\footnote{In some cases,
the inverse demand function is unknown, but can be identified and estimated consistently from a separate sample.  In this case, if the number of observations in the separate sample is large compared to the sample size of the testing sample so that we can ignore the estimation effect of the demand function asymptotically, then we can treat the estimated inverse demand function as a known function when we conduct the test. } Also we assume that $\mathcal{I}(\cdot)$ is strictly decreasing and convex, so that the revenue $D(B_i,B_{-i})\equiv \mathcal{I}(B_i, B_{-i}) B_i$ is strictly concave conditional on $B_{-i}$. 
Let $C(B_i,B_{-i})=B_i$ denote the units produced by agent $i$, so $B_i/V_i$ is the total cost for agent $i$.  Therefore, the total profit for agent $i$ is $D(B_i,B_{-i})-C(B_i,B_{-i})/V_i$.   Integrating over the distribution of $B_{-i}$, the expected payoff can be rewritten as
\begin{equation*}
\ddot{\pi}(B_i, V_i) = E_{B_{-i}}[D(B_i, B_{-i})]- \frac{B_i}{V_i}
\equiv P(B_i)-\frac{T(B_i)}{V_i},
\end{equation*}
where $P(b)=E_{B_{-i}}[D(b,B_{-i})]$ and $T(b)=b$. 
Assume that the demand curvature is known to the researcher and small enough, e.g., $-2 E[\mathcal{I}'] > E[\mathcal{I}''] \, b$ for all relevant $b$, where $\mathcal{I}'$ and $\mathcal{I}''$ denote the first and second derivatives, respectively.
Note that since maximizing $\ddot{\pi}(b, V_i)$ over $b$ is equivalent to maximizing ${\pi}(b, V_i)\equiv V_i\cdot \ddot{\pi}(b, V_i) = V_iP(B_i)-T(B_i)$ over $b$, we can still have the same form as before. 
Therefore, we have 
\begin{align*}
\xi(b) = \frac{T'(b)}{P'(b)} =
\frac{1}{E_{B_{-i}}\big[ \mathcal{I}\big( b+ \sum_{s\neq i}^N B_s \big)+\mathcal{I}'\big( b+ \sum_{s\neq i}^N B_s \big) b \big]}. 
\end{align*}
Thus, we define the null hypothesis as
\begin{align}
&H_{0,cc}: \xi(b)= \frac{1}{E_{B_{-i}}\big[ \mathcal{I}\big( b+ \sum_{s\neq i}^N B_s \big)+\mathcal{I}'\big( b+ \sum_{s\neq i}^N B_s \big) b \big]}~\text{is weakly increasing in $b$}. \label{eq: Cournot H0}
\end{align}

We apply Lemma \ref{lemma: general H0 transformation} to transform the null hypothesis in (\ref{eq: Cournot H0}) by setting $h(b)= g(b)$ with $g(b)$ being the density function of the actions under BNE,
so we have 
\begin{align*}
&M(b,q)=\int_{b}^{b+\frac{a}{q}} h(\tilde{b}) T'(\tilde{b}) d\tilde{b}=
\int_{b}^{b+\frac{a}{q}} g(\tilde{b}) d\tilde{b},\\
&W(b,q)=\int_{b}^{b+\frac{a}{q}} h(\tilde{b}) P'(\tilde{b}) d\tilde{b}=\int_{b}^{b+\frac{a}{q}}E_{B_{-i}}\big[ \mathcal{I}\big( \tilde{b}+ \sum_{s\neq i} B_s \big)+\mathcal{I}'\big( \tilde{b}+ \sum_{s\neq i} B_s \big) b \big] g(\tilde{b}) d\tilde{b}.
\end{align*}
Note that $M(b,q)$ and $W(b,q)$ are both identified and can be expressed as unconditional means.  We summarize the results in the following lemma. 

\begin{lemma}\label{lemma: Cournot H0 transformation} 
The $H_{0,cc}$ in (\ref{eq: Cournot H0}) is equivalent to
\begin{align}
&H_{0,cc}': \nu(b_1,b_2, q)=M(b_2,q)W(b_1,q)-M(b_1,q)W(b_2,q)\leq 0~\text{for all $(b_1,b_2, q)\in\mathcal{L}$} \label{eq: Cournot H0'},
\end{align}
where 
\begin{align}
&M(b,q)= E\Big[1\Big( b \leq  B_{i,\ell}\leq b+\frac{a}{q} \Big)\Big]\label{eq: Cournot M}\\
&W(b,q)=E\Big[1\Big( b \leq  B_{i,\ell}\leq b+\frac{a}{q} \Big)\Big( \mathcal{I}\big(\sum_{j=1}^N B_{j,\ell} \big)+\mathcal{I}'\big(\sum_{j=1}^N B_{j,\ell} \big)B_{i,\ell}\Big)\Big]\label{eq: Cournot W}.
\end{align}
\end{lemma}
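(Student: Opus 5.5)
We read Lemma \ref{lemma: Cournot H0 transformation} as a direct specialization of Lemma \ref{lemma: general H0 transformation} with the choice $h(b)=g(b)$, where $g$ is the equilibrium action density. The plan has two steps. First, check that this $h$ is admissible, so that $H_{0,cc}$ is equivalent to $\nu\le 0$ on $\mathcal{L}$ with $M$ and $W$ given by the integral formulas displayed before the lemma. Second, rewrite those integrals as the unconditional expectations in (\ref{eq: Cournot M}) and (\ref{eq: Cournot W}).

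\textbf{Admissibility of $h$.} The general lemma needs $0<h(b)P'(b)\le M<\infty$. Here $g>0$ on $(\underline b,\overline b)$ by the standing assumption on $G$. Also $P'(b)=E_{B_{-i}}[\mathcal{I}(b+\sum_{s\neq i}B_s)+\mathcal{I}'(b+\sum_{s\neq i}B_s)\,b]>0$, which is Assumption \ref{assu: SSM} applied to $\pi=V_iP(b)-T(b)$. Since $T'(b)=1$, we have $P'(b)=1/\xi(b)$, which is finite and positive.

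For boundedness, $g$ is continuous on the compact support, since it is induced from the smooth $f$ through a smooth strictly increasing $s$ (Lemma \ref{lemma: SMBNE}). The function $P'$ is continuous on the compact action set, because $\mathcal{I}$ and $\mathcal{I}'$ are continuous and actions lie in $[\underline b,\overline b]$. Hence $gP'$ is bounded. With $T'\equiv1$, Lemma \ref{lemma: general H0 transformation} then delivers the equivalence with
\[
M(b,q)=\int_b^{b+a/q}g(\tilde b)\,d\tilde b,\qquad W(b,q)=\int_b^{b+a/q}P'(\tilde b)\,g(\tilde b)\,d\tilde b.
\]

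\textbf{Expectation representation.} For $M$, the integral equals $E[1(b\le B_{i,\ell}\le b+a/q)]$ because $G$ is continuous, so boundary events have probability zero.

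For $W$, write $P'(\tilde b)=E_{B_{-i}}[\phi(\tilde b,B_{-i})]$ with $\phi(\tilde b,B_{-i})=\mathcal{I}(\tilde b+\sum_{s\ne i}B_s)+\mathcal{I}'(\tilde b+\sum_{s\ne i}B_s)\tilde b$. By (C1) of Theorem \ref{thm: LZ2018}, actions within a game are i.i.d., so $B_{i,\ell}$ is independent of $B_{-i,\ell}$. Fubini then gives
\[
\int_b^{b+a/q}E_{B_{-i}}[\phi(\tilde b,B_{-i})]\,g(\tilde b)\,d\tilde b=E\big[1(b\le B_{i,\ell}\le b+a/q)\,\phi(B_{i,\ell},B_{-i,\ell})\big].
\]
Since $B_{i,\ell}+\sum_{s\ne i}B_{s,\ell}=\sum_{j=1}^NB_{j,\ell}$, this is exactly (\ref{eq: Cournot W}). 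Note that the $b$ multiplying $\mathcal{I}'$ in the displayed $W$ should be read as the integration variable $\tilde b$, which becomes $B_{i,\ell}$.

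\textbf{Main obstacle.} The substantive step is justifying the differentiation under the expectation that defines $P'$, together with the Fubini interchange. Both follow from dominated convergence, since $\mathcal{I}$ and $\mathcal{I}'$ are continuous on the compact set $[N\underline b,N\overline b]$ and hence bounded. This boundedness also supplies the constant $M$ required in the admissibility step.
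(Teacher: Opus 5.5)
Your proposal is correct and follows the paper's route: the paper proves this lemma by pointing to the contest case, i.e., it specializes Lemma \ref{lemma: general H0 transformation} with $h=g$ (here $T'\equiv 1$) and rewrites the resulting integrals as unconditional means by the law of iterated expectations, which is your Fubini step using independence of $B_{i,\ell}$ and $B_{-i,\ell}$. The paper leaves three things implicit that you supply: the admissibility check for $h$, the dominated-convergence justification, and the observation that the stray $b$ multiplying $\mathcal{I}'$ in the displayed $W$ should be $\tilde b$. On the first of these, boundedness of $g$ is most directly read off from the standing assumption that $G$ is continuously differentiable on the closed action interval, rather than from smoothness of $s$.
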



\section{Heterogeneous Games of Incomplete Information}\label{sec: Games of Incomplete Information}
In this section, we extend the test under homogeneous games of incomplete information in Section \ref{sec: incomplete information} to allow for heterogeneous games of incomplete information including observed game heterogeneity and/or unobserved game heterogeneity. For both cases, we consider both nonparametric testing and semiparametric testing via homogenization.  


\subsection{Observed Game Heterogeneity: Nonparametric Test} \label{sec: observed hetero nonparametric}
We extend our test in Section \ref{sec: incomplete information}  to control for the observed heterogeneity in the model.
Specifically, let $X$ with support $\mathcal{X}$ denote a vector of covariates that is observed for each game. For example,  $X$ represents the characteristics of a game, such as the observed quality of a good. 
Without loss of generality, we assume that $X$ is a scalar and $\mathcal{X}=[0,1]$.\footnote{It is straightforward to allow for $X$ being vectors of observed characteristics.}
We impose the following conditions on the distribution of types. 

\begin{assumption} \label{assu: value hetero}
Assume that (i) Conditioning on $X=x$, agent $i$, $i = 1, \ldots, N,$ draws his/her type independently from the other agents from a conditional CDF $F(\cdot|x)$ with the conditional PDF $f(\cdot|x)$. 
(ii)
$f(\cdot|x)$ is strictly positive and bounded away from zero on its support, a compact
interval $\mathcal{V}_x=[\underline{v}_x, \overline{v}_x] \subseteq R$ , and is twice continuously differentiable on $(\underline{v}_x, \overline{v}_x)$.
\end{assumption}



We define $P(b,x)$ and $T(b,x)$ to be the conditionally expected outcome allocations and conditionally expected payment, respectively, for agent $i$ when agent $i$ plays action $b$ that is integrated over the distribution of rivals' strategies $B_{-i}$ given $X=x$.  Specifically,
\begin{align}
&P(b,x)=E_{B_{-i}|X=x}[D_i(B_i,B_{-i})|B_i=b,X=x],\nonumber\\
&T(b,x)=E_{B_{-i}|X=x}[C_i(B_i,B_{-i})|B_i = b,X=x].
\label{eq: P and T covaraite}
\end{align}

Suppose all the conditions in Section \ref{sec: necessary and sufficient conditions} holds after we condition $X=x$ for all $x\in\mathcal{X},$ then it is true that we can formulate the null hypothesis as
\begin{align}\label{eq: general observed hetero H0}
H_{0,x}:~ \xi(b,x)=\frac{T'(b,x)}{P'(b,x)}~\text{is weakly increasing in $b$ for all $x\in\mathcal{X}$.}  
\end{align}
Adapting Lemma A.1 of \cite{HsuShen2021}, we have the following equivalence result.
\begin{lemma}\label{lemma: general observed hetero H0 transformation} 
Let $h(b,x)$ be a known function such that $0 <h(b,x)P'(b,x) \leq M<\infty$.  
Then $H_0$ in (\ref{eq: general observed hetero H0}) is equivalent to
\begin{align}
H_{0,x}': \nu(b_1,b_2, x, q)&=M(b_2,x,q)W(b_1,x,q)-M(b_1,x,q)W(b_2,x,q)\leq 0\notag \\
&~\text{for all $(b_1,b_2,x, q)\in\mathcal{L}_x$} \label{eq: general observed hetero H0'},
\end{align}
where 
\begin{align}
&M(b,x,q)= \int_{x}^{x+\frac{1}{q}}\int_{b}^{b+\frac{a}{q}} h(\tilde{b},\tilde{x})T'(\tilde{b},\tilde{x}) d\tilde{b}d\tilde{x} \notag\\
&
W(b,x,q)=\int_{x}^{x+\frac{1}{q}}\int_{b}^{b+\frac{a}{q}} h(\tilde{b},\tilde{x})P'(\tilde{b},\tilde{x}) d\tilde{b}d\tilde{x},~\text{and}\label{eq: general observed hetero M W}\\
&\mathcal{L}_x=\Big\{(b_1,b_2, x, q):
\Big(\frac{b_1-\underline{b}}{a}, \frac{b_2-\underline{b}}{a}, x\Big)
q\in(0,1,\ldots,q)^3, b_1>b_2~\text{for q=2,3,\ldots}  \Big\}.\label{eq: L cov}
\end{align}
\end{lemma}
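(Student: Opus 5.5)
The approach mirrors the proof of Lemma \ref{lemma: general H0 transformation}, now carried out jointly over the cells in $(b,x)$. Throughout we read $T'(b,x)$ and $P'(b,x)$ as derivatives in $b$. Write
\[
\xi(b,x)=\frac{h(b,x)T'(b,x)}{h(b,x)P'(b,x)} .
\]
Set $w(b,x)=h(b,x)P'(b,x)>0$, so that $h(b,x)T'(b,x)=\xi(b,x)w(b,x)$. Then $M(b,x,q)=\int\!\!\int_{C(b,x,q)}\xi w$ and $W(b,x,q)=\int\!\!\int_{C(b,x,q)} w$, where $C(b,x,q)=[b,b+a/q]\times[x,x+1/q]$. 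Since $W>0$, the moment inequality $\nu(b_1,b_2,x,q)\le 0$ is equivalent to
\[
\frac{M(b_1,x,q)}{W(b_1,x,q)}\ \ge\ \frac{M(b_2,x,q)}{W(b_2,x,q)} .
\]
This says the $w$-weighted average of $\xi$ over the cell $C(b_1,x,q)$ is at least the average over $C(b_2,x,q)$. The two cells share the same $x$-strip, and $b_1>b_2$.

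\textbf{Necessity ($H_{0,x}\Rightarrow H_{0,x}'$).} Shift the first cell by $\delta=b_1-b_2>0$ in the $b$-direction and write
\[
M(b_1,x,q)=\int\!\!\int_{C(b_2,x,q)}\xi(\tilde b+\delta,\tilde x)\,w(\tilde b+\delta,\tilde x)\,d\tilde b\,d\tilde x .
\]
Under $H_{0,x}$ we have $\xi(\tilde b+\delta,\tilde x)\ge \xi(\tilde b,\tilde x)$ pointwise. The weights $w(\cdot+\delta,\cdot)$ and $w$ differ, however, so a pointwise comparison is not enough and we need a separation argument. Let $c=\sup_{C(b_2,x,q)}\xi$, and suppose that $\xi(\tilde b,\tilde x)\le c\le \xi(\tilde b',\tilde x)$ for every $(\tilde b,\tilde x)\in C(b_2,x,q)$ and $(\tilde b',\tilde x)\in C(b_1,x,q)$. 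Then the first weighted average is at least $c$ and the second is at most $c$, which gives the inequality.

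This uniform separation holds automatically only when the $x$-section does not matter. That is the main obstacle: when $\xi$ varies in $x$, the separating level may differ across $\tilde x$ within the strip. My plan is to follow Lemma A.1 of \cite{HsuShen2021}. There the lemma is verified on cells whose $x$-width is of the same grid order, and equivalence of the full countable family is deduced as $q\to\infty$ via continuity of $\xi$ and $w$. I expect this step to require care, possibly using that the grid cells are nested as $q$ grows. Failing that, I would establish the implication in the limit form needed for the equivalence.

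\textbf{Sufficiency ($H_{0,x}'\Rightarrow H_{0,x}$).} Suppose $\xi(\cdot,x_0)$ fails to be weakly increasing. Then there are $b_2^0<b_1^0$ with $\xi(b_1^0,x_0)<\xi(b_2^0,x_0)$. Continuity of $\xi$ on $(\underline b,\overline b)\times\mathcal X$ comes from Assumption \ref{assu: PT} applied conditionally, together with $w>0$. It gives a neighbourhood on which $\xi(b,x)<\xi(b_2^0,x_0)-\eta$ near $(b_1^0,x_0)$ and $\xi(b,x)>\xi(b_2^0,x_0)-\eta/2$ near $(b_2^0,x_0)$, for some $\eta>0$. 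Because the grid points are dense as $q\to\infty$, we can choose $q$ large and grid points $b_1>b_2$ and $x$ with $x\le x_0\le x+1/q$ so that both cells lie inside these neighbourhoods. The $w$-weighted averages then satisfy $M/W(b_1)<M/W(b_2)$, that is $\nu(b_1,b_2,x,q)>0$, contradicting $H_{0,x}'$.

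Boundary issues, where a cell sticks out of $[\underline b,\overline b]$ or $[0,1]$, are handled as in Lemma \ref{lemma: general H0 transformation} by truncating the integrals. This preserves all of the signs used above.
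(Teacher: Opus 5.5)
\textbf{Verdict: the necessity direction has a genuine gap, and it cannot be closed, because that direction is false as stated.}

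\textbf{What is fine.} Your reformulation as $w$-weighted cell averages of $\xi$ (with $w=hP'$) is correct. Your sufficiency argument is also correct, provided you add joint continuity of $\xi$ in $(b,x)$ as an explicit assumption. Assumption~\ref{assu: PT} applied for each $x$ gives continuity in $b$ only, and without some regularity in $x$ a violation at a single $x_0$ is invisible to the integrated moments.

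\textbf{Where it fails.} You correctly see that monotonicity of each $\xi(\cdot,\tilde x)$ does not order the two averages, because the two cells put different $\tilde x$-profiles of weight on a common strip. Neither of your fallbacks helps. $H_{0,x}'$ must hold at every point of $\mathcal{L}_x$, including $q=2$, so an argument in the limit $q\to\infty$ proves nothing. More seriously, $H_{0,x}\Rightarrow H_{0,x}'$ fails at the stated generality. Take $[\underline b,\overline b]=[0,1]$, $P'\equiv 1$, $T'(b,x)=x+\epsilon b$ with $0\le\epsilon<8/(3\pi^3)$, and $h(b,x)=3-\cos(\pi b)\cos(2\pi x)\in[2,4]$. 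Then $\xi(b,x)=x+\epsilon b$ is increasing in $b$ for every $x$, and for $\epsilon>0$ Assumptions~\ref{assu: SSM} and~\ref{assu: SC} also hold. Using $\int_0^{1/2}\cos(2\pi\tilde x)\,d\tilde x=0$ and $\int_0^{1/2}\tilde x\cos(2\pi\tilde x)\,d\tilde x=-1/(2\pi^2)$, one gets
\[
W(0,0,2)=W(\tfrac12,0,2)=\tfrac34,\qquad M(0,0,2)-M(\tfrac12,0,2)=\pi^{-3}-\tfrac38\epsilon .
\]
Hence $\nu(\tfrac12,0,0,2)=\tfrac34\bigl(\pi^{-3}-\tfrac38\epsilon\bigr)>0$, even though $(\tfrac12,0,0,2)\in\mathcal{L}_x$. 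On the lower $b$-cell the weight tilts toward large $\tilde x$, where $\xi$ is large; on the upper cell it tilts toward small $\tilde x$.

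\textbf{Comparison with the paper.} The paper's proof only invokes Lemma A.1 of \cite{HsuShen2021} with $\lambda=T'/P'$ and weight $hP'$. It therefore does not bridge this step either, and deferring to that citation does not rescue your plan.

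\textbf{What would make necessity hold.} Extra structure is needed, for example a product-form weight $h(b,x)P'(b,x)=w_1(b)w_2(x)$. By Fubini, each ratio $M/W$ is then the $w_2$-average over the common strip of $w_1$-averages of $\xi(\cdot,\tilde x)$ over the two $b$-intervals. On the grid $b_1\ge b_2+a/q$, so every point of the lower interval lies below every point of the upper one. The one-dimensional comparison therefore holds for each $\tilde x$ and survives integration against the same $w_2$.

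Be aware that necessity is exactly the direction needed for size control of the covariate test. Also, in the paper's auction example the effective weight is the joint density $g(\tilde b|\tilde x)f_X(\tilde x)$ of $(B_{i,\ell},X_\ell)$. That weight is of product form only when the bid distribution does not depend on $X$, so this fix does not apply there.
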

Once we get Lemma \ref{lemma: general observed hetero H0 transformation}, the general test will be similar to that in Section \ref{sec: proposed test}. Therefore, we omit the details. 

One interesting thing to note is that (\ref{eq: general observed hetero H0}) actually implies the following:
\begin{align*}
&\nu(b_1,b_2, q)=M(b_2,q)W(b_1,q)-M(b_1,q)W(b_2,q)\leq 0~\text{for all $(b_1,b_2, q)\in\mathcal{L}$},
\end{align*}
where 
\begin{align*}
&M(b,q)= \int_0^1\int_{b}^{b+\frac{a}{q}} h(\tilde{b},\tilde{x})T'(\tilde{b},\tilde{x}) d\tilde{b}d\tilde{x}, 
~~~
W(b,q)=\int_0^1\int_{b}^{b+\frac{a}{q}} h(\tilde{b},\tilde{x})P'(\tilde{b},\tilde{x}) d\tilde{b}d\tilde{x},~\text{and}\\
&\mathcal{L}=\Big\{(b_1,b_2, q): \Big(\frac{b_1-\underline{b}}{a}, \frac{b_2-\underline{b}}{a}\Big)\cdot q\in(0,1,\ldots,q)^2, b_1>b_2~\text{for q=2,3,\ldots}  \Big\}.
\end{align*}
This is equivalent to the null hypothesis $H_0'$ in Lemma \ref{lemma: general H0 transformation} and this is true by integrating out the distribution of $X$ in Lemma \ref{lemma: general observed hetero H0 transformation} instead of restricting $X$ in an interval.   This will be the foundation for our test under unobserved heterogeneity. 

We conclude this subsection by summarizing the $M(b,x,q)$ and $W(b,x,q)$ in the forms of unconditional means in each example in Section \ref{sec: examples}. 
 
\begin{enumerate}
\item {\bf First-price Sealed-bid Auction Model.} For example, in the context of timber auctions, a natural auction-level covariate is tract volume, which scales the economic stakes of the sale and can mechanically affect bids even in the absence of changes in strategic incentives. 
\begin{align*}
&M(b,x,q)= E\Big[B_{i,\ell} 1\Big( b \leq  B_{i,\ell}\leq b+\frac{a}{q} \Big) 1(x\leq X_\ell \leq x+\frac{1}{q})\Big]\notag\\
&~~~~~~+
\frac{1}{N-1} E\Big[ \Big(1\Big(B_{i,\ell}\leq b+\frac{a}{q} \Big) \Big(b+\frac{a}{q} - B_{i,\ell}\Big)
 - 1(B_{i,\ell}\leq b ) (b - B_{i,\ell})\Big) 1(x\leq X_\ell \leq x+\frac{1}{q})\Big], \\
&W(b,x,q)=E\Big[1\Big( b \leq  B_{i,\ell}\leq b+\frac{a}{q} \Big) 1(x\leq X_\ell \leq x+\frac{1}{q})\Big].
\end{align*}

For procurement auctions as in Remark \ref{remark: procurement}, a natural auction-level covariate is engineer's estimate, as in the empirical application in this paper. We have 
\begin{align*}
&M(b,x,q)= E\Big[B_{i,\ell} 1\Big( b \leq  B_{i,\ell}\leq b+\frac{a}{q} \Big) 1(x\leq X_\ell \leq x+\frac{1}{q})\Big]- \frac{1}{N-1}\frac{a}{q}
E\Big[1(x\leq X_\ell \leq x+\frac{1}{q})\Big]\notag\\
&~~~~~~+
\frac{1}{N-1} E\Big[ \Big(1\Big(B_{i,\ell}\leq b+\frac{a}{q} \Big) \Big(b+\frac{a}{q} - B_{i,\ell}\Big)
 - 1(B_{i,\ell}\leq b ) (b - B_{i,\ell})\Big) 1(x\leq X_\ell \leq x+\frac{1}{q})\Big], \\
&W(b,x,q)=E\Big[1\Big( b \leq  B_{i,\ell}\leq b+\frac{a}{q} \Big) 1(x\leq X_\ell \leq x+\frac{1}{q})\Big].
\end{align*}

\item {\bf Tullock Contest Model.} In a Tullock-type contests, prize shifters are one of the key covariates because they scale the marginal benefit of effort. The optimal effort responds mechanically to the prize, i.e.,  higher stakes raise equilibrium effort and change the distribution of observed actions. 
\begin{align*}
&M(b,x,q)= E\Big[B_{i,\ell} \cdot 1\Big( b \leq  B_{i,\ell}\leq b+\frac{a}{q} \Big)1(x\leq X_\ell \leq x+\frac{1}{q})\Big],\\
&W(b,x,q)=E\Big[1\Big( b \leq  B_{i,\ell}\leq b+\frac{a}{q} \Big) \frac{B_{i,\ell}}{\sum_{j=1}^N B_{j,\ell} }\Big(1-\frac{B_{i,\ell}}{\sum_{j=1}^N B_{j,\ell}}\Big)1(x\leq X_\ell \leq x+\frac{1}{q})\Big].
\end{align*}

\item {\bf Public Good Provision Model.} 
Let $\Omega(\cdot,x)$ be a known common benefit function of the total contributions conditional on $X=x$. A salient covariate in models of public good provision is the endowment. In experimental settings, the endowment is a core game-level primitive because it shapes both the maximum amount that can be allocated to the public good and the monetary payoffs associated with any given contribution profile.
\begin{align*}
&M(b,x,q)= E\Big[1\Big( b \leq  B_{i,\ell}\leq b+\frac{a}{q} \Big)1(x\leq X_\ell \leq x+\frac{1}{q})\Big],\\
&W(b,x,q)=E\Big[1\Big( b \leq  B_{i,\ell}\leq b+\frac{a}{q} \Big) \Omega'\Big(\sum_{j=1}^N B_{j,\ell}, X_\ell\Big)1(x\leq X_\ell \leq x+\frac{1}{q})\Big].
\end{align*}

\item {\bf Cournot Competition Model.} Let $\mathcal{I}(\cdot,x)$ be the inverse demand function conditional on $X=x$ which is known. Market level covariates include demand shifters such as population, income, and seasonality. It is important to control demand shifters in demand estimation because these variables shift the level (and often the slope) of market demand, and therefore move equilibrium prices and quantities even when firms’ costs and conduct are unchanged. Ignoring these covariates results in lack of comparability across markets or time, and can lead the econometrician to misattribute demand-driven variation in outputs and prices to differences in marginal costs, productivity, or the degree of market power.  
\begin{align*}
M(b,x,q)=& E\Big[1\Big( b \leq  B_{i,\ell}\leq b+\frac{a}{q} \Big)1(x\leq X_\ell \leq x+\frac{1}{q})\Big],\\
W(b,x,q)=&E\Big[1\Big( b \leq  B_{i,\ell}\leq b+\frac{a}{q} \Big)\Big( \mathcal{I}\big(\sum_{j=1}^N B_{j,\ell},X_\ell \big)\Big)1(x\leq X_\ell \leq x+\frac{1}{q})\Big]\\
&+E\Big[1\Big( b \leq  B_{i,\ell}\leq b+\frac{a}{q} \Big)\Big( \mathcal{I}'\big(\sum_{j=1}^N B_{j,\ell},X_\ell \big)B_{i,\ell}\Big)1(x\leq X_\ell \leq x+\frac{1}{q})\Big].
\end{align*}

\end{enumerate}

\subsection{Observed Game Heterogeneity: Semiparametric Test} \label{sec: observed hetero semiparametric}
In Section \ref{sec: observed hetero nonparametric}, we consider a nonparametric test for games of incomplete information covariates.  While it is general, it may suffer from the curse of dimensionality as is usually the case with nonparametric estimation/testing, when the number of (continuous) covariates is large.  In this section, we follow \cite{HHS2003}, \cite{MMSX2021} and \cite{CHS2020} and impose a semiparametric model to homogenize the actions. Specifically, let $\Gamma(x)$ be a strictly positive function of covariates. The rest of the framework follows Section \ref{sec: observed hetero nonparametric}.

\begin{assumption} \label{assu: homogenize}
Assume that (i) $\{(V_{1,\ell}=\Gamma(X_{\ell})u_{1,\ell}, V_{2,\ell}=\Gamma(X_{\ell})u_{2,\ell},\ldots, V_{N,\ell}=\Gamma(X_{\ell})u_{N,\ell}, X_{\ell}): \ell=1,\ldots, L\}$ are i.i.d.\  random vectors;
(ii) $\{u_{i,\ell}: i=1,\ldots, N,~\ell=1,\ldots, L \}$ are i.i.d.\ random variables with PDF $f_u(\cdot)$ and CDF $F_u(\cdot)$ and 
are independent of  
$X_{\ell}$.
(iii)
$f_u(\cdot)$ is strictly positive and bounded away from zero on its support, a compact
interval $[\underline{u}, \overline{u}] \subseteq R$ , and is twice continuously differentiable on $(\underline{u}, \overline{u})$.
\end{assumption}


\begin{assumption}\label{assu: D C homogeneous}
Assume that $D_{i,\ell}(B_{1,\ell},\ldots, B_{N,\ell})$ is a homogeneous function
of order zero and $T_{i,\ell}(B_{1,\ell},\ldots, B_{N,\ell})$ is a homogeneous function
of order one in that for any $r>0$
\begin{align*}
&D_{i,\ell}(rB_{1,\ell},\ldots, rB_{N,\ell})=D_{i,\ell}(B_{1,\ell},\ldots, B_{N,\ell}),\\
&T_{i,\ell}(rB_{1,\ell},\ldots, rB_{N,\ell})= r T_{i,\ell}(B_{1,\ell},\ldots, B_{N,\ell}).
\end{align*}
\end{assumption}
Let $S_v(v|x)$ denote the BNE strategy.
Let $B^u_{i}$ denote the actions when agent $i$'s value is $u_{i}$.   
We claim that $S_v(v|x) =\Gamma(x)s_u(\Gamma^{-1}(x) v )$ for all $x\in\mathcal{X}$.  
That is, if Assumption \ref{assu: D C homogeneous} holds, we want to show that ``$b^u\in s(u)$" $\Leftrightarrow$ ``$b=\Gamma(x)b^u\in s_v(v|x)= s_v(\Gamma(x)u|x)$".

Let $P^u_{i,\ell}(b^u)$ and $T^u_{i,\ell}(b^u)$ be the expected allocation and transfer that agent $i$ will attain if agent $i$ plays $b^u$.  
To show ``$\Rightarrow$" direction, 
note that if $b^u\in s(u)$, we have
\begin{align*}
&~u_{i,\ell}P^u_{i,\ell}(b^u)-T^u_{i,\ell}(b^u) \geq 
u_{i,\ell}P^u_{i,\ell}(b^{u\prime})-T^u_{i,\ell}(b^{u\prime})~~\text{for all $b^{u\prime}$}\\
\Leftrightarrow &~ 
\Gamma(x)u_{i,\ell}P^u_{i,\ell}(b^u)-\Gamma(x)T^u_{i,\ell}(b^u) \geq 
\Gamma(x)u_{i,\ell}P^u_{i,\ell}(b^{u\prime})-\Gamma(x)T^u_{i,\ell}(b^{u\prime})~~\text{for all $b^{u\prime}$}\\
\Leftrightarrow  &~ 
V_{i,\ell}P^u_{i,\ell}(\Gamma(x)b^u)-T^u_{i,\ell}(\Gamma(x)b^u) \geq 
V_{i,\ell}P^u_{i,\ell}(\Gamma(x)b^{u\prime})-T^u_{i,\ell}(\Gamma(x)b^{u\prime})~~\text{for all $b^{u\prime}$}
\\
\Leftrightarrow &~ 
V_{i,\ell}P_{i,\ell}(b|x)-T_{i,\ell}(b|x) \geq 
V_{i,\ell}P_{i,\ell}(b'|x)-T_{i,\ell}(b'|x)
~~\text{for all $b^{\prime}$},
\end{align*}
where the first $\Leftrightarrow$ holds by rescaling everything by $\Gamma(x)$, the second one holds by the fact that $V_{i,\ell}=\Gamma(x)u_{i,\ell}$ and $P^u_{i,\ell}(\cdot)$ and $T^u_{i,\ell}(\cdot)$ are homogeneous functions of degree zero and one, respectively.  The last $\Leftrightarrow$ holds because for agent $\ell$, if other players have $\Gamma(x)b^u\in s_v(v|x)$ being their best response functions, then $P_{i,\ell}(b|x)=P^u_{i,\ell}(\Gamma(x)b^{u\prime})$ and $T_{i,\ell}(b|x)=T^u_{i,\ell}(\Gamma(x)b^{u\prime})$. It follows that $b=\Gamma(x)b^u$ is agent $i$'s best responseheter too. That is, $b=\Gamma(x)b^u\in s_v(v|x)= s_v(\Gamma(x)u|x)$.   The same arguments apply to the ``$\Leftarrow$" direction of the claim.   

This result implies that 
\begin{align}
&~\xi(b,x) \;\equiv\; \frac{T'(b,x)}{P'(b,x)} ~\text{is strictly increasing in $b$ for all $x\in\mathcal{X}$.} \notag\\
\Leftrightarrow
&~\xi(b^u) \;\equiv\; \frac{T^{u\prime }(b^u)}{P^{u\prime }(b^u)} ~\text{is strictly increasing in $b^u$.} \label{eq: semi equivalence}
\end{align}
(\ref{eq: semi equivalence}) further implies that 
we can re-formulate the null hypothesis  $H_{0,x}$ in (\ref{eq: general observed hetero H0}) as
\begin{align}\label{eq: null semi test}
H_{0,\text{semi}}:~\xi(b^u) \;\equiv\; \frac{T^{u\prime }(b^u)}{P^{u\prime }(b^u)} ~\text{is weakly increasing in $b^u$.} 
\end{align}
Therefore, then $H_{0,\text{semi}}$ in (\ref{eq: null semi test}) is equivalent to
\begin{align*}
&H_{0,\text{semi}}': \nu(b^u_1,b^u_2, q)=M(b^u_2,q)W(b^u_1,q)-M(b^u_1,q)W(b^u_2,q)\leq 0~\text{for all $(b^u_1,b^u_2, q)\in\mathcal{L}$},
\end{align*}
where 
\begin{align*}
&M(b^u,q)= \int_{b^u}^{b^u+\frac{a}{q}} h(\tilde{b})T'(\tilde{b}) d\tilde{b}, 
~~~
W(b^u,q)=\int_{b^u}^{b^u+\frac{a}{q}} h(\tilde{b})P'(\tilde{b}) d\tilde{b},~\text{and}\\
&\mathcal{L}=\Big\{(b^u_1,b^u_2, q): \Big(\frac{b^u_1-\underline{b}^u}{a}, \frac{b^u_2-\underline{b}^u}{a}\Big)\cdot q\in(0,1,\ldots,q)^2, b^u_1>b^u_2~\text{for q=2,3,\ldots}  \Big\}.
\end{align*}

We note that first, Assumption \ref{assu: D C homogeneous} is crucial for the homogenization method to work.  Therefore, among the four examples discussed in Section \ref{sec: examples}, the homogenization method can be applied to only the first-price sealed-bid auction model and Tullock contest model.  Second, the intuition of the equivalence result is that when the actions are monetary activities such as the auction games, then if we use different currencies, the resulting bids should be the same in the sense that the resulting bids should have the relation as above.

If $\Gamma(x)$ is known, then we obtain   
the rescaled actions, $B_{i,\ell}^u=\Gamma^{-1}(X_\ell)B_{i,\ell}$ and apply the test with the rescaled actions in Section \ref{sec: proposed test} to test $H_{0,\text{semi}}$ in (\ref{eq: null semi test}).  If $\Gamma(x)$ follows a parametric specification $\Gamma(x,\theta)$ so that $\theta$ is identified and a consistent estimator $\widehat{\theta}$ is available, then we can apply a similar test in Section \ref{sec: proposed test} after accounting for the estimation effect of $\widehat{\theta}$ in the test.  In the Appendix, we present a test in auction games and show its validity under regularity conditions.

\subsection{Unobserved Game Heterogeneity} \label{sec: unobserved hetero nonparametric}
We extend our tests in Section \ref{sec: observed hetero nonparametric}  to allow for unobserved  heterogeneity in the model to the nonparametric case.  For semiparametric case, please see Appendix \ref{sec: semi unobserved hetero} for details. 

Suppose that for each game, in addition to an observed vector of covariates, $X$, there is an unobserved vector of $Z$ denoting the unobserved characteristics which is observable or is common knowledge among players, but not directly observable to econometricians. Without loss of generality, we assume both $X$ and $Z$ to be scalars with $\mathcal{X}=[0,1]$ and $\mathcal{Z}=[0,1]$.\footnote{It is straightforward to allow for $X$ and $Z$ being vectors of observed and unobserved characteristics, respectively.}
The model is the same as in Section \ref{sec: observed hetero nonparametric} except that we have an additional unobserved covariate $Z$. 
For a given strategy $s$ and given $X=x$ and $Z=z$, we write $s(v,x,z)$ for the action played by type $v$.

Following Section \ref{sec: observed hetero nonparametric}, we would have
\begin{align}
\xi(b,x,z) \;\equiv\; \frac{T'(b,x,z)}{P'(b,x,z)}~\text{is strictly increasing in $b$.} \label{eq:implicit unobserved hetero}
\end{align}

However, we cannot formulate the null hypothesis as in \ref{eq: general observed hetero H0}, because $Z$ is unobservable to econometricians.  
Instead, (\ref{eq:implicit unobserved hetero}) implies the following testable implications.  

\begin{lemma}\label{lemma: general unobserved hetero H0 transformation} 
Let $h(b,x,z)$ be a known function such that $0 <h(b,x,z)P'(b,x,z) \leq M<\infty$.  
Then (\ref{eq:implicit unobserved hetero}) implies that
\begin{align}
H_{0,x,z}': \nu(b_1,b_2, x, q)&=M(b_2,x,q)W(b_1,x,q)-M(b_1,x,q)W(b_2,x,q)\leq 0\notag \\
&~\text{for all $(b_1,b_2,x, q)\in\mathcal{L}_x$} \label{eq: general unobserved hetero H0'},
\end{align}
where 
\begin{align}
&M(b,x,q)= \int_{0}^1\int_{x}^{x+\frac{1}{q}}\int_{b}^{b+\frac{a}{q}} h(\tilde{b},\tilde{x},z)T'(\tilde{b},\tilde{x},z) d\tilde{b}d\tilde{x} dz \notag\\
&
W(b,x,q)=\int_{0}^1\int_{x}^{x+\frac{1}{q}}\int_{b}^{b+\frac{a}{q}} h(\tilde{b},\tilde{x},z)P'(\tilde{b},\tilde{x},z) d\tilde{b}d\tilde{x}dz,~\text{and}\label{eq: general unobserved hetero M W}\\
&\mathcal{L}_x=\Big\{(b_1,b_2, x, q):
\Big(\frac{b_1-\underline{b}}{a}, \frac{b_2-\underline{b}}{a}, x\Big)
q\in(0,1,\ldots,q)^3, b_1>b_2~\text{for q=2,3,\ldots}  \Big\}.
\end{align}
\end{lemma}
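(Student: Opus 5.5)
The plan is to rewrite $\nu(b_1,b_2,x,q)$ as a double integral of differences of the quasi-inverse $\xi$, then try to sign that integrand using (\ref{eq:implicit unobserved hetero}), exactly as in Lemma \ref{lemma: general H0 transformation} and Lemma \ref{lemma: general observed hetero H0 transformation}. I expect the last step to go through only for part of the integrand; the remaining cross terms are, I believe, a genuine gap.

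\textbf{Step 1 (weights).} Set $w(\tilde b,\tilde x,z):=h(\tilde b,\tilde x,z)P'(\tilde b,\tilde x,z)$. By hypothesis $0<w\le M$, so the integrals in (\ref{eq: general unobserved hetero M W}) are finite. Since $T'=\xi P'$, we have $hT'=w\,\xi$. Writing $I_j=[b_j,b_j+a/q]$, $J=[x,x+1/q]$ and $\mathcal Z=[0,1]$,
\[
M(b_j,x,q)=\int_{\mathcal Z}\int_J\int_{I_j} w\,\xi,
\qquad
W(b_j,x,q)=\int_{\mathcal Z}\int_J\int_{I_j} w .
\]

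\textbf{Step 2 (double-integral representation).} Multiplying out,
\[
\nu(b_1,b_2,x,q)=\int\!\!\int w(\beta_2,x_2,z_2)\,w(\beta_1,x_1,z_1)\,\big[\xi(\beta_2,x_2,z_2)-\xi(\beta_1,x_1,z_1)\big],
\]
where $\beta_j\in I_j$, $x_1,x_2\in J$ and $z_1,z_2\in\mathcal Z$. On $\mathcal L_x$ we have $b_1>b_2$ with both on the grid of mesh $a/q$, so $b_1\ge b_2+a/q$. Hence $\beta_2\le\beta_1$ everywhere in the domain, with equality only on a null set.

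\textbf{Step 3 (diagonal part).} On the part of the domain where $(x_1,z_1)=(x_2,z_2)$, (\ref{eq:implicit unobserved hetero}) gives $\xi(\beta_2,x,z)\le\xi(\beta_1,x,z)$. Because the weights are positive, this part of the integrand is nonpositive. This is precisely the argument that proves Lemma \ref{lemma: general H0 transformation} for each fixed $(x,z)$.

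\textbf{Step 4 (main obstacle: cross terms).} When $(x_1,z_1)\neq(x_2,z_2)$, the sign of $\xi(\beta_2,x_2,z_2)-\xi(\beta_1,x_1,z_1)$ is not determined by monotonicity in $b$ alone. The weights $w$ also vary with $\tilde b$ in a way that differs across $z$. So the pooled ratio $M/W$ is a mixture over $z$ with $b$-dependent mixing weights, and it can fail to be monotone (a Simpson-type reversal).

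I would first try to control these terms through the structure of the equilibrium rather than through the integrals. Under the model, $\xi(b,x,z)$ equals the type $v$ with $s(v,x,z)=b$. If actions are ordered across games in the sense that
\[
\xi(\beta_2,x_2,z_2)\le\xi(\beta_1,x_1,z_1)\quad\text{whenever }\beta_2\le\beta_1,
\]
then every cross term is nonpositive and $\nu\le0$ follows at once. I do not see how to derive this inequality from (\ref{eq:implicit unobserved hetero}) together with $0<hP'\le M$ alone.

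\textbf{Conclusion.} Steps 1--3 establish the lemma for the diagonal contribution under the stated hypotheses. The full inequality $\nu\le0$ additionally requires the cross-game ordering in Step 4, or some other restriction tying the weights $w$ across $z$, which the statement as worded does not supply. I would therefore either locate such a restriction in the maintained model or report that the lemma needs an additional assumption.
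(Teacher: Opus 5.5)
Your diagnosis is correct, and the missing step cannot be supplied: the implication is false as stated. The paper's only justification is that one ``integrates out $Z$'' and then argues as in Lemma~\ref{lemma: general observed hetero H0 transformation}. That tacitly treats the pooled ratio $M(b,x,q)/W(b,x,q)$ as a weighted average of a single function monotone in $b$, which is exactly where your unsigned cross terms live.

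Here is a counterexample. Let $[\underline b,\bar b]=[0,1]$ (so $a=1$), $P'\equiv1$ and $T'(b,x,z)=b+4z$, so $\xi=b+4z$ is strictly increasing in $b$. Take $h(b,x,z)=1+\mathbf{1}\{b<\tfrac12<z\}+\mathbf{1}\{z\le\tfrac12\le b\}$, so $0<hP'\le 2$. At $(b_1,b_2,x,q)=(\tfrac12,0,0,2)\in\mathcal L_x$ one finds $W(b_1,x,q)=W(b_2,x,q)=\tfrac38$, $M(b_2,x,q)=\tfrac{31}{32}$ and $M(b_1,x,q)=\tfrac{29}{32}$. Hence $\nu(b_1,b_2,x,q)=\tfrac{3}{128}>0$, and a slight smoothing of $h$ does not change the sign.

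This is not an artifact of an odd $h$. With the auction $h$ behind the moment formulas of Section~\ref{sec: observed hetero nonparametric}, which the paper reuses here, $M(b,x,q)/W(b,x,q)=E[V\mid b\le B\le b+a/q,\ x\le X\le x+1/q]$, because $\xi(B,X,Z)=V$ in equilibrium. If $Z$ both piles bid mass into the upper cell and lowers the value that bids there reveal, this conditional mean falls from one cell to the next, even though every $\xi(\cdot,x,z)$ is strictly increasing. Such a Simpson reversal is compatible with the maintained auction assumptions: for $N=2$, $\xi$ is strictly increasing iff $1/G$ is strictly convex, which permits a bid distribution that is thin on most of its support and rises steeply near the top.

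Since size control requires the null to imply every $\nu\le0$, this is a substantive problem, not a technicality. The same objection applies, for each fixed $q$, to the direction from (\ref{eq: general observed hetero H0}) to (\ref{eq: general observed hetero H0'}) in Lemma~\ref{lemma: general observed hetero H0 transformation}, because $\tilde x$ varies within $[x,x+1/q]$. To see it, use $\xi=b+8\tilde x$ and put the indicators on $\tilde x\le\tfrac14$ versus $\tilde x>\tfrac14$.

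Your Steps 1--2 are fine, but two corrections are needed. First, the Step 3 region $\{(x_1,z_1)=(x_2,z_2)\}$ is a null set of the product domain, so it carries none of $\nu$. A cleaner decomposition is
\[
\nu=W(b_1,x,q)W(b_2,x,q)\Big(\int A_2\,d\pi_2-\int A_1\,d\pi_1\Big).
\]
Here $A_j(x,z)$ is the $hP'$-weighted mean of $\xi(\cdot,x,z)$ over $[b_j,b_j+a/q]$, and $\pi_j$ is the probability on $[x,x+1/q]\times[0,1]$ with density proportional to $\int_{b_j}^{b_j+a/q}hP'\,d\tilde b$. Monotonicity in $b$ together with $b_1\ge b_2+a/q$ gives $A_2\le A_1$ pointwise; that is the real content of your Step 3. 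The failure comes entirely from $\pi_1\neq\pi_2$.

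Second, your ordering condition in Step 4 is not a useful repair. Applied to adjacent grid cells for all $q$, with continuity of $\xi$ in $b$, it forces $\xi(b,\cdot,\cdot)$ to be constant in $(x,z)$, which assumes the heterogeneity away. Conditions that work act on the $\pi_j$ instead:
\begin{itemize}
\item If $hP'=\omega(b)\rho(x,z)$ on the two cells, then $\pi_1=\pi_2$ and $\nu\le0$ follows at once.
\item With no $X$ and scalar $Z$, suppose $\xi$ is also nondecreasing in $z$ and $hP'$ is log-supermodular in $(b,z)$. Then $\pi_1$ stochastically dominates $\pi_2$ and $A_1$ is nondecreasing in $z$, so $\int A_2\,d\pi_2\le\int A_1\,d\pi_2\le\int A_1\,d\pi_1$.
\end{itemize}
Some hypothesis of this kind must be added to the lemma. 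Your conclusion that it does not follow from (\ref{eq:implicit unobserved hetero}) and $0<hP'\le M$ alone is correct.
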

The basic idea of Lemma \ref{lemma: general unobserved hetero H0 transformation} is to integrate out $Z$, so $Z$ will not show up in the expressions anymore.  In this case, we may use the same tests as in Section \ref{sec: observed hetero nonparametric}.  
The $M(b,x,q)$ and $W(b,x,q)$ for four examples are identical to those in Section \ref{sec: observed hetero nonparametric} and we omit the details. 
Note that the testable implications in these cases are the same as those in Section \ref{sec: observed hetero nonparametric}, so we can apply the same test in Section \ref{sec: observed hetero nonparametric} to test the models with unobserved heterogeneity and we omit the details.  The extension to settings without observed covariates 
$X$ follows by a similar argument, so we omit the details.

\section{Simulation Studies}\label{sec: simulation}

We conduct Monte-Carlo simulations to examine the finite sample performance of the proposed tests.  
To implement our test in practice, one has to choose several tuning parameters in advance. We first suggest the choices of these tuning parameters and then present the related Monte Carlo simulation results.

\begin{enumerate}

\item Instrumental functions: We opt for using a set of indicator functions of countable hypercubes. Define
\begin{align*}
\mathcal{L}=\{(b_1,b_2, x, q): (b_1,b_2,x )q\in\{0,1,\ldots,q-1\}^{2+d_x}, b_1>b_2~\text{for}~q=2,3,\ldots,q_1 \},
\end{align*}
where $q_1$ is a natural number and is chosen such that the expected sample size of the smallest cube is around 20. We also consider three alternative choices of $q_1$, each resulting in the expected sample size of the smallest cube $n_c=15$, $25$, and $30$.\footnote{To be specific, for a specific $n_c$, we set $q_1=\lfloor (S/n_c)^{1/(1+d_x)}\rceil$, where $\lfloor c \rceil$ gives the integer nearest to $c$ and $d_x$ is the dimension of (continuous) covariates. In addition, for heterogeneous numbers of bidders, we have $q_{1,t}$ a function of $t$ and  $q_{1,t}=\lfloor (S_t/n_c)^{1/(1+d_x)}\rceil$.
} Our simulations show that the results are robust to various expected sample sizes.

\item {\bf $Q(b_1,b_2, x, q)$:}  The distribution $Q(b_1,b_2, x, q)$ assigns weight $\propto q^{-2}$ to each $q$, and for each $q$, $Q(\ell)$ assigns an equal weight to each instrumental function with the last element of $\ell$ equal to $q^{-1}$. Recall that for each $q$, there are $(q(q+1)/2)\cdot q^{d_x}$ instrumental functions with the last element of $(b_1,b_2, x, q))$ equal to $q$.
	
\item {\bf $\kappa_{S}$, $\beta_{S}$, $\epsilon$, $\eta$:} We set $\kappa_{S}=0.15\cdot\ln(S)$, $\beta_{S}=0.85\cdot\ln(S)/\ln\ln(S)$, $\epsilon=10^{-6}$, and $\eta=10^{-6}$ as suggested by \cite{HLS2019}. These choices are used in all the simulations that we report below and seem to perform well.  For the case with heterogeneous number of bidders, we set $\kappa_{S_t}=0.15\cdot\ln(S_t)$, $\beta_{S_t}=0.85\cdot\ln(S_t)/\ln\ln(S_t)$.

\item All our Monte Carlo results are based on 1000 simulations. In each simulation, the critical value is approximated by 1000 bootstrap replications. The nominal size of the test is set at 10\%.

\end{enumerate}

We first consider the case without covariates. To study the size and power properties, the CDF of the observed bids is specified as
\begin{align}
G(b)=\left( \frac{b}{k-(k-1)b}\right)^{1/5},\label{eq:CDF}
\end{align}
and the associated quantile function of the true bid distribution is
\begin{align}
Q(\tau)=\frac{k \tau^{5}}{1+(k-1)\tau^5}.\label{eq:quantile}
\end{align}
We set $k=0.5$ for the size analysis and set $k=5$, $10$, $15$, and $20$ for the power analysis. Note that $k=5$ corresponds to the example that violates the monotonicity condition provided in GPV. For all values of $k$, we set $N=2$ and $L=100$, 250, or 500. Figure \ref{fig:bidder_value} presents the quasi-inverse equilibrium strategy, $\xi(b)=b+{G(b)/(g(b)(N-1))}$
for different values of $k$. From this figure, we can observe that the monotonicity condition holds for $k=0.5$, but is violated for $k\geq5$. The figure also shows that it will be easier to detect the violation as $k$ increases.

\begin{figure}[t!]
\centering
\includegraphics[width=12cm,height=9cm]{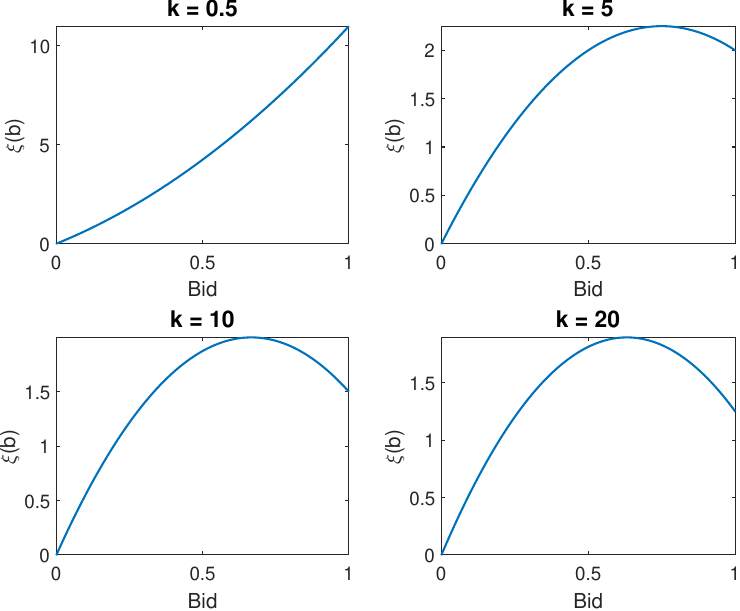}
\caption{Quasi-inverse equilibrium strategy, $\xi(b)$}\label{fig:bidder_value}
\end{figure}

Table \ref{table:DGP1} shows the rejection probabilities of our tests. The results show that the proposed test controls size well for $k=0.5$. Note that when $k=0.5$, the null hypothesis holds with strict inequality. Therefore, the size will converge to zero since every moment would hold with strict 
inequality. It follows that the test statistics will converge to zero and the critical value is bounded away from zero. For the power analysis, the results show that the power increases with sample size and with $k$ for $k\geq 5$, but the rejection probabilities are a bit less than the nominal size $0.1$ for $k=5$. Table \ref{table:DGP1} also shows that the choices of $q_1$ do not affect the test performance much.

\begin{table}[H]
\begin{small}
\begin{center}
\caption{Rejection probabilities for the case without covariates}\label{table:DGP1}
\begin{tabular}{cccccc}
\toprule
k&L&$n_c$=15&$n_c$=20&$n_c$=25&$n_c$=30\\
\midrule
0.5&100&0.001&0.004&0.001&0.002\\
0.5&250&0.003&0.003&0.001&0.003\\
0.5&500&0.001&0.002&0.001&0.001\\
\midrule
5&100&0.080&0.084&0.084&0.067\\
5&250&0.080&0.070&0.073&0.077\\
5&500&0.084&0.084&0.098&0.106\\
\midrule
10&100&0.311&0.310&0.297&0.297\\
10&250&0.504&0.519&0.470&0.464\\
10&500&0.737&0.754&0.741&0.741\\
\midrule
20&100&0.710&0.690&0.700&0.682\\
20&250&0.953&0.939&0.958&0.951\\
20&500&1.000&1.000&1.000&1.000\\
\bottomrule
\end{tabular}
\end{center}
\end{small}
\end{table}


We next consider the case with covariates. Let $X$ to be a uniform distribution on [0, 1]. To consider the test with covariates, the quantile function of the true bid distribution is specified as
\begin{align*}
Q(x,\tau)=\frac{k(x)\tau^5}{1+(k(x)-1)\tau^5}.
\end{align*}
We consider the following cases for the size and power analyses:
\begin{align*}
\text{Case 1: }&  k(x)=0.5+2x~~\text{for the size case},\\
\text{Case 2: }&  k(x)=5+5x~~\text{for the power case},\\
\text{Case 3: }&  k(x)=10+5x~~\text{for the power case},\\
\text{Case 4: }&  k(x)=15+5x~~\text{for the power case},\\
\text{Case 5: }&  k(x)=20+5x~~\text{for the power case}.
\end{align*}
For all cases, we set $N=2$ and $L=100$, 250, 500, or 1000. Table \ref{table:DGP2} shows the rejection probabilities of our tests for the case with covariates, and the results are consistent with our theoretical findings. The proposed test controls size well in Case 1, and the rejection probabilities increase with the sample size and are greater than the nominal size $0.1$ in Cases 2-5. Like the results in Table \ref{table:DGP1}, we do not find significant differences for different choices $q_1$.

\begin{table}[H]
\begin{small}
\begin{center}
\caption{Rejection probabilities for the case with covariates}\label{table:DGP2}
\begin{tabular}{cccccc}
\toprule
Case&L&$n_c$=15&$n_c$=20&$n_c$=25&$n_c$=30\\
\midrule
1&100&0.013&0.019&0.014&0.021\\
1&250&0.002&0.004&0.004&0.007\\
1&500&0.003&0.000&0.002&0.001\\
1&1000&0.000&0.000&0.000&0.000\\
\midrule
2&100&0.119&0.128&0.137&0.136\\
2&250&0.123&0.143&0.144&0.127\\
2&500&0.175&0.172&0.155&0.165\\
2&1000&0.256&0.256&0.240&0.248\\
\midrule
3&100&0.234&0.237&0.225&0.231\\
3&250&0.412&0.413&0.391&0.351\\
3&500&0.648&0.643&0.622&0.620\\
3&1000&0.910&0.896&0.910&0.885\\
\midrule
4&100&0.364&0.313&0.288&0.325\\
4&250&0.663&0.638&0.596&0.606\\
4&500&0.927&0.920&0.910&0.902\\
4&1000&0.998&0.997&0.997&0.999\\
\midrule
5&100&0.463&0.412&0.409&0.419\\
5&250&0.797&0.794&0.760&0.760\\
5&500&0.978&0.985&0.972&0.985\\
5&1000&1.000&1.000&1.000&1.000\\
\bottomrule
\end{tabular}
\end{center}
\end{small}
\end{table}

\section{Empirical Application}\label{sec:data}

As discussed in \citet{BajariYe2003}, monotonicity of equilibrium bidding strategies is one of the sufficient and necessary conditions for the symmetric and competitive bidding equilibrium. While they try to test for efficient collusion by testing these conditions, they do not test for monotonicity due to the lack of monotonicity tests.\footnote{\citet{AryalGabrielli2013} propose a stochastic dominance test to test for collusion. To implement the test, they first use the tests of \citet{BajariYe2003} to identify potential colluding bidders.} Clearly, rejection of monotonicity can be indicative of potential collusive behavior. To demonstrate the effectiveness of our monotonicity test, we use the asphalt paving auction datasets studied in \cite{Aaltio_etal2025}, who analyze bidding behavior in state-level road-paving procurements in Sweden, Finland and California.

\subsection{Data}
The Nordic data are drawn from contracts procured by the Swedish Road Administration and the Finnish Transport Infrastructure Agency, covering 1993-2009 in Sweden and 1994-2019 in Finland. Both datasets contain all submitted bids in each tender, the identity of the winning bidder, paving area (m$^2$), and the region where the pavement project took place. 

Both Nordic markets experienced bid-rigging cartels during the late 1990s and early 2000s. To focus on bidding behavior during legally established cartel operation while avoiding contamination around detection and enforcement, we restrict attention to the court-documented conviction windows and exclude the dawn-raid years. Specifically, we use tenders from 1997–2000 in Sweden and 1995–2001 in Finland.\footnote{Both markets featured complementary bidding consistent with designated-winner coordination.} The California asphalt paving market is widely considered as competitive and there is no evidence of collusion during this period, and used in \cite{Aaltio_etal2025} as a control market. We use California asphalt paving contracts procured by the California Department of Transportation from 1999 to 2008.  

For the monotonicity test, we work with normalized bids to make outcomes comparable across contracts of different scales. In Sweden and Finland, we normalize bids by paving area (price per $m^2$). In the test that control for observed auction heterogeneity, we control for paving area. The California dataset contains the similar information as the Nordic datasets, including all submitted bids, winner identity, and project region, except that the former contains the information on engineer's cost estimate but not paving area, while the latter contains the information on paving area but not engineer's cost estimate. In California, we normalize bids by engineer's cost estimate, which is also controlled for in the test that controls for observed auction heterogeneity. 

\subsection{Test results}

We first conduct the monotonicity test for the California dataset. Table \ref{table:Cal} reports the test statistics and associated p-values for auctions with 2 to 10 bidders and heterogeneous numbers of bidders. The tuning parameters are chosen as in Section \ref{sec: simulation}, and the test with heterogeneous numbers of bidders is discussed in Appendix \ref{sec: hetero numbers}. The left panel of Table \ref{table:Cal} reports the test results based on the nonparametric test without controlling for the covariates, and the right panel of Table \ref{table:Cal} reports the test results based on the nonparametric test with covariates discussed in Section \ref{sec: observed hetero nonparametric}.  The results show that there is no case with strong evidence against the monotonicity null hypothesis, which is consistent with the literature where the California market is usually modeled as competitive and no cartel investigation was conducted during 1999 to 2008. 


We next apply the monotonicity tests to the Sweden and Finland datasets. Tables \ref{table:Swe} and \ref{table:Fin} report the test statistics and associated p-values for auctions with 3 to 8 bidders and heterogeneous numbers of bidders for Sweden and Finland, respectively. For Sweden, the p-values for the auction with 5 bidders and the cases associated with 5 bidders are less than 0.1, which provide strong evidence against the monotonicity. For Finland, the p-values for some cases in nonparametric tests with covariates are around 0.3, which provide weak evidence against the monotonicity. Interestingly, \cite{Aaltio_etal2025} find that the cartel had a larger impact on the distribution of bids in Finland than in Sweden, and they interpret this as that the Swedish cartel was more successful in mimicking competitive behavior, or it had more limited influence in scope. The strong evidence against monotonicity we find for the Swedish market shows the power of our test in detecting non-competitive behavior which other methods may not be able to detect.



\begin{table}[h]
\begin{small}
\begin{center}
\caption{Test statistics and p-values for California} \label{table:Cal}
\begin{tabular}{ccccccccc}
\toprule
&\multicolumn{4}{c}{without covariates} &\multicolumn{4}{c}{with covariates}\\
\cmidrule(lr){2-5}\cmidrule(lr){6-9}
&\multicolumn{2}{c}{$n_c$=15} &\multicolumn{2}{c}{$n_c$=20}
&\multicolumn{2}{c}{$n_c$=15} &\multicolumn{2}{c}{$n_c$=20} \\
 \cmidrule(lr){2-3}\cmidrule(lr){4-5}\cmidrule(lr){6-7}\cmidrule(lr){8-9}
$N_t$&stat&p-value&stat&p-value&stat&p-value&stat&p-value\\
\midrule
2&0.0318&0.3268&0.0233&0.3878&0.0001&0.6566&0.0001&0.6564\\
(2,3)&0.0414&0.6710&0.0322&0.7170&0.0001&0.9968&0.0001&0.9846\\
(2,3,4)&0.1306&0.7114&0.1210&0.7270&0.0122&0.8146&0.0111&0.7870\\
3&0.0096&0.6790&0.0089&0.6856&0.0000&1.0000&0.0000&1.0000\\
(3,4)&0.0987&0.6844&0.0977&0.6794&0.0121&0.7360&0.0110&0.7140\\
(3,4,5)&0.1165&0.8178&0.1153&0.8114&0.0128&0.8934&0.0115&0.8772\\
4&0.0892&0.5566&0.0888&0.5432&0.0121&0.5784&0.0110&0.5312\\
(4,5)&0.1069&0.7298&0.1064&0.7310&0.0128&0.7786&0.0115&0.7484\\
(4,5,6)&0.1512&0.8396&0.1506&0.8346&0.0172&0.8930&0.0153&0.8834\\
5&0.0178&0.6782&0.0176&0.6676&0.0007&0.9360&0.0005&0.9416\\
(5,6)&0.0620&0.7662&0.0619&0.7728&0.0051&0.9074&0.0043&0.9216\\
(5,6,7)&0.0626&0.8700&0.0624&0.8604&0.0051&0.9892&0.0043&0.9878\\
6&0.0442&0.6314&0.0442&0.6254&0.0044&0.7738&0.0038&0.7734\\
(6,7)&0.0449&0.7794&0.0448&0.7912&0.0044&0.9524&0.0038&0.9518\\
(6,7,8)&0.0489&0.8718&0.0470&0.8662&0.0044&0.9826&0.0038&0.9840\\
7&0.0006&0.8814&0.0006&0.8270&0.0000&1.0000&0.0000&1.0000\\
(7,8)&0.0047&0.7964&0.0028&0.8548&0.0000&1.0000&0.0000&1.0000\\
(7,8,9)&0.0562&0.7548&0.0508&0.7644&0.0000&1.0000&0.0000&1.0000\\
8&0.0041&0.5774&0.0022&0.6120&0.0000&1.0000&0.0000&1.0000\\
(8,9)&0.0555&0.5884&0.0502&0.6024&0.0000&1.0000&0.0000&1.0000\\
(8,9,10)&0.0586&0.7542&0.0530&0.7512&0.0000&1.0000&0.0000&1.0000\\
\bottomrule
\end{tabular}
\end{center}
Note: `Stat' denotes the test statistic, and the p-values of the tests are calculated based on 5000 bootstrap replications. The number of bidders $N_t$ and the number of auctions $L_t$ are: $\{(N_t,L_t):(2,204), (3,308), (4,276), (5,212), (6,163),(7,117),(8,58),(9,41),(10,33)\}$.
\end{small}
\end{table}

\begin{table}[h!]
\vspace{-1.5em}
\begin{small}
\begin{center}
\caption{Test statistics and p-values for Sweden} \label{table:Swe}
\begin{tabular}{ccccccccc}
\toprule
&\multicolumn{4}{c}{without covariates} &\multicolumn{4}{c}{with covariates}\\
\cmidrule(lr){2-5}\cmidrule(lr){6-9}
&\multicolumn{2}{c}{$n_c$=15} &\multicolumn{2}{c}{$n_c$=20}
&\multicolumn{2}{c}{$n_c$=15} &\multicolumn{2}{c}{$n_c$=20} \\
 \cmidrule(lr){2-3}\cmidrule(lr){4-5}\cmidrule(lr){6-7}\cmidrule(lr){8-9}
$N_t$&stat&p-value&stat&p-value&stat&p-value&stat&p-value\\
\midrule
3&0.1381&0.0744&0.0000&1.0000&0.0000&1.0000&0.0000&1.0000\\
(3,4)&0.2589&0.2034&0.1050&0.2434&0.0000&1.0000&0.0000&1.0000\\
(3,4,5)&0.9441&0.0508&0.7738&0.0636&0.0823&0.0698&0.0630&0.0766\\
4&0.1208&0.2340&0.1050&0.2316&0.0000&1.0000&0.0000&1.0000\\
(4,5)&0.8060&0.0616&0.7738&0.0636&0.0823&0.0688&0.0630&0.0874\\
(4,5,6)&0.8907&0.0530&0.8360&0.0484&0.0823&0.0590&0.0630&0.0768\\
5&0.6852&0.0438&0.6688&0.0378&0.0823&0.0836&0.0630&0.0866\\
(5,6)&0.7699&0.0346&0.7311&0.0352&0.0823&0.0838&0.0630&0.0786\\
(5,6,7)&1.4276&0.0722&1.3355&0.0796&0.1719&0.1198&0.0630&0.0842\\
6&0.0847&0.1866&0.0622&0.2058&0.0000&1.0000&0.0000&1.0000\\
(6,7)&0.7424&0.0982&0.6667&0.0786&0.0896&0.1372&0.0000&1.0000\\
\bottomrule
\end{tabular}
\end{center}
Note: `Stat' denotes the test statistic, and the p-values of the tests are calculated based on 5000 bootstrap replications. The number of bidders $N_t$ and the number of auctions $L_t$ are: $\{(N_t,L_t):(3,15), (4,58), (5,67), (6,28),(7,17)\}$.
\end{small}

\vspace{-1em}
\begin{small}
\begin{center}
\caption{Test statistics and p-values for Finland} \label{table:Fin}
\begin{tabular}{ccccccccc}
\toprule
&\multicolumn{4}{c}{without covariates} &\multicolumn{4}{c}{with covariates}\\
\cmidrule(lr){2-5}\cmidrule(lr){6-9}
&\multicolumn{2}{c}{$n_c$=15} &\multicolumn{2}{c}{$n_c$=20}
&\multicolumn{2}{c}{$n_c$=15} &\multicolumn{2}{c}{$n_c$=20} \\
 \cmidrule(lr){2-3}\cmidrule(lr){4-5}\cmidrule(lr){6-7}\cmidrule(lr){8-9}
$N_t$&stat&p-value&stat&p-value&stat&p-value&stat&p-value\\
\midrule
3&0.0000&1.0000&0.0000&1.0000&0.0000&1.0000&0.0000&1.0000\\
(3,4)&0.0000&1.0000&0.0000&1.0000&0.0000&1.0000&0.0000&1.0000\\
(3,4,5)&0.0474&0.7502&0.0293&0.7620&0.0374&0.3484&0.0000&1.0000\\
4&0.0000&1.0000&0.0000&1.0000&0.0000&1.0000&0.0000&1.0000\\
(4,5)&0.0474&0.7260&0.0293&0.7446&0.0374&0.3568&0.0000&1.0000\\
(4,5,6)&0.1595&0.6678&0.1217&0.6842&0.0374&0.3480&0.0000&1.0000\\
5&0.0474&0.6766&0.0293&0.7068&0.0374&0.3318&0.0000&1.0000\\
(5,6)&0.1595&0.6414&0.1217&0.6506&0.0374&0.3208&0.0000&1.0000\\
(5,6,7)&0.1597&0.8132&0.1217&0.8188&0.0374&0.6020&0.0000&1.0000\\
6&0.1121&0.4546&0.0924&0.4774&0.0000&1.0000&0.0000&1.0000\\
(6,7)&0.1124&0.7352&0.0924&0.7092&0.0000&1.0000&0.0000&1.0000\\
(6,7,8)&0.1124&0.8434&0.0924&0.8402&0.0000&1.0000&0.0000&1.0000\\
7&0.0002&0.9390&0.0000&1.0000&0.0000&1.0000&0.0000&1.0000\\
(7,8)&0.0002&0.9898&0.0000&1.0000&0.0000&1.0000&0.0000&1.0000\\
\bottomrule
\end{tabular}
\end{center}
Note: `Stat' denotes the test statistic, and the p-values of the tests are calculated based on 5000 bootstrap replications. The number of bidders $N_t$ and the number of auctions $L_t$ are: $\{(N_t,L_t):(3,15), (4,14), (5,41), (6,29),(7,15),(8,12)\}$.
\end{small}
\end{table}

\clearpage

\section{Conclusion} \label{sec: conclusion}
This paper develops a unified framework for testing whether Bayesian Nash equilibrium strategies are monotone in unobserved types in games of incomplete information. We first establish a rationalization result showing that, within the symmetric independent
private type paradigm, monotonicity of the differentiable equilibrium strategy in private types is equivalent to monotonicity of the quasi-inverse of the equilibrium strategy in actions; the quasi-inverse equilibrium strategy is a function of expected allocation
and expected payment rules, and can be identified from observed actions. As a result, our test reduces to testing monotonicity of the quasi-inverse equilibrium strategy, which can be reformulated as testing a countable collection of moment inequalities
that involve only unconditional expectations. The proposed testing procedure is easy to implement, and can accommodate game specific covariates and applies to both observed and unobserved game heterogeneity. The tests have good finite sample performance as demonstrated by the Monte-Carlo experiments. An application to asphalt paving auctions studied in \cite{Aaltio_etal2025} illustrates how the test can be used for cartel detection, and demonstrates the importance of testing for monotonicity and potential applications of our tests. For instance, our tests provide a key tool as a first step in detecting non-competitive behavior. 


\vspace{5em}

{\begin{small}
\baselineskip=12.75pt
\bibliographystyle{chicago}
\bibliography{test_auction.bib}
\end{small}}

\newpage
\setcounter{page}{1}
\appendix
\begin{center}
{\LARGE\bf Appendix}
\end{center}
\section{Details of the Test in Auction Games in Section \ref{subsec: auction}}
Given that our main simulation results and empirical studies focused on the auction case, we provide the details of the test and provide an implementation procedure. Based on the identification results (\ref{eq: auction M}) and (\ref{eq: auction W}) in Lemma \ref{lemma: auction H0 transformation}, we can estimate $\nu(b_1,b_2,q)$ by 
\begin{align}
&\widehat{\nu}(b_1,b_2, q)=\widehat{M}(b_2,q)\widehat{W}(b_1,q)-\widehat{M}(b_1,q)\widehat{W}(b_2,q),~\text{where} \label{eq: auction nu estimator}\\ 
&\widehat{M}(b,q)=\frac{1}{S}\sum_{i,\ell} \Big(B_{i,\ell} 1\Big( b \leq  B_{i,\ell}\leq b+\frac{a}{q} \Big)\notag\\
&~~~~~~~~~~~~~~~~~~~~~~+\frac{1}{N-1} \Big( 1\Big(B_{i,\ell}\leq b+\frac{a}{q} \Big) \Big(b+\frac{a}{q} - B_{i,\ell}\Big)  - 1(B_{i,\ell}\leq b ) (b - B_{i,\ell}) \Big) \Big),\notag\\
&\widehat{W}(b,q)=\frac{1}{S}\sum_{i,\ell} 1\Big( b \leq  B_{i,\ell}\leq b+\frac{a}{q} \Big).\notag
\end{align}

In Appendix, we show that 
\begin{align*}
\phi_{i,\ell,M}(b,q)&=\Big(B_{i,\ell} 1\Big( b \leq  B_{i,\ell}\leq b+\frac{a}{q} \Big)\\
&~~~~+\frac{1}{N-1} \Big( 1\Big(B_{i,\ell}\leq b+\frac{a}{q} \Big) \Big(b+\frac{a}{q} - B_{i,\ell}\Big)  - 1(B_{i,\ell}\leq b ) (b - B_{i,\ell}) \Big) \Big)-M(b,q),\\
\phi_{i,\ell,W}(b,q)&=1\Big( b \leq  B_{i,\ell}\leq b+\frac{a}{q} \Big)-W(b,q),\\
\phi_{i,\ell,\nu}(b_1,b_2,q)&=W(b_1,q)\phi_{i,\ell,M}(b_2,q)+M(b_2,q)\phi_{i,\ell,W}(b_1,q)\\
&~~~~~
-W(b_2,q)\phi_{i,\ell,M}(b_1,q)-M(b_1,q)\phi_{i,\ell,W}(b_2,q),
\end{align*}
are the influence functions for estimators $\widehat{M}(b,q)$, $\widehat{W}(b,q)$ and $\widehat{\nu}(b_1,b_2,q)$, respectively, 
so that 
\begin{align*}
\sqrt{S}(\widehat{\nu}(\cdot,\cdot,\cdot)-{\nu}(\cdot,\cdot,\cdot))\Rightarrow
\Phi(\cdot,\cdot,\cdot)
\end{align*}
in which  $\Phi(\cdot,\cdot,\cdot)$ is a Gaussian process with covariance kernel being
$\kappa\big((b'_1,b'_2,q'),(b''_1,b''_2,q'')\big)=Cov({\phi}_{i,\ell,\nu}(b'_1,b'_2,q'),{\phi}_{i,\ell,\nu}(b''_1,b''_2,q''))$.

Define $\widehat{\phi}_{i,\ell,M}(b,q)$, $\widehat{\phi}_{i,\ell,W}(b,q)$ and $\widehat{\phi}_{i,\ell,\nu}(b_1,b_2,q)$ by replacing
$M(b,q)$ and $W(b,q)$ with $\widehat{M}(b,q)$ and $\widehat{W}(b,q)$ in ${\phi}_{i,\ell,M}(b,q)$, ${\phi}_{i,\ell,W}(b,q)$ and $\phi_{i,\ell,\nu}(b_1,b_2,q)$, respectively.
Define $\sigma^2_\nu(b_1,b_2,q)=\kappa\big((b_1,b_2,q),(b_1,b_2,q)\big)$ which is the asymptotic variance of
$\sqrt{S}(\widehat{\nu}(b_1,b_2,q)-{\nu}(b_1,b_2,q))$ and its estimator as $\widehat{\sigma}^2_
\nu(b_1,b_2,q)$
\begin{align*}
\widehat{\sigma}^2_\nu(b_1,b_2,q)=\frac{1}{S}\sum_{i,\ell} \widehat{\phi}^2_{i,\ell,\nu}(b_1,b_2,q),
\end{align*}
which can be shown to be uniformly consistent for $\sigma^2_\nu(b_1,b_2,q)$. 
Define
\begin{align}
\widehat{\sigma}^2_{\nu,\epsilon}(b_1,b_2,q)=\max\{\widehat{\sigma}^2_
\nu(b_1,b_2,q),\epsilon\cdot
\widehat{\sigma}^2_\nu(\underline{b}, (\underline{b}+\overline{b})/2,2)\}.\label{eq: auction estimated sigma epsilon}
\end{align}
The test statistic is defined as
\begin{align}
\widehat{T}_{S,au}=\sum_{(b_1,b_2,q)\in\mathcal{L}} \max\Big\{ \sqrt{S} \frac{\widehat{\nu}(b_1,b_2,q)}
{\widehat{\sigma}_{\nu,\epsilon}(b_1,b_2,q)},0\Big\}^2 Q(b_1,b_2,q). \label{eq: auction test statistics}
\end{align}

For the nonparametric bootstrap, we bootstrap auctions. That is, let $\{(B^*_{1,\ell},\ldots, B^*_{N\ell})
:\ell  \leq L \}$ be a
bootstrap sample in which $B^*_{i,\ell}=B_{i\ell^*}$ and $\{
\ell^*: \ell \leq L\}$ is an i.i.d.\ bootstrap sample
drawn from the empirical distribution of  $\{\ell : \ell\leq L\}$.
Define the bootstrap estimator for $\nu(b_1,b_2, q)$ as 
\begin{align}
&\widehat{\nu}^*(b_1,b_2, q)
=\widehat{M}^*(b_2,q)\widehat{W}^*(b_1,q)-\widehat{M}^*(b_1,q)\widehat{W}^*(b_2,q), \text{where}\label{eq: bootstrap auction nu estimator}\\ 
&\widehat{M}^*(b,q)=\frac{1}{S}\sum_{i,\ell} \Big(B^*_{i,\ell} 1\Big( b \leq  B^*_{i,\ell}\leq b+\frac{a}{q} \Big)\notag\\
&~~~~~~~~~~~~~~~~~~~~~~+\frac{1}{N-1} \Big( 1\Big(B^*_{i,\ell}\leq b+\frac{a}{q} \Big) \Big(b+\frac{a}{q} - B^*_{i,\ell}\Big)  - 1(B^*_{i,\ell}\leq b ) (b - B^*_{i,\ell}) \Big) \Big),\notag\\
&\widehat{W}^*(b,q)=\frac{1}{S}\sum_{i,\ell} 1\Big( b \leq  B^*_{i,\ell}\leq b+\frac{a}{q} \Big).\notag
\end{align}
The bootstrapped process is defined as $\Phi^*(\cdot,\cdot,\cdot)=\sqrt{S}(\widehat{\nu}^*(\cdot,\cdot,\cdot)-\widehat{\nu}(\cdot,\cdot,\cdot)).$\footnote{Note that one can also use the bootstrap estimator for $\sigma^2_\nu(b_1,b_2,q)$ and we will introduce this in Section \ref{sec: observed hetero semiparametric}.}
Define the GMS
function $\psi_S(b_1,b_2,q)$ as
\begin{align}
\psi_S(b_1,b_2,q)=-\beta_S \cdot 1\Big(\frac{\sqrt{S}\widehat{\nu}(b_1,b_2,q)}{\widehat{\sigma}_{\nu,\epsilon}(b_1,b_2,q)}<-\kappa_S\Big). \label{GMS-function}
\end{align}

For a significance level $\alpha<1/2$, define the bootstrapped critical value
$\hat{c}_{\eta,au}$ as
\begin{align*}
\hat{c}_{\eta,au}=\sup\left\{c\Big|P^*\Big(\sum_{(b_1,b_2,q)\in\mathcal{L}} \max\Big\{
\frac{{\Phi}^*(b_1,b_2,q)}
{\widehat{\sigma}_{\nu,\epsilon}(b_1,b_2,q)}+\psi_S(b_1,b_2,q),0\Big\}^2 Q(b_1,b_2,q)\leq c\Big)\leq 1-\alpha+\eta\right\}+\eta.
\end{align*}

The decision rule is: ``Reject $H_{0,au}$ ($H_{0,au}'$) when $\widehat{T}_{S,au}>\hat{c}_{\eta,au}$."

A rejection of $H_{0,au}$ ($H_{0,au}'$) would indicate that we have a strong evidence that the key equilibrium assumption, monotonicity of the equilibrium bids, does not hold in the data and the structural model is invalid.
The following theorem shows the asymptotic size control and the power against fixed alternatives of our test. 
\begin{thm} Suppose Assumptions \ref{assu: general GMS} and \ref{assu: auction DGP} hold. Then,\\
(a) under $H_0$,
${\lim}_{S\rightarrow\infty}
P(\widehat{T}_{S,au}>\hat{c}_{\eta,au})\leq \alpha$;\\
(b) under $H_1$,
${\lim}_{S\rightarrow\infty}
P(\widehat{T}_{S,au}>\hat{c}_{\eta,au})= 1.$ \label{thm: auction test size and power}
\end{thm}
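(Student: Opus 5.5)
The plan is to reduce Theorem \ref{thm: auction test size and power} to Theorem \ref{thm: general test size and power}. Under Assumption \ref{assu: auction DGP} and the equilibrium bidding strategy (\ref{eq: BNE strategy}), the equivalence of $H_{0,au}$ and $H_{0,au}'$ is already given by Lemma \ref{lemma: auction H0 transformation}. So it suffices to verify Assumptions \ref{assu: general DGP}, \ref{assu: general gaussian limit}, \ref{assu: general consistent variance} and \ref{assu: general bootstrap limit} for the auction estimator (\ref{eq: auction nu estimator}). Assumption \ref{assu: general GMS} is imposed directly.

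Assumption \ref{assu: general DGP} holds because bids are $s(V_{i,\ell})$ with i.i.d.\ values. The bid support $[\underline b,\bar b]$ is compact, since $s$ is continuous on a compact interval.

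For Assumption \ref{assu: general gaussian limit}, consider the function classes $\{B1(b\le B\le b+a/q)\}$, $\{(c-B)1(B\le c)\}$ and $\{1(b\le B\le b+a/q)\}$ indexed by $(b,q)$. Each is a product of a bounded monotone or Lipschitz function of $B$ with indicators of half-lines. These classes are VC-subgraph with bounded envelopes (bids are bounded), hence $P$-Donsker. Since $\mathcal L$ is countable, I will regard $\widehat M,\widehat W$ as processes in $\ell^\infty$ over the index $(b,q)$, or equivalently over the continuum $b\in[\underline b,\bar b]$, $1/q\in(0,1/2]$.

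One subtlety is within-auction dependence. Bids in one auction are i.i.d.\ under the model, but I will treat the $L$ auction-level vectors as the i.i.d.\ units. That is, $\widehat M=\frac1L\sum_\ell \frac1N\sum_i(\cdot)$, and the class of auction-level averages is still Donsker. Then $\sqrt S=\sqrt{N}\sqrt L$ only rescales the limit.

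The map $(M,W)\mapsto \nu$ is bilinear, hence Hadamard differentiable on $\ell^\infty$ with bounded arguments. The functional delta method therefore gives the Gaussian limit, with influence function $\phi_{i,\ell,\nu}$ as displayed. Non-degeneracy of $\kappa$ follows because $\sigma^2_\nu(\underline b,(\underline b+\bar b)/2,2)>0$. To check this, note that $\phi_\nu$ at this index is a nonconstant function of $B$: it has a jump at the midpoint in the $W$ indicator with coefficient $M(b_2,2)>0$. Since $g>0$ on the support, its variance is positive.

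For Assumption \ref{assu: general consistent variance}, $\widehat\sigma^2_\nu$ is a sample mean of squared estimated influence functions. Uniform consistency follows from a Glivenko--Cantelli argument: squares and products of bounded VC classes remain GC. I will combine this with $\sup|\widehat M-M|+\sup|\widehat W-W|\to_p0$.

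For Assumption \ref{assu: general bootstrap limit}, the auction-level nonparametric bootstrap of a Donsker class is valid conditionally in probability by the Gin\'e--Zinn bootstrap theorem. The bootstrap version of the functional delta method then transfers this validity to $\Phi^*$.

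Given these verifications, parts (a) and (b) follow verbatim from Theorem \ref{thm: general test size and power}. For (b), note that under $H_1$ some $\nu(b_1,b_2,q)>0$ by Lemma \ref{lemma: auction H0 transformation}, and $Q>0$, so $\widehat T_{S,au}$ diverges at rate $S$ while $\hat c_{\eta,au}=O_p(1)$.

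The main obstacle I expect is the careful handling of the countable index set together with the $\epsilon$-floored normalization. Specifically, I need the infinite weighted sum in the statistic and the critical value to be a continuous functional of the limit process. I would handle this by bounding the summands uniformly: $\widehat\sigma_{\nu,\epsilon}$ is bounded below by $\epsilon$ times a consistent positive quantity, and $\sum Q<\infty$. Dominated convergence then makes the map $\ell^\infty\to\mathbb R$ continuous, as in \cite{AndrewsShi2013}.
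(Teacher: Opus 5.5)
Your proposal is correct and follows essentially the paper's route: verify the high-level conditions and then invoke the generic GMS size/power argument. The conditions are the Gaussian limit of $\sqrt{S}(\widehat\nu-\nu)$ (from VC/Donsker classes plus the bilinear structure of $\nu$), uniform consistency of $\widehat\sigma_\nu$, and bootstrap validity; the paper cites Theorems 4.1 and 5.1 of \cite{HLS2019} for the final step, which is what Theorem \ref{thm: general test size and power} packages. One small slip concerns non-degeneracy, which the paper does not check at all: the net jump of $\phi_{\nu}$ at $m=(\underline b+\bar b)/2$ is $m-M(\underline b,2)-M(m,2)=m-E[B_{i,\ell}]-(\bar b-E[B_{i,\ell}])/(N-1)$, not just $M(m,2)$, and it vanishes e.g.\ for $N=4$ with $E[B_{i,\ell}]=\underline b+a/4$; argue non-degeneracy instead from the slopes $-W(m,2)\frac{N-2}{N-1}$ and $W(\underline b,2)\frac{N-2}{N-1}$ of $\phi_\nu$ on the two halves when $N\ge 3$, and from the jump $-a/2\neq 0$ when $N=2$.
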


Theorem \ref{thm: auction test size and power} shows that our test has asymptotic size control under the null hypothesis and is consistent against fixed alternatives. We briefly summarize the implementation procedure of the test as follows. 

\medskip
\noindent{\bf Implementation Procedure for the Test under Homogeneous Auctions:}
\begin{enumerate}
\item
Estimate $\nu(b_1,b_2, q)$ by $\widehat{\nu}(b_1,b_2, q)=\widehat{M}(b_2,q)\widehat{W}(b_1,q)-\widehat{M}(b_1,q)\widehat{W}(b_2,q)$ according to (\ref{eq: auction nu estimator}).

\item
Compute $\widehat{\sigma}^2_{\nu,\epsilon}(b_1,b_2,q)$ according to  (\ref{eq: auction estimated sigma epsilon}).

\item
Compute the test statistic $\widehat{T}_{S,au}$ according to (\ref{eq: auction test statistics}).

\item
Compute the GMS
function $\psi_S(b_1,b_2,q)$ according to (\ref{GMS-function}). 

\item 
For $k=1,\ldots, K$ bootstrap sample, compute $\widehat{\nu}^{*,k}(b_1,b_2, q)$ and define
$\Phi^{*,k}_t(\cdot)=\sqrt{S}(\widehat{\nu}^{*,k}(b_1,b_2, q)-\widehat{\nu}(b_1,b_2, q)).$

\item 
For a significance level $\alpha<1/2$, compute the bootstrapped critical value $\hat{c}_{\eta,au}$ as the $(1-\alpha+\eta)$-th quantile of
\begin{align*}
\Big\{\sum_{(b_1,b_2,q)\in\mathcal{L}} \max\Big\{
\frac{{\Phi}^{*,k}(b_1,b_2,q)}
{\widehat{\sigma}_{\nu,\epsilon}(b_1,b_2,q)}+\psi_S(b_1,b_2,q),0\Big\}^2 Q(b_1,b_2,q)\Big)
:k=1,\ldots,K\Big\}
\end{align*}
plus $\eta$.

\item
Reject the null if $\widehat{T}_{S,au}>\hat{c}_{\eta,au}$.

\end{enumerate}

\noindent{\bf Proof of Theorem \ref{thm: auction test size and power}:}
Given that 
\begin{align*}
&\Big\{\Big(B_{i,\ell} 1\Big( b \leq  B_{i,\ell}\leq b+\frac{a}{q} \Big)+\frac{1}{N-1} \Big( 1\Big(B_{i,\ell}\leq b+\frac{a}{q} \Big) \Big(b+\frac{a}{q} - B_{i,\ell}\Big)  - 1(B_{i,\ell}\leq b ) (b - B_{i,\ell}) \Big) \Big)\\
&~~~~~~~~~~~~~~: b\in [\underline{b},\bar{b}], q=2,\ldots\Big\},\\
&\Big\{ 1\Big( b \leq  B_{i,\ell}\leq b+\frac{a}{q} \Big) : b\in [\underline{b},\bar{b}], q=2,\ldots\Big\},
\end{align*}
are both Vapnik-Chervonenkis (VC) classes of functions, we have
$\sqrt{S}(\widehat{M}(\cdot,\cdot)-{M}(\cdot,\cdot))$ and $\sqrt{S}(\widehat{W}(\cdot,\cdot)-{W}(\cdot,\cdot))$ weakly converge to Gaussian processes in the limit with influence functions $\phi_{i,\ell,M}(\cdot,\cdot)$ and $\phi_{i,\ell,W}(\cdot,\cdot)$ defined in Section \ref{subsec: auction}. 
Then by a similar argument of Lemma A.2 of \cite{HLS2019}, we have
$\sqrt{S}(\widehat{\nu}(\cdot,\cdot,\cdot)-{\nu}(\cdot,\cdot,\cdot))$ weakly converging to a Gaussian process in the limit with influence functions $\phi_{i,\cdot,\nu}(\cdot,\cdot,\cdot)$.  By the same arguments for Lemma A.2 of \cite{HLS2019}, it is true that 
$|\widehat{\sigma}_\nu(\cdot,\cdot,\cdot)-{\sigma}_\nu(\cdot,\cdot,\cdot)|\stackrel{p}{\rightarrow} 0$ uniformly.  Define  
\begin{align*}
    {\sigma}^2_{\nu,\epsilon}(b_1,b_2,q)=\max\{{\sigma}^2_\nu(b_1,b_2,q),\epsilon\cdot
{\sigma}^2_\nu(\underline{b}, (\underline{b}+\overline{b})/2,2)\}.
\end{align*}
It is true that $|\widehat{\sigma}^{-1}_{\nu,\epsilon}(\cdot,\cdot,\cdot)-{\sigma}^{-1}_{\nu,\epsilon}(\cdot,\cdot,\cdot)|\stackrel{p}{\rightarrow} 0$ uniformly. Also, the bootstrapped process  $\Phi^*(\cdot,\cdot,\cdot)=\sqrt{S}(\widehat{\nu}^*(\cdot,\cdot,\cdot)-\widehat{\nu}(\cdot,\cdot,\cdot))$ will weakly converges to the same limiting Gaussian process as $\sqrt{S}(\widehat{\nu}(\cdot,\cdot,\cdot)-{\nu}(\cdot,\cdot,\cdot))$ conditional on the sample path with probability approaching 1.  Given all these results, Theorem \ref{thm: auction test size and power} can be proved by similar arguments for Theorem 4.1 and Theorem 5.1 of \cite{HLS2019} and we omit the details.~~~~$\Box$

\section{Testing with Heterogeneous Number of Agents} \label{sec: hetero numbers}
In this appendix, we discuss how to conduct tests when we allow for heterogeneous numbers of agents in each game.  For illustration and for simplicity, we focus on the auction case and it is straightforward to extend to other cases. 
Assume that there are $T$ different types of auctions with different numbers of bidders.  Let $N_t\geq 2$ denote the number of bidders for type $t$ auctions for $t=1,\ldots,T$ and $T$ is finite.  
Let $\xi_t(b)$ be the inverse of the BNE bidding strategy for type $t$ auctions, and $g_t(b)$ and $G_t(b)$ be the PDF and CDF of the bids in type $t$ auctions.
Note that the inverse of the BNE bidding strategy for type $t$ auctions can be written as
\begin{align}
\xi_t(b)= b+ \frac{1}{N_t-1}\frac{G_t(b)}{g_t(b)}. \label{eq: invese BNE strategy-t}
\end{align}
The null hypothesis in this case is given as:
\begin{align}
&H^t_{0,T}: \xi_t(b)= b+ \frac{1}{N_t-1}\frac{G_t(b)}{g_t(b)}~\text{is weakly increasing in $b$ for all $t=1,\ldots, T$.}  \label{eq: H0 t}
\end{align}

For each $t$, let $L_t$ denote the number of type $t$ auctions. Let $S_t=N_tL_t$ be the total number of bidders for type $t$ auctions and $S=\sum_{t=1}^TS_t$ be the total number of bidders in the data.  

\begin{assumption}\label{assu: t numbers}
Assume that $S_t/S\rightarrow \delta_t>0$ for $t=1,\ldots,T$.
\end{assumption}
Assumption \ref{assu: t numbers} 
implies that the number of auctions of each type diverges to infinity at the same rate. 
Similar to Lemma \ref{lemma: auction H0 transformation},  
$H_{0,T}$ in (\ref{eq: H0 t}) is equivalent to
\begin{align}
H_{0,T}': &~\nu_t(b_1,b_2, q)=M_t(b_2,q)W_t(b_1,q)-M_t(b_1,q)W_t(b_2,q)\leq 0\notag\\
&~~~~~\text{for all $t=1,\ldots, T$ and for all $(b_1,b_2, q)\in\mathcal{L}$} \label{eq: H0' t},
\end{align}
where 
\begin{align*}
&M_t(b,q)= \Big( E\Big[B_{i_t,\ell_t} 1\Big( b \leq  B_{i_t,\ell_t}\leq b+\frac{a}{q} \Big)\Big]\notag\\
&~~~~~~~~~~~~~+
\frac{1}{N_t-1} E\Big[ 1\Big(B_{i_t,\ell_t}\leq b+\frac{a}{q} \Big) \Big(b+\frac{a}{q} - B_{i_t,\ell_t}\Big)
 - 1(B_{i,\ell}\leq b ) (b - B_{i_t,\ell_t}) \Big]\Big), \\
&W_t(b,q)=E\Big[1\Big( b \leq  B_{i_t,\ell_t}\leq b+\frac{a}{q} \Big) \Big].
\end{align*}
We estimate $\nu_t(b_1,b_2, q)$ by $\widehat{\nu}_t(b_1,b_2, q)=\widehat{M}_t(b_2,q)\widehat{W}_t(b_1,q)-\widehat{M}_t(b_1,q)\widehat{W}_t(b_2,q)$ in which
\begin{align*}
&\widehat{M}_t(b,q)=\frac{1}{S_t}\sum_{i_t,\ell_t} \Big(B_{i_t,\ell_t} 1\Big( b \leq  B_{i_t,\ell_t}\leq b+\frac{a}{q} \Big)\\
&~~~~~~~~~~~~~~~~~~~~~~+\frac{1}{N_t-1} \Big( 1\Big(B_{i_t,\ell_t}\leq b+\frac{a}{q} \Big) \Big(b+\frac{a}{q} - B_{i_t,\ell_t}\Big)  - 1(B_{i_t,\ell_t}\leq b ) (b - B_{i_t,\ell_t}) \Big) \Big),\\
&\widehat{W}_t(b,q)=\frac{1}{S_t}\sum_{i_t,\ell_t} 1\Big( b \leq  B_{i_t,\ell_t}\leq b+\frac{a}{q} \Big),
\end{align*}
where $i_t$ and $\ell_t$ are the indexes for those bidders in type $t$ auctions.
Let
\begin{align*}
&\widehat{\phi}_{i_t,\ell_t,M_t}(b,q)=B_{i_t,\ell_t} 1\Big( b \leq  B_{i_t,\ell_t}\leq b+\frac{a}{q} \Big)\\
&~~~~~~~~~~~+\frac{1}{N_t-1} \Big( 1\Big(B_{i_t,\ell_t}\leq b+\frac{a}{q} \Big) \Big(b+\frac{a}{q} - B_{i_t,\ell_t}\Big)  - 1(B_{i_t,\ell_t}\leq b ) (b - B_{i_t,\ell_t}) \Big)-\widehat{M}_t(b,q),\\
&\widehat{\phi}_{i_t,\ell_t,W_t}(b,q)=1\Big( b \leq  B_{i_t,\ell_t}\leq b+\frac{a}{q} \Big)-\widehat{W}_t(b,q),\\
&\widehat{\phi}_{i_t,\ell_t,\nu_t}(b_1,b_2,q)=\widehat{W}_t(b_1,q)\widehat{\phi}_{i_t,\ell_t,M_t}(b_2,q)+\widehat{M}_t(b_2,q)\widehat{\phi}_{i_t,\ell_t,W_t}(b_1,q)\\
&~~~~~~~~~~~~~
-\widehat{W}_t(b_2,q)\widehat{\phi}_{i_t,\ell_t,M_t}(b_1,q)-\widehat{M}_t(b_1,q)\widehat{\phi}_{i_t,\ell_t,W_t}(b_2,q),
\end{align*}
which are the estimated influence functions for estimators $\widehat{M}_t(b,q)$, $\widehat{W}_t(b,q)$ and $\widehat{\nu}_t(b_1,b_2,q)$, respectively.
Define $\sigma^2_{\nu_t}(b_1,b_2,q)=\kappa_t\big((b_1,b_2,q),(b_1,b_2,q)\big)$ which is the asymptotic variance of
$\sqrt{S_t}(\widehat{\nu}_t(b_1,b_2,q)-{\nu}_t(b_1,b_2,q))$ and its estimator as $\widehat{\sigma}^2_
{\nu_t}(b_1,b_2,q)$
\begin{align*}
\widehat{\sigma}^2_{\nu_t}(b_1,b_2,q)=\frac{1}{S_t}\sum_{i_t,\ell_t} \widehat{\phi}^2_{i_t,\ell_t,\nu_l}(b_1,b_2,q).
\end{align*}
For some small $\epsilon>0$, define
\begin{align*}
\widehat{\sigma}^2_{\nu_t,\epsilon}(b_1,b_2,q)=\max\{\widehat{\sigma}^2_
{\nu_t}(b_1,b_2,q),\epsilon\cdot
\widehat{\sigma}^2_{\nu_t}(\underline{b}, (\underline{b}+\overline{b})/2,2)\}.
\end{align*}
Define our test statistic as
\begin{align*}
\widehat{T}_{S,T}=\sum_{t=1}^T\sum_{(b_1,b_2,q)\in\mathcal{L}} \max\Big\{ \sqrt{S_t} \frac{\widehat{\nu}_t(b_1,b_2,q)}
{\widehat{\sigma}_{\nu_t,\epsilon}(b_1,b_2,q)},0\Big\}^2 Q(b_1,b_2,q).
\end{align*}

For the nonparametric bootstrap, we bootstrap auctions within each type. To be specific, for each $t$, let $\{(\{B^*_{1,\ell_t},\ldots, B^*_{N_t\ell_t}\})
:\ell_t  \leq L_t \}$ be a
bootstrap subsample in which $B^*_{i_t,\ell_t}=B_{i_t\ell^*_t}$ and $\{
\ell^*_t: \ell_t \leq L_t\}$ is an i.i.d.\ bootstrap sample
drawn from the empirical distribution of
$\{\ell_t : \ell_t\leq L_t\}$.
Then we collect all $T$ bootstrap subsamples together.  By doing this, the total number of bidders in each bootstrap sample remains as $S$.
Define the bootstrap estimator for $\nu_t(b_1,b_2, q)$ as $\widehat{\nu}^*_t(b_1,b_2, q)
=\widehat{M}^*_t(b_2,q)\widehat{W}^*_t(b_1,q)-\widehat{M}^*_t(b_1,q)\widehat{W}^*_t(b_2,q)$ in which
\begin{align*}
	&\widehat{M}^*_t(b,q)=\frac{1}{S_t}\sum_{i_t,\ell_t} \Big(B^*_{i_t,\ell_t} 1\Big( b \leq  B^*_{i_t,\ell_t}\leq b+\frac{a}{q} \Big)\\
	&~~~~~~~~~~~~~~~~~~~~~~+\frac{1}{N_t-1} \Big( 1\Big(B^*_{i_t,\ell_t}\leq b+\frac{a}{q} \Big) \Big(b+\frac{a}{q} - B^*_{i_t,\ell_t}\Big)  - 1(B^*_{i_t,\ell_t}\leq b ) (b - B^*_{i_t,\ell_t}) \Big) \Big),\\
	&\widehat{W}^*_t(b,q)=\frac{1}{S_t}\sum_{i_t,\ell_t} 1\Big( b \leq  B^*_{i_t,\ell_t}\leq b+\frac{a}{q} \Big).
\end{align*}
The bootstrapped process is defined as $\Phi^*_t(\cdot)=\sqrt{S_t}(\widehat{\nu}^*_t(b_1,b_2, q)-\widehat{\nu}_t(b_1,b_2, q)).$
Define the GMS
function $\psi_S(t,b_1,b_2,q)$ as
\begin{align*}
	\psi_{S_t}(t,b_1,b_2,q)=-\beta_{S_t} \cdot 1\Big(\frac{\sqrt{S_t}\widehat{\nu}_t(b_1,b_2,q)}{\widehat{\sigma}_{\nu_t,\epsilon}(b_1,b_2,q)}<-\kappa_{S_t}\Big). 
\end{align*}
For a significance level $\alpha<1/2$, define the bootstrapped critical value $\hat{c}_{\eta,T}$ as
	\begin{align*}
		\hat{c}_{\eta,T}=\sup\Big\{c\Big|P^*&\Big(\sum_{t=1}^T\sum_{(b_1,b_2,q)\in\mathcal{L}} \max\Big\{
		\frac{{\Phi}^*_t(b_1,b_2,q)}
		{\widehat{\sigma}_{\nu_t,\epsilon}(b_1,b_2,q)}+\psi_{S_t}(t,b_1,b_2,q),0\Big\}^2 Q(b_1,b_2,q)\leq c\Big)\\
		&~~~~~~~~~\leq 1-\alpha+\eta\Big\}+\eta.
	\end{align*}
Then we reject $H'_{0,t}$ when $\widehat{T}_{S,T}>\hat{c}_{\eta,T}$.  
We summarize the size and power result of the test.

\begin{thm} Suppose that the unobserved the bidders' valuations $\{V_{i,\ell_t}: i=1,\ldots, N_t,~\ell_t=1,\ldots, L_t \}$ satisfy Assumption \ref{assu: auction DGP} for each $t=1,\ldots,T$.  Suppose that Assumptions \ref{assu: general GMS} and \ref{assu: t numbers} hold. 
Then,\\
(a) under $H_0$,
${\lim}_{S\rightarrow\infty}
P(\widehat{T}_{S,T}>\hat{c}_{\eta,T})\leq \alpha$;\\
(b) under $H_1$,
${\lim}_{S\rightarrow\infty}
P(\widehat{T}_{S,T}>\hat{c}_{\eta,T})= 1.$ \label{thm: heterogeneous number of bidders test}
\end{thm}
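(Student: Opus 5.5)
The plan is to reduce Theorem \ref{thm: heterogeneous number of bidders test} to the single-group argument behind Theorem \ref{thm: auction test size and power}. The reduction works by stacking the $T$ groups into one index set $\mathcal{L}_T=\{1,\ldots,T\}\times\mathcal{L}$ and treating $(t,b_1,b_2,q)$ as the moment index.

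\textbf{Step 1: joint weak convergence.} For each fixed $t$, the argument in the proof of Theorem \ref{thm: auction test size and power} applies within group $t$: the function classes defining $\widehat M_t$ and $\widehat W_t$ are VC, and the delta-method step of Lemma A.2 of \cite{HLS2019} goes through. This gives $\sqrt{S_t}(\widehat\nu_t-\nu_t)\Rightarrow\Phi_{t}$, a Gaussian process with kernel $\kappa_t$ built from the influence functions $\phi_{i_t,\ell_t,\nu_t}$. The subsamples for different $t$ consist of distinct auctions, which are independent. Hence the stacked process $\{\sqrt{S_t}(\widehat\nu_t-\nu_t)\}_{t\le T}$ converges jointly to $(\Phi_1,\ldots,\Phi_T)$ with independent components. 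Marginal tightness implies joint tightness for finitely many $T$, and finite-dimensional convergence follows from independence. Assumption \ref{assu: t numbers} ensures $S_t\to\infty$ for every $t$ as $S\to\infty$, so each $\kappa_{S_t},\beta_{S_t}$ sequence obeys Assumption \ref{assu: general GMS} along $S$.

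\textbf{Step 2: variance estimates and bootstrap.} Uniform consistency of $\widehat\sigma_{\nu_t}$, and hence of $\widehat\sigma^{-1}_{\nu_t,\epsilon}$, holds groupwise by the same argument and transfers to the stacked index set because $T$ is finite. The bootstrap resamples auctions independently within each group. So conditional on the data, the stacked bootstrap process $(\Phi_1^*,\ldots,\Phi_T^*)$ has independent components, each converging to $\Phi_t$ in probability. This gives $\Phi^*\stackrel{p}{\Rightarrow}(\Phi_1,\ldots,\Phi_T)$ on $\mathcal{L}_T$. Step 2 verifies the analogues of Assumptions \ref{assu: general gaussian limit}, \ref{assu: general consistent variance} and \ref{assu: general bootstrap limit} on $\mathcal{L}_T$. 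The only wrinkle is that each block is scaled by its own $\sqrt{S_t}$; I handle this by working with the normalized quantities $\sqrt{S_t}\widehat\nu_t/\widehat\sigma_{\nu_t,\epsilon}$ directly, which is how both $\widehat T_{S,T}$ and the GMS functions $\psi_{S_t}$ are written.

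\textbf{Step 3: size and power.} I then follow the proofs of Theorems 4.1 and 5.1 of \cite{HLS2019}. For size, under $H_{0,T}'$ every $\nu_t\le0$. Moments with $\sqrt{S_t}\nu_t/\sigma\to-\infty$ faster than $\kappa_{S_t}$ are asymptotically selected out by $\psi_{S_t}$ with probability approaching one; this uses $\kappa_{S_t}\to\infty$ and $\kappa_{S_t}/\sqrt{S_t}\to0$. The remaining moments are treated as binding, and $\beta_{S_t}/\kappa_{S_t}\to0$ makes the GMS critical value asymptotically dominate the $(1-\alpha)$ quantile of the limiting statistic. The infinitesimal factor $\eta$ handles possible discontinuity of the limit distribution at the quantile. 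The summability $\sum Q<\infty$ together with uniform tightness controls the infinite sum via a truncation argument. For power, under $H_1$ some $t$ and some $(b_1,b_2,q)\in\mathcal{L}$ have $\nu_t>0$; this uses Lemma \ref{lemma: auction H0 transformation} groupwise. That single term makes $\widehat T_{S,T}\ge Q\cdot S_t(\nu_t/\sigma_{\nu_t,\epsilon})^2(1+o_p(1))\to\infty$, while $\hat c_{\eta,T}=O_p(1)$ because $\psi\le0$ and the bootstrap process is tight.

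The main obstacle is the uniformity in the size argument when the groups have different rates $\sqrt{S_t}$ and different selection thresholds. I expect Assumption \ref{assu: t numbers} to resolve it: since $S_t/S\to\delta_t>0$, all rates are proportional, so the problem is equivalent to a single moment-inequality problem indexed by $\mathcal{L}_T$ with sample size $S$. The \cite{HLS2019} argument then applies essentially verbatim.
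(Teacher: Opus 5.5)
Your proposal is correct and follows the route the paper intends. The paper gives no separate proof of this theorem: it describes the procedure as a joint test over the $T$ groups, which inherits the single-group ingredients of Theorem \ref{thm: auction test size and power} (VC classes, Gaussian limit, variance consistency, bootstrap validity) and then appeals to the arguments of Theorems 4.1 and 5.1 of \cite{HLS2019}. Your stacking over $\{1,\ldots,T\}\times\mathcal{L}$, with independence across groups and resampling within groups, makes that reduction explicit; since the theorem is a pointwise (fixed-DGP) statement and each block is normalized by its own $\sqrt{S_t}$, Assumption \ref{assu: t numbers} is really only needed to guarantee $\min_t S_t\to\infty$, so the uniformity concern you flag does not actually arise.
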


Note that, for the test with different numbers of bidders, technically we are implementing a joint test for $T$ types of auctions.

\begin{remarkapp}
In the main text, we also consider nonparametric test with covariates and semiparametric test. For these two cases, we can extend these tests to allow for heterogeneous numbers of bidders in a similar way.  Therefore, we omit the details.  
\end{remarkapp}


\section{A Semiparametric Test for Auction Games}
In the auction model, the main idea of this semiparametric approach via homogenization of bids is to impose a parametric specification on $\Gamma(x;\theta)$ up to a scalar. Then once we parametrically estimate $\widehat{\theta}$, we can use the estimated factor $\Gamma(x;\widehat{\theta})$ as the rescaling factor.  Then we can apply the test developed in Section \ref{subsec: auction} with $\widehat{B}^u_{i,\ell}=B_{i,\ell}/\Gamma(X_{\ell};\widehat{\theta})$ after we account for the estimation effect of $\widehat{\theta}$.  

Following \cite{HHS2003} and \cite{MMSX2021}, we assume that $\Gamma(X_{\ell};\theta)=\exp(X_{\ell}\theta)$ and that 
\begin{align}
	\log(B_{i,\ell})=X_{\ell}\theta+\log(B^u_{i,\ell}),~\text{with }E[\log(B^u_{i,\ell})]=0,\label{eq: OLS model}
\end{align}
where, the first element of $X_{\ell}$ is a constant term. Then we can estimate $\theta$ consistently by an ordinary least squares (OLS) estimator. 
We get $\{\widehat{B}^u_{i,\ell}=\exp(-X_{\ell}\widehat{\theta}) B_{i,\ell}:~i=1,\ldots, N,~\ell=1,\ldots, L \}$.  Then we can treat $\widehat{B}^u_{i,\ell}$ as the observed bids and construct the test statistic as in Section \ref{subsec: auction}.  However, to construct the critical value, we have to account for the estimation effect of $\widehat{\theta}$ and the estimation effect makes the influence function representations for the estimated moments too complicated.  
Hence, to avoid estimating the asymptotic standard errors of the moments from the estimated influence functions as in previous cases, we suggest to use the bootstrap standard error estimators instead.   
We briefly summarize the implementation procedure of the test as follows.

\medskip
\noindent {\bf Implementation Procedure for Semiparametric Test under Observed Auction Heterogeneity:}
\begin{enumerate}
\item
Estimate $\theta$ by
$\widehat{\theta}$ which is given as
\begin{align*}
\widehat{\theta}=\Big(\frac{1}{S}\sum_{i,\ell} X_{\ell}' X_{\ell}\Big)^{-1} \Big(\frac{1}{S}\sum_{i,\ell} X_{\ell}' \log(B_{i,\ell})\Big),
\end{align*}
and get $\widehat{B}^u_{i,\ell}=\exp(-X_{\ell}\widehat{\theta}) B_{i,\ell}.$

\item
Estimate $\nu(b_1,b_2, q,\theta)$ by $\widehat{\nu}(b_1,b_2, q,\widehat{\theta})=\widehat{M}(b_2,q,\widehat{\theta})\widehat{W}(b_1,q,\widehat{\theta})-\widehat{M}(b_1,q,\widehat{\theta})\widehat{W}(b_2,q,\widehat{\theta})$ in which
\begin{align*}
	&\widehat{M}(b,q,\widehat{\theta})=\frac{1}{S}\sum_{i,\ell} \Big(\widehat{B}^u_{i,\ell} 1\Big( b \leq  \widehat{B}^u_{i,\ell}\leq b+\frac{a}{q} \Big)\\
	&~~~~~~~~~~~~~~~~~~~~~~+\frac{1}{N-1} \Big( 1\Big(\widehat{B}^u_{i,\ell}\leq b+\frac{a}{q} \Big) \Big(b+\frac{a}{q} - \widehat{B}^u_{i,\ell}\Big)  - 1(\widehat{B}^u_{i,\ell}\leq b ) (b - \widehat{B}^u_{i,\ell}) \Big) \Big),\\
	&\widehat{W}(b,q,\widehat{\theta})=\frac{1}{S}\sum_{i,\ell} 1\Big( b \leq  \widehat{B}^u_{i,\ell}\leq b+\frac{a}{q} \Big).
\end{align*}

\item
For $k=1,\ldots, K$ bootstrap sample, compute $\widehat{\theta}^{*,k}$, the bootstrap estimator for $\theta$, and obtain $\widehat{u}^{*,k}_{i,\ell}=\exp(-X^{*,k}_{i,\ell}\widehat{\theta}^{*,k}) B^{*,k}_{i,\ell}$  and  $\widehat{\nu}^{*,k}(b_1,b_2, q,\widehat{\theta}^{*,k})$.

\item
Compute the bootstrap standard error estimate for $\sigma_{\nu}(b_1,b_2,q,\widehat{\theta})$ as
\begin{align*}
	&\widehat{\sigma}^*_{\nu}(b_1,b_2,q,\widehat{\theta})=\Big(\frac{1}{K} \sum_{k=1}^K S(\widehat{\nu}^{*,k}(b_1,b_2, q,\widehat{\theta}^{*,k}) -\bar{\nu}^*(b_1,b_2, q,\widehat{\theta}))^2\Big)^{1/2},~\text{where}\\
	&\bar{\nu}^*(b_1,b_2, q ,\widehat{\theta})=\frac{1}{K} \sum_{k=1}^K \widehat{\nu}^{*,k}(b_1,b_2, q,\widehat{\theta}^{*,k}),
\end{align*}
and define
$	\widehat{\sigma}^*_{\nu,\epsilon}(b_1,b_2,q,\widehat{\theta})=\max\{\widehat{\sigma}^*_{\nu}(b_1,b_2,q,\widehat{\theta}),\sqrt{\epsilon}\cdot\widehat{\sigma}^*_{\nu}(\underline{b}, (\underline{b}+\overline{b})/2,2,\widehat{\theta})\}.$
	
\item
Compute the test statistic as
\begin{align*}
	\widehat{T}_{S,se}=\sum_{(b_1,b_2,q)\in\mathcal{L}} \max\Big\{ \sqrt{S} \frac{\widehat{\nu}(b_1,b_2,q,\widehat{\theta})}
	{\widehat{\sigma}^*_{\nu,\epsilon}(b_1,b_2,q,\widehat{\theta})},0\Big\}^2 Q(b_1,b_2,q).
\end{align*}

\item
Compute the GMS
function $\psi_S(t,b_1,b_2,q)$ as
\begin{align*}
	\psi_S(t,b_1,b_2,q)=-\beta_S \cdot 1\Big(\frac{\sqrt{S}\widehat{\nu}(b_1,b_2,q,\widehat{\theta})}{\widehat{\sigma}^*_{\nu,\epsilon}(b_1,b_2,q,\widehat{\theta})}<-\kappa_S\Big).
\end{align*}

\item 
Let $\Phi^{*,k}(b_1,b_2, q)=\sqrt{S}(\widehat{\nu}^{*,k}(b_1,b_2, q,\widehat{\theta}^{*,k})-\widehat{\nu}(b_1,b_2, q,\widehat{\theta})).$

\item 
For a significance level $\alpha<1/2$, compute $\hat{c}$ as the 
$(1-\alpha+\eta)$-th quantile of
\begin{align*}
\Big\{ \sum_{(b_1,b_2,q)\in\mathcal{L}} \max\Big\{
\frac{{\Phi}^{*,k}(b_1,b_2,q)}
{\widehat{\sigma}^*_{\nu,\epsilon}(b_1,b_2,q,\widehat{\theta})}+\psi_S(t,b_1,b_2,q),0\Big\}^2 Q(b_1,b_2,q)\Big):k=1,\ldots,K\Big\}
\end{align*}
and define the bootstrapped critical value $\hat{c}_{\eta,se}=\hat{c}+\eta$.

\item
Reject the null if $\widehat{T}_{S,se}>\hat{c}_{\eta,se}$.

\end{enumerate}

We summarize the asymptotic properties of the semiparametric test in the Appendix. Note that the bootstrap estimator for the asymptotic standard error in Step 5 can be applied to other cases as well.  However, it is more time-consuming to compute bootstrap standard error estimator than to compute the standard error estimator from the estimated influence functions.  
Also, it is straightforward to allow for heterogeneous number of bidders as in Section \ref{sec: hetero numbers} and we omit the details.

\section{Asymptotics of Semiparametric Test in Section \ref{sec: observed hetero semiparametric}}
This section provides additional assumptions and asymptotic results of the semiparametric test developed in Section \ref{sec: observed hetero semiparametric}.  

We impose the assumption on the first stage estimator, $\widehat{\theta}$. 
\begin{assumption}\label{assu: theta-est} Assume that (i) $\widehat\theta$ is an M-estimator of $\theta$ so that
\begin{align*}
\widehat{\theta}\equiv \arg\max_{\theta\in \Theta} \frac{1}{S} \sum_{i,\ell} \zeta(B_{i,\ell},X_{\ell},\theta),
\end{align*}
for some $\zeta$ function and $\{\zeta(B_{i,\ell},X_{\ell},\theta):~\theta\in\Theta \}$ is a Vapnik-Chervonenkis (VC) class of functions.  

(ii) There is a point $\theta_0$ in the interior of the compact parameter space $\Theta$ so that
\begin{align*}
\sqrt{n}(\widehat{\theta}-\theta_0)=\frac{1}{\sqrt{S}}\sum_{i,\ell}\psi_\theta(B_{i,\ell},X_{\ell},\theta_0)+o_p(1),
\end{align*}
where $\psi_\theta$ is a given function of $B_{i,\ell}$ and $X_{\ell}$ with $E[\psi_\theta(Y,X, \theta_0)]=0$ and $E\|\psi_\theta(Y,X, \theta_0)\|^{2+\epsilon}<\infty$ for some $\epsilon>0$. 

(iii)
The first stage bootstrap estimator $\widehat{\theta}^*$ satisfies that  
\begin{align*}
\sqrt{n}(\widehat{\theta}^*-\widehat{\theta})=\frac{1}{\sqrt{S}}\sum_{i,\ell}(P_i-1)\cdot \psi_\theta(B_{i,\ell},X_{\ell},\theta_0)+o_p(1),
\end{align*}
$\{P_i:~i=1,\ldots,N\}$ are i.i.d.\ Poisson random variables with mean 1 and are independent of the sample path. 
\end{assumption}

Assumption~\ref{assu: theta-est} states that $\widehat\theta$ is an $M$-estimator with an asymptotically linear representation, implying that $\widehat\theta$ is asymptotically normally distributed.

\begin{assumption}\label{assu: Gamma function}
Assume that $0<\delta \leq \Gamma(x,\theta)\leq M < \infty$ for all $x\in\mathcal{X}$ and $\theta\in\Theta$. 
\end{assumption}
Assumption \ref{assu: Gamma function} implies that $1/\Gamma(x,\theta)$ is uniformly bounded above and uniformly bounded away from zero.  

\begin{assumption}\label{assu: VC assumption}
Assume that for all $x\in\mathcal{X}$, $\Gamma(x,\theta)$ is continuously differentiable in $\theta$ at with bounded derivatives. 
\end{assumption}
Assumption \ref{assu: VC assumption} implies that $\{1(b \leq B_{i,\ell}/\Gamma(X_{\ell},\theta)\leq b+r) : \theta\in\Theta, b\in [\underline{b}, \overline{b}], r\in[0, a]\}$  is a VC class of functions which is crucial for our empirical process results.
Let  
\begin{align*}
&M(b,q,\theta)= E\Big[\frac{B_{i,\ell}}{\Gamma(X_{\ell},\theta)} 1\Big( b \leq  \frac{B_{i,\ell}}{\Gamma(X_{\ell},\theta)}\leq b+\frac{a}{q} \Big)\Big]\\
&~~~~~~~~~~~~~~~~+
\frac{1}{N-1} E\Big[ 1\Big(\frac{B_{i,\ell}}{\Gamma(X_{\ell},\theta)}\leq b+\frac{a}{q} \Big) \Big(b+\frac{a}{q} - \frac{B_{i,\ell}}{\Gamma(X_{\ell},\theta)}\Big)
 - 1\Big(\frac{B_{i,\ell}}{\Gamma(X_{\ell},\theta)}\leq b \Big) \Big(b - \frac{B_{i,\ell}}{\Gamma(X_{\ell},\theta)}\Big) \Big], \notag\\
&W(b,q,\theta)=E\Big[1\Big( b \leq  \frac{B_{i,\ell}}{\Gamma(X_{\ell},\theta)}\leq b+\frac{a}{q} \Big) \Big],
\end{align*}
where $M(b,q,\theta)$ and $W(b,q,\theta)$ are similar to those in (\ref{eq: auction M}) and (\ref{eq: auction W}) except that in $M(b,q,\theta)$ and $W(b,q,\theta)$, we have rescaled bids instead of the original ones.

\begin{assumption}\label{assu: M W derivatives}
Assume that for all $b\in [\underline{b}, \overline{b}]$, and for all $q$, (i) $\nabla_{\theta} M(b,q,\theta)$ and $\nabla_{\theta}M(b,q,\theta)$ are continuous in $\theta$  for all $\theta$ with $\|\theta-\theta_0\|\leq \epsilon$ for some small positive $\epsilon$; (ii)   
$\|\nabla_{\theta} M(b,q,\theta)\|\leq M$ and $\|\nabla_{\theta} M(b,q,\theta)\|\leq M$ for some $M<\infty$. 
\end{assumption}

With a bit abuse of notation, we define
\begin{align*}
&M_{i,\ell}(b,q,\theta)= \frac{B_{i,\ell}}{\Gamma(X_{\ell},\theta)} 1\Big( b \leq  \frac{B_{i,\ell}}{\Gamma(X_{\ell},\theta)}\leq b+\frac{a}{q} \Big)\\
&~~~~~~~~~~~~~~~~+
\frac{1}{N-1}  1\Big(\frac{B_{i,\ell}}{\Gamma(X_{\ell},\theta)}\leq b+\frac{a}{q} \Big) \Big(b+\frac{a}{q} - \frac{B_{i,\ell}}{\Gamma(X_{\ell},\theta)}\Big)
 - 1\Big(\frac{B_{i,\ell}}{\Gamma(X_{\ell},\theta)}\leq b \Big) \Big(b - \frac{B_{i,\ell}}{\Gamma(X_{\ell},\theta)}\Big) , \notag\\
&W_{i,\ell}(b,q,\theta)=1\Big( b \leq  \frac{B_{i,\ell}}{\Gamma(X_{\ell},\theta)}\leq b+\frac{a}{q} \Big),
\end{align*}
Let $\widehat{M}(b,q,\theta)=S^{-1}\sum_{i,\ell} M_{i,\ell}(b,q,\theta)$ and $\widehat{W}(b,q,\theta)=S^{-1}\sum_{i,\ell} W_{i,\ell}(b,q,\theta)$. 
Define $\widehat{\nu}(b_1,b_2, q,\widehat{\theta})=\widehat{M}(b_2,q,\widehat{\theta})\widehat{W}(b_1,q,\widehat{\theta})-\widehat{M}(b_1,q,\widehat{\theta})\widehat{W}(b_2,q,\widehat{\theta})$

\begin{lemma}\label{lemma: empirical process}
Suppose Assumptions \ref{assu: homogenize}, \ref{assu: theta-est}, \ref{assu: Gamma function}, \ref{assu: VC assumption} and \ref{assu: M W derivatives} hold. Then it is true that
\begin{align}
&\sqrt{S}(\widehat{M}(b,q,\widehat{\theta})-{M}(b,q,\theta_0))\notag\\
=&\frac{1}{\sqrt{S}}\sum_{i,\ell} M_{i,\ell}(b,q,\theta_0)-{M}(b,q,\theta_0)+\nabla_{\theta} M(b,q,\theta_0)\psi_\theta(B_{i,\ell},X_{\ell},\theta_0)+o_p(1)\label{eq: influ M semi}\\
\equiv& 
\frac{1}{\sqrt{S}}\sum_{i,\ell} \phi_{i,\ell,M}(b,q,\theta_0)+o_p(1)\notag\\
&\sqrt{S}(\widehat{W}(b,q,\widehat{\theta})-{W}(b,q,\theta_0))\notag\\
=&\frac{1}{\sqrt{S}}\sum_{i,\ell} W_{i,\ell}(b,q,\theta_0)-{W}(b,q,\theta_0)+\nabla_{\theta} W(b,q,\theta_0)\psi_\theta(B_{i,\ell},X_{\ell},\theta_0)+o_p(1)\label{eq: influ W semi}\\
\equiv& 
\frac{1}{\sqrt{S}}\sum_{i,\ell} \phi_{i,\ell,W}(b,q,\theta_0)+o_p(1)\notag
\end{align}
uniformly over $b\in[\underline{b}, \overline{b}]$ and $q$. 
\end{lemma}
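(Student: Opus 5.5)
The argument is the standard decomposition for empirical processes with a plugged-in estimated parameter. I carry it out for $\widehat{M}$; the argument for $\widehat{W}$ is identical and uses the class of indicators $W_{i,\ell}$. For each $(b,q)$ write
\begin{align*}
\sqrt{S}\big(\widehat{M}(b,q,\widehat{\theta})-M(b,q,\theta_0)\big)
=&\;\sqrt{S}\big(\widehat{M}(b,q,\widehat{\theta})-M(b,q,\widehat{\theta})\big)-\sqrt{S}\big(\widehat{M}(b,q,\theta_0)-M(b,q,\theta_0)\big)\\
&+\sqrt{S}\big(\widehat{M}(b,q,\theta_0)-M(b,q,\theta_0)\big)+\sqrt{S}\big(M(b,q,\widehat{\theta})-M(b,q,\theta_0)\big).
\end{align*}
The third term is already $S^{-1/2}\sum_{i,\ell}(M_{i,\ell}(b,q,\theta_0)-M(b,q,\theta_0))$. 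The whole proof reduces to two claims: the first difference (the stochastic equicontinuity term) is $o_p(1)$ uniformly, and the last term is asymptotically linear.

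For the last term, I use Assumption \ref{assu: M W derivatives} and a mean value expansion. This gives $\sqrt{S}(M(b,q,\widehat{\theta})-M(b,q,\theta_0))=\nabla_\theta M(b,q,\bar{\theta})\sqrt{S}(\widehat{\theta}-\theta_0)$ with $\bar{\theta}$ between $\widehat{\theta}$ and $\theta_0$. Assumption \ref{assu: theta-est}(ii) gives $\sqrt{S}(\widehat{\theta}-\theta_0)=O_p(1)$ and $\widehat{\theta}\stackrel{p}{\rightarrow}\theta_0$. Continuity of $\nabla_\theta M$ near $\theta_0$ then lets me replace $\bar{\theta}$ by $\theta_0$. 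Uniformity over $(b,q)$ needs equicontinuity of $\nabla_\theta M(b,q,\cdot)$ across $(b,q)$. I would obtain it from the explicit form: differentiating under the expectation gives integrals against the bounded conditional density of $B^u$ given $X$ (Assumptions \ref{assu: homogenize} and \ref{assu: VC assumption}), over intervals of length at most $a$. Substituting the linear representation of $\widehat{\theta}$ and using the uniform bound on the gradient yields the term $\nabla_\theta M(b,q,\theta_0)S^{-1/2}\sum\psi_\theta+o_p(1)$ in (\ref{eq: influ M semi}).

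For the first claim, I consider the class $\mathcal{F}=\{M_{i,\ell}(b,q,\theta): b\in[\underline{b},\overline{b}],q\ge 2,\theta\in\Theta\}$. Each member is built from products and sums of the VC indicators $1(b\le B/\Gamma(X,\theta)\le b+r)$ and $1(B/\Gamma(X,\theta)\le c)$ (VC by the remark after Assumption \ref{assu: VC assumption}), the function $B/\Gamma(X,\theta)$, and the linear functions $b+a/q-B/\Gamma$. These are finite-dimensional families, so their VC-subgraph property is preserved under these operations. The class has a bounded envelope, because $B\le\overline{b}$ and $\Gamma\ge\delta$ by Assumption \ref{assu: Gamma function}. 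Hence $\mathcal{F}$ is Donsker and the empirical process $\mathbb{G}_S f=\sqrt{S}(P_S-P)f$ is asymptotically uniformly $\rho_2$-equicontinuous. It then suffices to show
\[
\sup_{b,q}E\big[(M_{i,\ell}(b,q,\widehat{\theta})-M_{i,\ell}(b,q,\theta_0))^2\big]\big|_{\theta=\widehat{\theta}}\stackrel{p}{\rightarrow}0 .
\]
The continuous (kink) parts of $M_{i,\ell}$ change by $O(\|\theta-\theta_0\|)$, by the Lipschitz property in Assumption \ref{assu: VC assumption}. The indicator parts differ only when $B/\Gamma(X,\theta_0)$ falls within $O(\|\theta-\theta_0\|)$ of an endpoint $b$ or $b+a/q$. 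That event has probability $O(\|\theta-\theta_0\|)$ uniformly in $(b,q)$ because $f_u$ is bounded, which gives a bounded bid density for $B^u$.

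The main obstacle is this last $L_2$-continuity bound, since it must hold uniformly over the countably many $(b,q)$. It also needs the bounded density of $B^u$, which is implied by the model rather than stated as an assumption. I would derive that bound from Assumption \ref{assu: homogenize} together with the smoothness of the equilibrium bid density inherited from the GPV setting. Once it holds, combining the three pieces gives (\ref{eq: influ M semi}), and repeating the argument with $W_{i,\ell}$ gives (\ref{eq: influ W semi}).
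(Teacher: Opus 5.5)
The paper gives no proof of this lemma. It only points to the VC remark after Assumption \ref{assu: VC assumption} and to Assumption \ref{assu: M W derivatives}. Your decomposition into a stochastic-equicontinuity term, the empirical process at $\theta_0$, and a deterministic expansion of $\theta\mapsto M(b,q,\theta)$ is the standard route the paper implicitly relies on.

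The step that does not work as you describe it is the \emph{uniform} linearization of the deterministic term. You correctly notice that pointwise continuity of $\nabla_\theta M(b,q,\cdot)$ plus a uniform bound does not give uniformity over $(b,q)$; the paper's assumption does not deliver it either. Your proposed fix, however, misdescribes the gradient. Because the indicator thresholds move with $\theta$, the gradient contains point evaluations of the bid density at $b$ and $b+a/q$, not only integrals against it. Write $B_{i,\ell}=\Gamma(X_\ell,\theta_0)B^u_{i,\ell}$, with $B^u_{i,\ell}$ independent of $X_\ell$, CDF $G_u$ and density $g_u$. Then
\[
\nabla_\theta W(b,q,\theta_0)=E\big[\nabla_\theta\log\Gamma(X_\ell,\theta_0)\big]\Big(\big(b+\tfrac{a}{q}\big)\,g_u\big(b+\tfrac{a}{q}\big)-b\,g_u(b)\Big),
\]
and $\nabla_\theta M(b,q,\theta_0)$ contains the analogous terms $b^2g_u(b)$ and $(b+\frac{a}{q})^2g_u(b+\frac{a}{q})$.

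Equicontinuity uniformly over $(b,q)$ therefore needs uniform continuity of $g_u$ at every threshold. Boundedness of the density does not give this, and it fails at the support endpoints. Under GPV, $g_u(\overline{b})>0$ at the upper support point of $B^u$, while $g_u=0$ above it. Two consequences follow:
\begin{itemize}
\item For $b+a/q=\overline{b}$, the map $\theta\mapsto W(b,q,\theta)$ has a kink at $\theta_0$ whenever $\nabla_\theta\log\Gamma(X_\ell,\theta_0)'d$ changes sign for some direction $d$ (e.g.\ $\Gamma=\exp(X\theta)$ with an intercept and a continuous covariate).
\item Thresholds within $O_p(S^{-1/2})$ of $\overline{b}$, which are grid points once $q\gtrsim\sqrt{S}$, leave an $O_p(1)$ remainder in $\sqrt{S}\big(W(b,q,\widehat\theta)-W(b,q,\theta_0)\big)-\nabla_\theta W(b,q,\theta_0)\sqrt{S}(\widehat\theta-\theta_0)$. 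This happens even though each such $(b,q)$ satisfies Assumption \ref{assu: M W derivatives} pointwise.
\end{itemize}
So this uniformity cannot be derived from the model. It must be imposed, e.g.\ $\sup_{b,q}\|\nabla_\theta M(b,q,\theta)-\nabla_\theta M(b,q,\theta_0)\|\to 0$ as $\theta\to\theta_0$ (and likewise for $W$), or the thresholds must be kept away from the support endpoints.

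Smaller points:
\begin{itemize}
\item The $L_2$-continuity step you call the main obstacle is the easy one. Since $|B/\Gamma(X,\theta)-B/\Gamma(X,\theta_0)|\le C\|\theta-\theta_0\|$ uniformly, you only need a continuous CDF of $B^u$; its uniform continuity handles all thresholds at once, and no rate is needed.
\item A bounded $f_u$ does not by itself bound the bid density. That requires $s_u'$ bounded away from zero, which GPV supplies only under the null.
\item Donsker-ness of $\{B/\Gamma(X,\theta)\}$ should come from its Lipschitz dependence on $\theta$ (bracketing), not from finite-dimensionality, which alone does not give VC.
\item Under Assumption \ref{assu: homogenize}, the $N$ bids in a game share $X_\ell$, so they are not i.i.d.\ across all $S$ observations. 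Run the Donsker argument on the game-level sums $\sum_i f(B_{i,\ell},X_\ell)$, which are i.i.d.\ over $\ell$. With $N$ fixed these still form a Donsker class, and your $L_2$ bound carries over up to a factor $N^2$.
\end{itemize}
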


Lemma \ref{lemma: empirical process} provides the influence function representations of $\widehat{M}(b,q,\widehat{\theta})$ and $\widehat{W}(b,q,\widehat{\theta})$.  It implies that both $\sqrt{S}(\widehat{M}(b,q,\widehat{\theta})-{M}(b,q,\theta_0))$ and $\sqrt{S}(\widehat{M}(b,q,\widehat{\theta})-{M}(b,q,\theta_0))$ weakly converge to Gaussian processes.  

\begin{lemma}\label{lemma: nu empirical process}
Suppose Assumptions \ref{assu: homogenize}, \ref{assu: theta-est}, \ref{assu: Gamma function},\ref{assu: VC assumption} and \ref{assu: M W derivatives} hold. Then it is true that
\begin{align}
&\sqrt{S}(\widehat{\nu}(b_1,b_2, q,\widehat{\theta})-{\nu}(b_1,b_2, q,\theta_0))\equiv\frac{1}{\sqrt{S}}\sum_{i,\ell} \phi_{i,\ell,\nu}(b_1,b_2,q,\theta_0)+o_p(1),~\text{where}\\
&\phi_{i,\ell,\nu}(b_1,b_2,q,\theta_0)=W(b_1,q,\theta_0)\phi_{i,\ell,M}(b_2,q,\theta_0)+M(b_2,q,\theta_0)\phi_{i,\ell,W}(b_1,q,\theta_0)\notag\\
&~~~~~~~~~~~~~~~~~~~~~~~~~
-W(b_2,q,\theta_0)\phi_{i,\ell,M}(b_1,q,\theta_0)-M(b_1,q,\theta_0)\phi_{i,\ell,W}(b_2,q,\theta_0),\notag
\end{align}
uniformly over $(b_1,b_2,q)\in \mathcal{L}$. 
\end{lemma}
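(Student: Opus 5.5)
The statement follows from Lemma \ref{lemma: empirical process} by a functional delta-method argument for the bilinear map $(M,W)\mapsto \nu$. I start from the algebraic identity, for any $(b_1,b_2,q)\in\mathcal{L}$,
\begin{align*}
\widehat{\nu}(b_1,b_2,q,\widehat{\theta})-\nu(b_1,b_2,q,\theta_0)
=&\;W(b_1,q,\theta_0)\Delta_M(b_2)+M(b_2,q,\theta_0)\Delta_W(b_1)\\
&-W(b_2,q,\theta_0)\Delta_M(b_1)-M(b_1,q,\theta_0)\Delta_W(b_2)\\
&+\Delta_M(b_2)\Delta_W(b_1)-\Delta_M(b_1)\Delta_W(b_2),
\end{align*}
where $\Delta_M(b)=\widehat{M}(b,q,\widehat{\theta})-M(b,q,\theta_0)$ and $\Delta_W(b)=\widehat{W}(b,q,\widehat{\theta})-W(b,q,\theta_0)$. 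This is obtained by writing $\widehat{M}=M+\Delta_M$ and $\widehat{W}=W+\Delta_W$ and expanding the two products.

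The linear part is handled first. Multiplying by $\sqrt{S}$ and substituting the representations (\ref{eq: influ M semi}) and (\ref{eq: influ W semi}) from Lemma \ref{lemma: empirical process} yields $S^{-1/2}\sum_{i,\ell}\phi_{i,\ell,\nu}(b_1,b_2,q,\theta_0)$, with $\phi_{i,\ell,\nu}$ exactly as in the statement. The remainder is a sum of the coefficients $M(\cdot,q,\theta_0)$ and $W(\cdot,q,\theta_0)$ times the $o_p(1)$ terms of Lemma \ref{lemma: empirical process}. These coefficients are bounded uniformly over $b$ and $q$. $W$ is a probability, so it lies in $[0,1]$. $|M|$ is bounded by a constant depending on $\overline{b}$, $a$ and $N$, because the rescaled bids $B_{i,\ell}/\Gamma(X_\ell,\theta)$ are bounded: $B_{i,\ell}\in[\underline{b},\overline{b}]$ and Assumption \ref{assu: Gamma function} gives $\Gamma\geq\delta$. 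Since the $o_p(1)$ terms are uniform in $(b,q)$, the product remainder is $o_p(1)$ uniformly over $\mathcal{L}$.

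The quadratic remainder $\sqrt{S}\,\Delta_M\Delta_W$ is the step that needs the most care. I write it as $\sqrt{S}\Delta_M(b_2)\cdot\Delta_W(b_1)$. By Lemma \ref{lemma: empirical process}, $\sqrt{S}\Delta_M$ equals a normalized sum of the influence functions $\phi_{i,\ell,M}$ plus $o_p(1)$. These influence functions come from a VC class with bounded envelope plus a linear term $\nabla_\theta M\,\psi_\theta$, where $\nabla_\theta M$ is bounded by Assumption \ref{assu: M W derivatives} and $\psi_\theta$ has finite second moment by Assumption \ref{assu: theta-est}. Hence the class is Donsker and $\sup_{b,q}|\sqrt{S}\Delta_M|=O_p(1)$. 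The same argument gives $\sup_{b,q}|\sqrt{S}\Delta_W|=O_p(1)$, and therefore $\sup_{b,q}|\Delta_W|=O_p(S^{-1/2})=o_p(1)$. The product of the two is then $O_p(1)\cdot o_p(1)=o_p(1)$ uniformly over $\mathcal{L}$. The other cross term $\sqrt{S}\Delta_M(b_1)\Delta_W(b_2)$ is treated identically.

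Combining the linear part and the quadratic remainder gives the claimed representation uniformly over $(b_1,b_2,q)\in\mathcal{L}$. The only genuine input beyond Lemma \ref{lemma: empirical process} is the tightness (uniform $O_p(1)$ bound) of the two processes. This mirrors Lemma A.2 of \cite{HLS2019}, which the proof of Theorem \ref{thm: auction test size and power} already invokes.
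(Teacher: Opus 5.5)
Your proof is correct and takes essentially the same route as the paper, which gives no separate proof but, as in its proof of Theorem \ref{thm: auction test size and power}, obtains the representation for $\widehat{\nu}$ from the influence-function representations of $\widehat{M}$ and $\widehat{W}$ in Lemma \ref{lemma: empirical process} via the product-expansion argument of Lemma A.2 of \cite{HLS2019}; that argument is exactly your linearization plus a uniformly negligible quadratic remainder. One detail to add: the summands share $X_\ell$ within a game, so the Donsker/tightness step should treat games $\ell$ rather than individual bids as the i.i.d. units, which is harmless because $N$ is fixed.
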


Lemma \ref{lemma: nu empirical process} provides the influence function representation of $\widehat{\nu}(b_1,b_2, q,\widehat{\theta})$.  It follows that $\sqrt{S}(\widehat{\nu}(b_1,b_2, q,\widehat{\theta})-{\nu}(b_1,b_2, q,\theta_0))$ weakly converges to a Gaussian process.

Let $\widehat{\theta}^*$ and $\widehat{\nu}^*(b_1,b_2, q,\widehat{\theta}^*)$ denote the bootstrap estimators for $\theta_0$ and $\nu(b_1,b_2,q, \theta_0)$. 
Let $\sigma^2_{\nu}(b_1,b_2,q,\theta_0)$ denote the asymptotic variance of $\sqrt{S}(\widehat{\nu}(b_1,b_2, q,\widehat{\theta})-{\nu}(b_1,b_2, q,\widehat{\theta}))$.
Also define, ${\sigma}_{\nu,\epsilon}(b_1,b_2,q,\theta_0)=\max\{{\sigma}_\nu(b_1,b_2,q),\sqrt{\epsilon}\cdot{\sigma}_\nu(\underline{b}, (\underline{b}+\overline{b})/2,2,\theta_0)\}.$
Let the bootstrap standard error estimator for $\sigma_{\nu}(b_1,b_2,q,\theta_0)$ as
\begin{align*}
	&\widehat{\sigma}^*_{\nu}(b_1,b_2,q,\widehat{\theta})=\Big(\frac{1}{K} \sum_{k=1}^K S(\widehat{\nu}^{*,k}(b_1,b_2, q,\widehat{\theta}^{*,k}) -\bar{\nu}^*(b_1,b_2, q,\widehat{\theta}))^2\Big)^{1/2},~\text{where}\\
	&\bar{\nu}^*(b_1,b_2, q,\widehat{\theta})=\frac{1}{K} \sum_{k=1}^K \widehat{\nu}^{*,k}(b_1,b_2, q, \widehat{\theta}^{*,k} ),
\end{align*}
and define
$\widehat{\sigma}^*_{\nu,\epsilon}(b_1,b_2,q,\widehat{\theta})=\max\{\widehat{\sigma}^*_{\nu}(b_1,b_2,q),\sqrt{\epsilon}\cdot\widehat{\sigma}^*_{\nu}(\underline{b}, (\underline{b}+\overline{b})/2,2,\widehat{\theta})\}.$

\begin{lemma}\label{lemma: nu bootstrap}
Suppose Assumptions \ref{assu: homogenize}, \ref{assu: theta-est}, \ref{assu: Gamma function}, \ref{assu: VC assumption} and \ref{assu: M W derivatives} hold. Then it is true that uniformly over $(b_1,b_2,q)\in \mathcal{L}$
\begin{align}
&\sqrt{S}(\widehat{\nu}^*(b_1,b_2, q,\widehat{\theta}^*)-{\nu}(b_1,b_2, q,\widehat{\theta}))\notag\\
\equiv&\frac{1}{\sqrt{S}}\sum_{i,\ell} (P_i-1)\cdot \phi_{i,\ell,\nu}(b_1,b_2,q,\theta_0)+o_p(1),~\text{where}
\end{align}
$\{P_i:~i=1,\ldots,N\}$ are the same as those in Assumption \ref{assu: theta-est}(iii). 
In addition, uniformly over $(b_1,b_2,q)\in \mathcal{L}$,
\begin{align}
\lim_{K\rightarrow\infty}|\widehat{\sigma}^*_{\nu,\epsilon}(b_1,b_2,q,\widehat{\theta})- {\sigma}_{\nu,\epsilon}(b_1,b_2,q,\theta_0)|\stackrel{p}{\rightarrow} 0
\end{align}
\end{lemma}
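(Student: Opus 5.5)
The plan is to treat the bootstrap process as a multiplier (Poissonized) empirical process and to transfer the linearization of Lemma \ref{lemma: nu empirical process} to the bootstrap world. Resampling games with replacement is asymptotically equivalent to attaching i.i.d.\ Poisson(1) weights $P_i$ to each game, independent of the data. This is the same device already built into Assumption \ref{assu: theta-est}(iii). Under this device,
\begin{align*}
\widehat{M}^*(b,q,\theta)=\frac{1}{S}\sum_{i,\ell}P_i M_{i,\ell}(b,q,\theta),\qquad
\widehat{W}^*(b,q,\theta)=\frac{1}{S}\sum_{i,\ell}P_i W_{i,\ell}(b,q,\theta).
\end{align*}

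The first step is a bootstrap analogue of Lemma \ref{lemma: empirical process}. I decompose $\sqrt{S}(\widehat{M}^*(b,q,\widehat\theta^*)-\widehat{M}(b,q,\widehat\theta))$ into three pieces:
\begin{itemize}
\item the weighted centered process $S^{-1/2}\sum(P_i-1)(M_{i,\ell}(b,q,\widehat\theta^*)-M(b,q,\widehat\theta^*))$;
\item the difference $\sqrt{S}(\widehat{M}(b,q,\widehat\theta^*)-\widehat{M}(b,q,\widehat\theta))$;
\item small remainder terms.
\end{itemize}
For the first piece, Assumptions \ref{assu: VC assumption} and \ref{assu: Gamma function} make the class indexed by $(b,q,\theta)$ VC with bounded envelope. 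The conditional multiplier CLT and its stochastic equicontinuity then allow $\widehat\theta^*$ to be replaced by $\theta_0$, because $\widehat\theta^*-\theta_0=o_p(1)$ and $L^2$ continuity in $\theta$ holds.

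For the second piece, stochastic equicontinuity of the unweighted empirical process reduces it to $\sqrt{S}(M(b,q,\widehat\theta^*)-M(b,q,\widehat\theta))$. A mean value expansion combined with Assumption \ref{assu: M W derivatives} then gives $\nabla_\theta M(b,q,\theta_0)\sqrt{S}(\widehat\theta^*-\widehat\theta)+o_p(1)$. By Assumption \ref{assu: theta-est}(iii), this equals $S^{-1/2}\sum(P_i-1)\nabla_\theta M\,\psi_\theta+o_p(1)$. Collecting terms yields $S^{-1/2}\sum(P_i-1)\phi_{i,\ell,M}(b,q,\theta_0)+o_p(1)$ uniformly, and the same argument applies to $W$.

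The second step is the delta method for the bilinear map $(M,W)\mapsto M(b_2)W(b_1)-M(b_1)W(b_2)$. Since $M$ and $W$ are uniformly bounded, and the bootstrap deviations are $O_p(S^{-1/2})$ uniformly over $\mathcal{L}$, the cross-product terms are $o_p(S^{-1/2})$. Linearizing around $(M,W)(\cdot,\theta_0)$ instead of around the $\widehat\theta$-values costs only $o_p(1)$, since $\widehat M\to M$ uniformly. This gives the stated representation with $\phi_{i,\ell,\nu}$ as in Lemma \ref{lemma: nu empirical process}. Here I read the centering ${\nu}(b_1,b_2,q,\widehat\theta)$ as the sample quantity $\widehat{\nu}(b_1,b_2,q,\widehat\theta)$.

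The third step is the variance claim. As $K\to\infty$, the bootstrap sample variance converges, conditionally on the data, to $\operatorname{Var}^*(\sqrt{S}\widehat\nu^*)$. By the linear representation, this equals $S^{-1}\sum\phi_{i,\ell,\nu}^2(\cdot,\theta_0)+o_p(1)$, which converges uniformly to $\sigma^2_\nu$ by a Glivenko–Cantelli argument for the VC class of squared influence functions.

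The main obstacle is that convergence in distribution does not imply convergence of second moments. I therefore need uniform integrability of $S(\widehat\nu^*-\widehat\nu)^2$ under $P^*$. The linear part is fine, because it has bounded envelopes and $E\|\psi_\theta\|^{2+\epsilon}<\infty$. The $o_p(1)$ remainders, especially the one coming from $\widehat\theta^*$, are harder. I would first try to bound them in conditional $L^2$ using boundedness of $\nabla_\theta M$ and $\nabla_\theta W$ together with a $2+\epsilon$ moment bound on $\sqrt{S}(\widehat\theta^*-\widehat\theta)$. If that fails, I would reinterpret $\widehat\sigma^*$ via a truncated or interquartile-range version, which suffices for the GMS argument.

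Finally, because the floor term $\sigma_\nu(\underline b,(\underline b+\overline b)/2,2,\theta_0)$ is positive and the map $x\mapsto\max\{x,c\}$ is Lipschitz, uniform consistency of $\widehat\sigma^*$ transfers to $\widehat\sigma^*_{\nu,\epsilon}$.
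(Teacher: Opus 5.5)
Your route is the paper's route. The paper proves this lemma only by pointing to Theorem 3.6.1 of \cite{VW1996}, i.e.\ Poissonization of the game bootstrap combined with the linearization of Lemma \ref{lemma: nu empirical process}, and that is what you spell out. Your reading of the centering as $\widehat\nu(b_1,b_2,q,\widehat\theta)$ is correct. However, one step fails as written. With $\widehat M^*(b,q,\theta)=S^{-1}\sum_{i,\ell}P_iM_{i,\ell}(b,q,\theta)$, the exact decomposition is
\begin{align*}
\sqrt S\big(\widehat M^*(b,q,\widehat\theta^*)-\widehat M(b,q,\widehat\theta)\big)
={}&\frac{1}{\sqrt S}\sum_{i,\ell}(P_i-1)\big(M_{i,\ell}(b,q,\widehat\theta^*)-M(b,q,\widehat\theta^*)\big)\\
&+\sqrt S\big(\widehat M(b,q,\widehat\theta^*)-\widehat M(b,q,\widehat\theta)\big)
+M(b,q,\widehat\theta^*)\,\frac{1}{\sqrt S}\sum_{i,\ell}(P_i-1).
\end{align*}
The last term is not a small remainder. $Z_S:=S^{-1/2}\sum_{i,\ell}(P_i-1)$ is a nondegenerate $O_p(1)$ variable; with game-level weights its conditional variance is $N$. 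The same term arises for $W$, and your bilinear step turns the two into an extra $2\nu(b_1,b_2,q,\theta_0)Z_S$ in the representation. So with uncentered Poisson weights both claims fail at every $(b_1,b_2,q)$ with $\nu\neq0$, and the bootstrap variance is inflated by $4N\nu^2$ in the limit. The multinomial/Poisson equivalence in \cite{VW1996} concerns the centered process $n^{-1/2}\sum_i(N_i-1)(\delta_{X_i}-P)$, not uncentered sums. The fix is to keep the actual resampling counts $\omega_\ell$, for which $\sum_\ell(\omega_\ell-1)=0$ kills this term identically, and Poissonize only the centered process. Alternatively, divide the Poisson weights by their mean. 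The first-stage M-estimator is invariant to that rescaling, so Assumption \ref{assu: theta-est}(iii) is unaffected.

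For the variance claim, you rightly note that conditional weak convergence, which is all the citation of \cite{VW1996} delivers, does not give convergence of the bootstrap second moment. But neither of your remedies closes the gap under the stated hypotheses.
\begin{itemize}
\item A $2+\epsilon$ conditional moment bound on $\sqrt S(\widehat\theta^*-\widehat\theta)$ is not implied by Assumption \ref{assu: theta-est}(iii), whose remainder is only $o_p(1)$. An event of conditional probability of order $1/S$ on which $\widehat\theta^*$ is far from $\widehat\theta$ is compatible with (iii), yet it shifts $S\,E^*(\widehat\nu^*-\widehat\nu)^2$ by an amount of order one.
\item A truncated or interquartile scale changes the estimator the lemma is about.
\end{itemize}
What you can use is this. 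Because bids are nonnegative and $\Gamma>0$, $M_{i,\ell}(b,q,\theta)$ and $W_{i,\ell}(b,q,\theta)$ are bounded uniformly in $(b,q,\theta)$. Hence the empirical-process pieces have bounded conditional moments, and the essential missing input is conditional uniform integrability of $S\min\{\|\widehat\theta^*-\widehat\theta\|^2,1\}$. This must be added as an assumption; it can be verified, e.g., for the OLS first stage with bounded regressors, a nonsingular design and bids bounded away from zero.

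Finally, with game-level weights the conditional variance of the linear term is $S^{-1}\sum_\ell(\sum_i\phi_{i,\ell,\nu})^2$, not $S^{-1}\sum_{i,\ell}\phi^2_{i,\ell,\nu}$. The two share a limit only if influence functions of different bidders in the same auction are uncorrelated. For the first-stage part this holds when $E[\psi_\theta(B_{i,\ell},X_\ell,\theta_0)\mid X_\ell]=0$, as in the OLS model, but not for a general M-estimator.
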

Lemma \ref{lemma: nu bootstrap} can be proved by a similar procedure in Theorem 3.6.1 of \cite{VW1996}.  
Lemma \ref{lemma: nu bootstrap} implies the bootstrap validity of our bootstrap estimators in that it can approximate the limiting process of $\sqrt{S}(\widehat{\nu}(b_1,b_2, q,\widehat{\theta})-{\nu}(b_1,b_2, q,\theta_0))$ well.  The second part of it shows the uniform consistency of the bootstrap standard error estimators.  

\begin{thm} \label{thm: semi test} Suppose Assumptions \ref{assu: homogenize}, \ref{assu: theta-est}, \ref{assu: Gamma function}, \ref{assu: VC assumption} and \ref{assu: M W derivatives} hold. Then the following results regarding the semiparametric test proposed in Section \ref{sec: observed hetero semiparametric} in which the test statistic, $\widehat{T}_{S,se}$, and  critical value $\hat{c}_{\eta,se}$ are generated as in ``Implementation Procedure for Semiparametric Test under Observed Auction Heterogeneity".  Then \\
(a) under the null hypothesis,
${\lim}_{S\rightarrow\infty}
P(\widehat{T}_{S,se}>\hat{c}_{\eta,se})\leq \alpha$;\\
(b) under the alternative hypothesis,
${\lim}_{S\rightarrow\infty}
P(\widehat{T}_{S,se}>\hat{c}_{\eta,se})= 1.$ 
\end{thm}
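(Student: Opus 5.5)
The plan is to reduce Theorem \ref{thm: semi test} to the generic result, Theorem \ref{thm: general test size and power}. The argument then runs exactly as in Theorem 4.1 and Theorem 5.1 of \cite{HLS2019}, with the sample rescaled by the estimated factor $\Gamma(X_\ell,\widehat{\theta})$. Before that, the null must be placed correctly. By Assumption \ref{assu: homogenize} and the homogenization argument of Section \ref{sec: observed hetero semiparametric} (equation (\ref{eq: semi equivalence})), the null $H_{0,x}$ is equivalent to $H_{0,\text{semi}}$ for the unobserved homogenized bids $B^u_{i,\ell}=B_{i,\ell}/\Gamma(X_\ell,\theta_0)$. These are i.i.d.\ across games and bidders. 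Applying Lemma \ref{lemma: general H0 transformation}, as specialized in Lemma \ref{lemma: auction H0 transformation}, to these bids gives $\nu(b_1,b_2,q,\theta_0)\le 0$ on $\mathcal{L}$ under the null. Under the alternative, $\nu(b_1,b_2,q,\theta_0)>0$ for some $(b_1,b_2,q)\in\mathcal{L}$.

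The high-level conditions of Theorem \ref{thm: general test size and power} are then verified one at a time.

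\textbf{Gaussian limit (Assumption \ref{assu: general gaussian limit}).} Lemma \ref{lemma: nu empirical process} gives the uniform asymptotically linear representation of $\sqrt{S}(\widehat{\nu}(\cdot,\widehat{\theta})-\nu(\cdot,\theta_0))$ with influence function $\phi_{i,\ell,\nu}(\cdot,\theta_0)$. Weak convergence to a tight Gaussian process $\Phi_\kappa$ follows from a Donsker argument. The required ingredients are:
\begin{itemize}
\item the class $\{M_{i,\ell}(b,q,\theta_0),W_{i,\ell}(b,q,\theta_0)\}$ is VC with bounded envelope, since bids and $1/\Gamma$ are bounded by Assumption \ref{assu: Gamma function};
\item $\nabla_\theta M$ and $\nabla_\theta W$ are bounded by Assumption \ref{assu: M W derivatives};
\item $\psi_\theta$ is a finite-dimensional term with $2+\epsilon$ moments by Assumption \ref{assu: theta-est}(ii).
\end{itemize}
Adding finitely many such terms, multiplying by bounded deterministic functions, and taking the finite linear combination defining $\phi_{i,\ell,\nu}$ all preserve the Donsker property. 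Clustering by game is handled by treating each game's $N$ bids as one i.i.d.\ unit.

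\textbf{Bootstrap (Assumption \ref{assu: general bootstrap limit}) and variance (Assumption \ref{assu: general consistent variance}).} The first part of Lemma \ref{lemma: nu bootstrap} supplies the multiplier (Poissonized) representation $S^{-1/2}\sum(P_i-1)\phi_{i,\ell,\nu}$. By the conditional multiplier central limit theorem (Theorem 2.9.6 / 3.6.1 of \cite{VW1996}), $\Phi^*\stackrel{p}{\Rightarrow}\Phi_\kappa$. The second part of Lemma \ref{lemma: nu bootstrap} gives uniform consistency of $\widehat{\sigma}^*_{\nu,\epsilon}$ for $\sigma_{\nu,\epsilon}$ as $K\to\infty$. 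Since $\sigma_{\nu,\epsilon}$ is bounded below by $\sqrt{\epsilon}\,\sigma_\nu(\underline b,(\underline b+\overline b)/2,2,\theta_0)>0$, its reciprocal converges uniformly as well. Here we need this anchor variance to be positive. That holds because $W(\underline b,2,\theta_0)$ is a nondegenerate probability strictly between $0$ and $1$, given that $f_u$ is bounded away from zero.

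\textbf{Size and power.} With the weak convergence, bootstrap validity and variance consistency in hand, the final step follows \cite{HLS2019}. For size, split $\mathcal{L}$ into the contact set, where $\nu=0$ or is within $O(\kappa_S/\sqrt{S})$ of zero, and its complement. On the complement, $\sqrt{S}\widehat{\nu}/\widehat{\sigma}_{\nu,\epsilon}<-\kappa_S$ with probability approaching one. Hence the statistic's contribution there vanishes, and the GMS shift $\psi_S=-\beta_S$ removes the same terms in the bootstrap. Because $\beta_S/\kappa_S\to0$, no binding moment is wrongly discarded. The summability of $Q$ controls the infinite sum uniformly through dominated convergence. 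The limiting statistic is then stochastically dominated by the GMS bootstrap functional, and $\eta>0$ handles possible discontinuity of the limiting distribution at the quantile. For power, a violated moment gives $\sqrt{S}\widehat{\nu}/\widehat{\sigma}_{\nu,\epsilon}\to\infty$ with positive $Q$ weight, while $\hat c_{\eta,se}=O_p(1)$.

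I expect the main obstacle to be the Gaussian limit step, that is, controlling the generated-regressor effect behind Lemmas \ref{lemma: empirical process}–\ref{lemma: nu empirical process}. One must show stochastic equicontinuity of $\theta\mapsto S^{-1/2}\sum[M_{i,\ell}(b,q,\theta)-M(b,q,\theta)]$ uniformly in $(b,q)$, even though the indicators are discontinuous in $\theta$. I would obtain this from the VC property in Assumption \ref{assu: VC assumption}, combined with $L_2$-continuity in $\theta$. That continuity follows because $B/\Gamma(X,\theta)$ has a bounded density and $\Gamma$ is Lipschitz in $\theta$, so the symmetric-difference probability is $O(\|\theta-\theta_0\|)$. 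A mean-value expansion of $M(b,q,\widehat\theta)$ then adds the $\nabla_\theta M\,\psi_\theta$ term.
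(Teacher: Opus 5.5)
Your proposal takes essentially the paper's route, and it is in fact more detailed than the paper's proof. The paper proves Theorem~\ref{thm: semi test} only by remarking that, once Lemmas~\ref{lemma: empirical process}--\ref{lemma: nu bootstrap} supply three ingredients, the GMS argument behind Theorem~\ref{thm: general test size and power} (via \cite{DonaldHsu2016} and \cite{HLS2019}) goes through unchanged. Those ingredients are the asymptotically linear representation with the $\nabla_\theta M\,\psi_\theta$ and $\nabla_\theta W\,\psi_\theta$ corrections, the multiplier-bootstrap representation, and consistency of the bootstrap standard errors; this is exactly the reduction you carry out.

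Two small precisions. First, binding moments keep $\psi_S=0$ because $\kappa_S\to\infty$; the condition $\beta_S/\kappa_S\to0$ instead guarantees that any moment receiving the shift $-\beta_S$ has true normalized slack at least $\beta_S$, so the shift never overshoots. Second, your positivity argument for the anchor variance should also account for the $\widehat{\theta}$-estimation terms in $\phi_{i,\ell,\nu}$, a point the paper itself simply takes for granted.
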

The proof for Theorem \ref{thm: semi test} is similar to that for Theorem \ref{thm: general test size and power} given the lemmas in this section and we omit the details. 

\section{Semiparametric Test under Unobserved Game Heterogeneity} \label{sec: semi unobserved hetero}
We extend the semiparametric test in Section \ref{sec: observed hetero semiparametric} to allow for unobserved heterogeneity.  In addition to the setup in Section \ref{sec: observed hetero semiparametric},  we assume there exists an unobserved vector of $Z$ denoting the unobserved characteristics.  As in Section \ref{sec: observed hetero semiparametric}, we impose a semiparametric model to homogenize the bids.
Let $\Gamma(x)$ be a strictly positive function of covariates. We assume that Assumption \ref{assu: D C homogeneous} holds.  

\begin{assumption} \label{assu: homogenize unobserved}
Assume that (i) $\{(V_{1,\ell}=\Gamma(X_{\ell})\Lambda(Z_{\ell},u_{1,\ell}), V_{2,\ell}=\Gamma(X_{\ell})\Lambda(Z_{\ell},u_{2,\ell}),\ldots, V_{N,\ell}=\Gamma(X_{\ell})\Lambda(Z_{\ell},u_{N,\ell}), X_{\ell},Z_\ell): \ell=1,\ldots, L\}$ are i.i.d.\  random vectors;
(ii) for each $\ell$, conditional on $Z_{\ell}=z$, $\{\Lambda(Z_{\ell},u_{1,\ell}): i=1,\ldots, N \}$ are i.i.d.\ random variables 
with conditional PDF $f_\lambda(\cdot|Z=z)$ and conditional CDF $F_\lambda(\cdot|Z=z)$ and 
are independent of  
$X_{\ell}$;
(iii)
$f_\lambda(\cdot|Z=z)$ is strictly positive and bounded away from zero on its support, a compact
interval $[\underline{\lambda}_{z}, \overline{\lambda}_{z}] \subseteq R$, and is twice continuously differentiable on $(\underline{\lambda}_{z}, \overline{\lambda}_{z})$.
\end{assumption}
The condition that for each $\ell$, conditional on $Z_{\ell}=z$, $\{\Lambda(Z_{\ell},u_{i,\ell}): \ell=1,\ldots, L \}$ are i.i.d.\ random variables is implied by that 
for each $\ell$, conditional on $Z_{\ell}=z$, $\{u_{i,\ell}: \ell=1,\ldots, L \}$ are i.i.d.\ random variables.  We assume that $V_{i,\ell}=\Gamma(X_{\ell})\Lambda(Z_{\ell},u_{i,\ell})$ so that we can rescale the actions by a function of observables. 

Then the same argument in Section \ref{sec: observed hetero semiparametric}
implies that 
\begin{align*}
&~\xi(b,x,z) \;\equiv\; \frac{T'(b,x,z)}{P'(b,x,z)} ~\text{is strictly increasing in $b$ for all $x\in\mathcal{X}$ and for all $z\in\mathcal{Z}$.} \notag\\
\Leftrightarrow
&~\xi(b^\lambda,z) \;\equiv\; \frac{T^{\lambda\prime }(b^\lambda,z)}{P^{\lambda\prime }(b^\lambda,z)} ~\text{is strictly increasing in $b^\lambda$ for all $z\in\mathcal{Z}$.} 
\end{align*}
Combining  Section \ref{sec: observed hetero semiparametric} and Section \ref{sec: unobserved hetero nonparametric}, we can formulate the testable implications as
\begin{align*}
&H_{0,\text{un,semi}}': \nu(b^\lambda_1,b^\lambda_2, q)=M(b^\lambda_2,q)W(b^\lambda_1,q)-M(b^\lambda_1,q)W(b^\lambda_2,q)\leq 0~\text{for all $(b^\lambda_1,b^\lambda_2, q)\in\mathcal{L}$},
\end{align*}
where 
\begin{align*}
&M(b^\lambda,q)= \int_0^1\int_{b^\lambda}^{b^\lambda+\frac{a}{q}} h(\tilde{b},z)T'(\tilde{b},z) d\tilde{b} dz, 
~~~
W(b^\lambda,q)=\int_0^1\int_{b^\lambda}^{b^\lambda+\frac{a}{q}} h(\tilde{b},z)P'(\tilde{b},z) d\tilde{b} dz,~\text{and}\\
&\mathcal{L}=\Big\{(b^\lambda_1,b^\lambda_2, q): \Big(\frac{b^\lambda_1-\underline{b}^\lambda}{a}, \frac{b^\lambda_2-\underline{b}^\lambda}{a}\Big)\cdot q\in(0,1,\ldots,q)^2, b^\lambda_1>b^\lambda_2~\text{for q=2,3,\ldots}  \Big\}.
\end{align*}
Then we can apply the test in Section \ref{sec: observed hetero semiparametric} and we omit the details.

\section{Proofs for Theorems}
\noindent{\bf Proof of Theorem \ref{thm: LZ2018}:}
\label{thm:gpv_general_patched_suff_rewrite}
Consider a symmetric Bayesian game with scalar private type $v\in[\underline v,\bar v]$ and scalar action $b\in[\underline b,\bar b]$.
Given opponents' action vector $B_{-i}$, agent $i$'s interim payoff from choosing $b$ is
\begin{equation}\label{eq:payoff_general_patched}
\pi(b,v;B_{-i}) \;=\; v\,D(b,B_{-i}) \;-\; C(b,B_{-i}).
\end{equation}

Fix a continuously differentiable CDF $G$ on $[\underline b,\bar b]$ with density $g>0$ on
$(\underline b,\bar b)$. Define the reduced-form (expected) payoff
\begin{equation}\label{eq:reduced_forms_patched}
U(b,v;G)\;:=\; v\,P(b;G) \;-\; T(b;G),
\qquad
P(b;G):=E_{B_{-i}}\!\left[D(b,B_{-i})\right],\quad
T(b;G):=E_{B_{-i}}\!\left[C(b,B_{-i})\right],
\end{equation}
where the expectation is taken under $B_{-i}\sim G^{\otimes (N-1)}$ and the expectations exist.

Assume $P(\cdot;G)$ and $T(\cdot;G)$ extend continuously to $[\underline b,\bar b]$ and satisfy
$P(\cdot;G),T(\cdot;G)\in C^{2}(\underline b,\bar b)$ with $P'(b;G)>0$ for all
$b\in(\underline b,\bar b)$. Define the inverse-type map
\begin{equation}\label{eq:xi_def_patched}
\xi(b;G)\;:=\;\frac{T'(b;G)}{P'(b;G)}, \qquad b\in(\underline b,\bar b),
\end{equation}
and assume $\xi(\cdot;G)$ extends continuously to $[\underline b,\bar b]$, with
$\underline\xi:=\xi(\underline b;G)$ and $\bar\xi:=\xi(\bar b;G)$.

\medskip
\noindent\textbf{Necessity.}
Suppose $G$ is rationalizable by a strictly monotone interior symmetric equilibrium, in the sense that
there exist a type distribution $F$ on $[\underline v,\bar v]$ and a symmetric equilibrium strategy
$s:[\underline v,\bar v]\to[\underline b,\bar b]$ such that:

(i) $s$ is continuous and strictly increasing;

(ii) (C1) actions $(B_1,...,B_N)$  are i.i.d. with common CDF $G$.  

(iii) for every $v\in(\underline v,\bar v)$, the action $b=s(v)$ is an interior maximizer of $U(b,v;G)$ on $ [\underline b,\bar b]$.

Then the inverse-type map is strictly increasing on $(\underline b,\bar b)$ and satisfies
\begin{equation}\label{eq:xi_equals_inverse_nec}
\xi(b;G)=s^{-1}(b)\qquad \text{for all } b\in(\underline b,\bar b).
\end{equation}

\noindent{\bf Proof for Necessity:}
Suppose there exists $(F,s)$ satisfying (i)--(iii) in the theorem statement.
By (ii), in the symmetric equilibrium the opponents' action profile $B_{-i}$ is i.i.d.\ with CDF $G$,
so a type-$v$ agent evaluates expected payoff $U(b,v;G)=vP(b;G)-T(b;G)$.
By (iii), $b=s(v)\in(\underline b,\bar b)$ is an interior maximizer for each $v\in(\underline v,\bar v)$,
so the first-order condition holds:
\[
0=\frac{\partial}{\partial b}U(b,v;G)\Big|_{b=s(v)}
= v\,P'(s(v);G)-T'(s(v);G).
\]
Since $P'(b;G)>0$ on $(\underline b,\bar b)$, dividing yields
\[
v=\frac{T'(s(v);G)}{P'(s(v);G)}=\xi(s(v);G),
\qquad v\in(\underline v,\bar v).
\]
Because $s$ is strictly increasing and maps $(\underline v,\bar v)$ onto $(\underline b,\bar b)$, for each $b\in(\underline b,\bar b)$ there exists a unique $v=\xi(b;G)=s^{-1}(b)$ such that $b=s(v)$. Since $s^{-1}$ is strictly increasing, $\xi(\cdot;G)$ is strictly increasing on $(\underline b,\bar b)$.

We now show that the rationalizing type distribution $F$ is unique. Since $\xi(b;G)=s^{-1}(b)$ on $(\underline b,\bar b)$, the equilibrium type satisfies $V=s^{-1}(B)=\xi(B;G)$. Under (C1), $B\sim G$, and $\xi(\cdot;G)$ is uniquely
determined by $G$. Therefore, the rationalizing $F$  is uniquely pinned down as
\[
F(v)=\Pr(V\le v)=\Pr(\xi(B;G)\le v)=G(\xi^{-1}(v;G)),\qquad v\in[\underline\xi,\bar\xi].
\]

\medskip
\noindent\textbf{Sufficiency.}
Conversely, if $\xi(\cdot;G)$ is continuous and strictly increasing on $(\underline b,\bar b)$,
then $G$ is rationalizable: define $F$ on $[\underline\xi,\bar\xi]$ as the pushforward of $G$
through $\xi(\cdot;G)$,
\begin{equation}\label{eq:F_pushforward_nec_suff}
F(v)\;:=\;\Pr\!\big(\xi(B;G)\le v\big)
\;=\;G\!\big(\xi^{-1}(v;G)\big),
\qquad v\in[\underline\xi,\bar\xi],\ \ B\sim G,
\end{equation}
and define the candidate equilibrium strategy
\begin{equation}\label{eq:s_def_nec_suff}
s(v):=\xi^{-1}(v;G),\qquad v\in[\underline\xi,\bar\xi].
\end{equation}
Then $s$ is an SMBNE for i.i.d.\ types $V\sim F$,
and the induced equilibrium action distribution equals $G$.

\noindent{\bf Proof for Sufficiency:}
Assume now that $\xi(\cdot;G)$ is continuous and strictly increasing on $(\underline b,\bar b)$.
Define $F$ by \eqref{eq:F_pushforward_nec_suff} on $[\underline\xi,\bar\xi]$ and define
$s(v):=\xi^{-1}(v;G)$ as in \eqref{eq:s_def_nec_suff}.

\noindent{\bf Step 1 (The inverse of $\xi$ reproduces exactly $G$):}
Let $V\sim F$ and set $B:=s(V)=\xi^{-1}(V;G)$. Then for any $b\in[\underline b,\bar b]$,
\begin{align*}
\Pr(B\le b)
&=\Pr\!\big(\xi^{-1}(V;G)\le b\big)
=\Pr\!\big(V\le \xi(b;G)\big)
=F\!\big(\xi(b;G)\big) \\
&=G\!\big(\xi^{-1}(\xi(b;G);G)\big)
=G(b),
\end{align*}
where the fourth equality uses \eqref{eq:F_pushforward_nec_suff} evaluated at $v=\xi(b;G)$ and the
fifth uses $\xi^{-1}(\xi(b;G);G)=b$. Since $(V_1,...,V_N)$ are i.i.d. $F$ and $B_i = s(V_i)$ is applied componentwise, $(B_1,...,B_N)$ are i.i.d. with common CDF $G$, so (C1) holds. 

\noindent{\bf Step 2 ($s$ satisfies the interior first-order condition under beliefs $G$):}
Fix $v\in(\underline\xi,\bar\xi)$ and suppose the bidder believes $B_{-i}$ are i.i.d.\ with CDF $G$.
Then the bidder evaluates $U(b,v;G)=vP(b;G)-T(b;G)$, whose derivative in $b$ is
\[
\frac{\partial}{\partial b}U(b,v;G)=v\,P'(b;G)-T'(b;G).
\]
At $b=s(v)=\xi^{-1}(v;G)$ we have $\xi(b;G)=v$, so
$T'(b;G)=\xi(b;G)P'(b;G)=vP'(b;G)$ and therefore
\[
\frac{\partial}{\partial b}U\big(s(v),v;G\big)=0.
\]

\noindent{\bf Step 3 (Global optimality check):}
Fix $v\in(\underline\xi,\bar\xi)$ and let $b^*:=s(v)$. For any $b\in[\underline b,\bar b]$,
\begin{align*}
U(b,v;G)-U(b^*,v;G)
&=\int_{b^*}^{b} \big(vP'(t;G)-T'(t;G)\big)\,dt =\int_{b^*}^{b} \big(v-\xi(t;G)\big)P'(t;G)\,dt,
\end{align*}
where the second line uses $T'(t;G)=\xi(t;G)P'(t;G)$. Since $P'(t;G)>0$ and $\xi(\cdot;G)$ is strictly increasing, the integrand is strictly positive for $t<b^*$ and strictly negative for $t>b^*$, implying $U(b,v;G)\le U(b^*,v;G)$ for all $b$, with strict inequality for $b\neq b^*$ in the interior. Therefore $b^*=s(v)$ is a global maximizer of $U(\cdot,v;G)$ on $[\underline b,\bar b]$.

Thus, when types are i.i.d.\ $F$ and all bidders play $s$, the induced action distribution is $G$
and each type best-responds given beliefs $G$, so $(F,s)$ rationalizes $G$.~~~~$\Box$

\noindent{\bf Proof of Theorem \ref{thm: general test size and power}:}
Let \begin{align}
{\sigma}^2_{\nu,\epsilon}(b_1,b_2,q)=\max\{{\sigma}^2_
\nu(b_1,b_2,q),\epsilon\cdot
{\sigma}^2_\nu(\underline{b}, (\underline{b}+\overline{b})/2,2)\}.\label{eq: general sigma epsilon}
\end{align}
Assumption \ref{assu: general consistent variance} implies that $\sup_{(b_1,b_2,q)\in \mathcal{L}}|\widehat{\sigma}^2_{\nu,\epsilon}(b_1,b_2,q)-{\sigma}^2_{\nu,\epsilon}(b_1,b_2,q)|\stackrel{p}{\rightarrow}0$ and ${\sigma}^2_{\nu,\epsilon}(b_1,b_2,q)$ is uniformly bounded away from zero for all $(b_1,b_2,q)\in\mathcal{L}$. Assumption \ref{assu: general gaussian limit} and Assumption \ref{assu: general consistent variance} together imply that
\begin{align*}
\sqrt{S}\Big(\frac{\widehat{\nu}(\cdot,\cdot,\cdot)}{\widehat{
\sigma}_{\nu,\epsilon}(\cdot,\cdot,\cdot)
}-\frac{{\nu}(\cdot,\cdot,\cdot)}{\sigma_{\nu,\epsilon}(\cdot,\cdot,\cdot)}\Big)\Rightarrow
\Phi_{\kappa_\epsilon}(\cdot,\cdot,\cdot)
\end{align*}
where $\kappa_\epsilon(\cdot,\cdot,\cdot)
=\frac{\kappa\big((b'_1,b'_2,q'),(b''_1,b''_2,q'')\big)}{\sigma_{\nu,\epsilon}(b'_1,b'_2,q')\sigma_{\nu,\epsilon}(b''_1,b''_2,q'')}$ for $(b'_1,b'_2,q'),(b''_1,b''_2,q'')\in \mathcal{L}$.
Let $\mathcal{L}^o\equiv\{(b_1,b_2,q)\in\mathcal{L}: \nu(b_1,b_2,q)=0\}$ which is the collection of moment conditions hold with equality.  Let $\mathcal{L}^+\equiv\{(b_1,b_2,q)\in\mathcal{L}: \nu(b_1,b_2,q)>0\}$ which is the collection of strictly positive moment conditions.  Under the null, $\mathcal{L}^+$ is an empty set and under the alternative, $\mathcal{L}^+$ contains at least one element. By a similar proof of Lemma 2.1 of \cite{DonaldHsu2016}, 
under the null hypothesis, one can show that 
\begin{align}
\widehat{T}_{S}=&\sum_{(b_1,b_2,q)\in\mathcal{L}} \max\Big\{ \sqrt{S} \frac{\widehat{\nu}(b_1,b_2,q)}
{\widehat{\sigma}_{\nu,\epsilon}(b_1,b_2,q)},0\Big\}^2 Q(b_1,b_2,q)\notag \\
\stackrel{D}{\rightarrow} &
\sum_{(b_1,b_2,q)\in\mathcal{L}^o} \max\Big\{ \Phi^2_{\kappa_\epsilon}(b_1,b_2,q),0\Big\}^2 Q(b_1,b_2,q), \label{eq: null distribution}
\end{align}
where $\stackrel{D}{\rightarrow}$ denotes convergence in distribution. 
(\ref{eq: null distribution}) shows that the limiting null distribution depends only on those moment conditions holding with equality.  If under null hypothesis, $\mathcal{L}^o$ is an empty set in that every moment condition holds with strictly inequality, then $\widehat{T}_{S}$ converges in probability to zero at the rate $S$.  Also, if under null hypothesis, $\mathcal{L}^o$ is not an empty set, but $\sigma_{\nu}(b_1,b_2,q)=0$ for all $(b_1,b_2,q)\in \mathcal{L}^o$, then it is true that $\widehat{T}_{S}$ converges in probability to zero at the rate $S$ as well.  
These two cases are the degenerate in that the limiting null distribution converge in probability to a zero point.  
On the other hand, one can also show that $\widehat{T}_{S}\rightarrow \infty$ at the rate of $S$. Next, similar to the proof of Lemma A.2 of \cite{DonaldHsu2016}, one can show that the simulated test statistic defined as
\begin{align*}
\sum_{(b_1,b_2,q)\in\mathcal{L}} \max\Big\{
\frac{{\Phi}^*(b_1,b_2,q)}
{\widehat{\sigma}_{\nu,\epsilon}(b_1,b_2,q)}+\psi_S(b_1,b_2,q),0\Big\}^2 Q(b_1,b_2,q)
\end{align*}
converge in distribution to 
\begin{align*}
&\sum_{(b_1,b_2,q)\in\mathcal{L}} \max\Big\{
\frac{{\Phi}^*(b_1,b_2,q)}
{\widehat{\sigma}_{\nu,\epsilon}(b_1,b_2,q)}+\psi_S(b_1,b_2,q),0\Big\}^2 Q(b_1,b_2,q)\\
\stackrel{D}{\rightarrow} &
\sum_{(b_1,b_2,q)\in\mathcal{L}^o \cup \mathcal{L}^+} \max\Big\{ \Phi^2_{\kappa_\epsilon}(b_1,b_2,q),0\Big\}^2 Q(b_1,b_2,q)
\end{align*}
conditional on sample path in probability approaching one. 
Therefore, it follows that 
$\hat{c}_{\eta}\stackrel{p}{\rightarrow} c(1-\alpha+\eta)+\eta,$
where $c(1-\alpha+\eta)$ denotes the ($1-\alpha+\eta$)-th quantile of 
\begin{align*}
\sum_{(b_1,b_2,q)\in\mathcal{L}^o \cup \mathcal{L}^+} \max\Big\{ \Phi^2_{\kappa_\epsilon}(b_1,b_2,q),0\Big\}^2 Q(b_1,b_2,q)
\end{align*} 
and it is bounded in probability and strictly greater than $\eta$. 

Then it is true that under the null, the asymptotic size of our test is $\lim_{S\rightarrow\infty}P(\widehat{T}_{S}\leq \hat{c}_{\eta})\leq \alpha$.  Also, under the alternative, our test is consistent in that $\lim_{S\rightarrow\infty}P(\widehat{T}_{S}\leq \hat{c}_{\eta})=1$.~~~~$\Box$ 


\section{Proofs for Lemmas}
\noindent{\bf Proof of Lemma \ref{lemma: SMBNE}:}
Fix any $v\in(\underline v,\bar v)$. Under Assumption \ref{assu: PT}, the interim payoff, $\Pi(b,v)=vP(b)-T(b)$,
is twice continuously differentiable in $b$ on the relevant support. Assumption \ref{assu: SC} states that
\[
\Pi_{bb}(b,v)=\frac{\partial^2 \Pi(b,v)}{\partial b^2}=vP''(b)-T''(b)<0,
\]
so $\Pi(\cdot,v)$ is strictly concave in $b$ and hence has a unique maximizer. On any interval where the equilibrium best response is interior, the equilibrium strategy $s(v)$ therefore satisfies the first-order condition
\begin{equation}\label{eq:FOC_SMBNE}
\Pi_b(s(v),v)=\frac{\partial \Pi(b,v)}{\partial b}\Big|_{b=s(v)}
= vP'(s(v)) - T'(s(v)) =0 .
\end{equation}
Define $\Pi_b(b,v)=vP'(b)-T'(b)$.
Then \eqref{eq:FOC_SMBNE} can be written as $\Pi_b(s(v),v)=0$. By Assumption \ref{assu: PT}, $\Pi_b$ is continuously differentiable and $\Pi_{bb}(b,v)<0$
by Assumption \ref{assu: SC}. \\and Assumption \ref{assu: SSM} states that $\Pi_{bv}(b,v)=P'(b)>0$. 
Hence, by the implicit function theorem, $s(\cdot)$ is differentiable on any interior interval and
\begin{equation}\label{eq:sprime_IFT}
s'(v) \;=\;  -\,\frac{P'(s(v))}{vP''(s(v))-T''(s(v))} >0.
\end{equation}
This completes the proof for Lemma \ref{lemma: SMBNE}.~~~~$\Box$

\noindent{\bf Proof of Lemma \ref{lemma: general H0 transformation}:}
We apply Lemma 3.1 of  \cite{HsuShen2021} to show this lemma.  We set the $\lambda(b)$ and $h(b)$ functions of Lemma 3.1 of  \cite{HsuShen2021} as
$T'(b)/P'(b)$ and $h(b)P'(b)$ in our example.  Then it follows that 
\begin{align*}
M(b,q)\equiv&\int_{b}^{b+\frac{a}{q}} \frac{T'(b)}{P'(b)}\cdot h(b)P'(b) db=
\int_{b}^{b+\frac{a}{q}} T'(b)h(b) db,~\text{and}\\
W(b,q)\equiv&\int_{b}^{b+\frac{a}{q}}   h(b)P'(b) db .
\end{align*}
This shows Lemma \ref{lemma: general H0 transformation}.~~~~$\Box$

\noindent{\bf Proof of Lemma \ref{lemma: auction H0 transformation}:}
It is sufficient to show that $M(b,q)$ and $W(b,q)$ are identified as (\ref{eq: auction W}) and (\ref{eq: auction M}). 
Following the discussion after (\ref{eq: auction H0}), we have 
\begin{align*}
M(b,q)&=\int_{b}^{b+\frac{a}{q}} \big(\tilde{b} g(\tilde{b})+(N-1)^{-1}G(\tilde{b})\big) d\tilde{b}\\
&=E\Big[B_{i,\ell} 1\Big( b \leq  B_{i,\ell}\leq b+\frac{a}{q} \Big)\Big]\\
&~~~~~+
\frac{1}{N-1} E\Big[ 1\Big(B_{i,\ell}\leq b+\frac{a}{q} \Big) \Big(b+\frac{a}{q} - B_{i,\ell}\Big)
 - 1(B_{i,\ell}\leq b ) (b - B_{i,\ell}) \Big],\\
W(b,q)&=\int_{b}^{b+\frac{a}{q}} g(\tilde{b}) d\tilde{b}=E\Big[1\Big( b \leq  B_{i,\ell}\leq b+\frac{a}{q} \Big) \Big].
\end{align*}
Note that the second equality holds because $\int_{\underline{b}}^b G(\tilde{b})d\tilde{b} =E[ 1(B_{i,\ell} \leq \tilde{b}) (\tilde{b}- B_{i,\ell}) ]$.~~~~$\Box$

\medskip

\noindent{\bf  Proof of Lemma \ref{lemma: contest H0 transformation}}
From the discussion after (\ref{eq: contest H0}), it is sufficient to show that 
\begin{align*}
M(b,q)&=
\int_{b}^{b+\frac{a}{q}} \tilde{b} g(\tilde{b}) d\tilde{b}=E\Big[B_{i,\ell} \cdot 1\Big( b \leq  B_{i,\ell}\leq b+\frac{a}{q} \Big)\Big],\\
W(b,q)&=E_{B_{-i}}[D(\tilde{b},B_{-i})(1-D(\tilde{b},B_{-i}))] g(\tilde{b}) d\tilde{b}\\
&=E\Big[1\Big( b \leq  B_{i,\ell}\leq b+\frac{a}{q} \Big) \frac{B_{i,\ell}}{\sum_{j=1}^N B_{j,\ell} }\Big(1-\frac{B_{i,\ell}}{\sum_{j=1}^N B_{j,\ell}}\Big)\Big],
\end{align*}
where the last equality holds by law of iterated expectations.~~~~$\Box$

\medskip
The proofs for Lemmas \ref{lemma: public H0 transformation} and \ref{lemma: Cournot H0 transformation} are similar to that for Lemma \ref{lemma: contest H0 transformation}, so we omit the details.

\noindent{\bf Proof of Lemma \ref{lemma: general observed hetero H0 transformation}:}
We apply Lemma A.1 of \cite{HsuShen2021} to show Lemma \ref{lemma: general observed hetero H0 transformation}.  Similar to proof for Lemma \ref{lemma: general H0 transformation}, 
we set $\lambda(b,x)$ and $h(b,x)$ functions in Lemma A.1 of \cite{HsuShen2021} as  $T'(b,x)/P'(b,x)$ and $h(b,x)P'(b,x)$ in our example. Then it follows that
\begin{align*}
M(b,x, q)\equiv&\int_{x}^{x+\frac{1}{q}}\int_{b}^{b+\frac{a}{q}} \frac{T'(\tilde{b},\tilde{x})}{P'(\tilde{b},\tilde{x})}\cdot h(\tilde{b},\tilde{x})P'(\tilde{b},\tilde{x}) d\tilde{b}d\tilde{x}=
\int_{x}^{x+\frac{1}{q}}\int_{b}^{b+\frac{a}{q}} T'(\tilde{b},\tilde{x})h(\tilde{b},\tilde{x}) d\tilde{b} d\tilde{x},~\text{and}\\
W(b,x,q)\equiv&\int_{x}^{x+\frac{1}{q}}\int_{b}^{b+\frac{a}{q}} h(\tilde{b},\tilde{x})P'(\tilde{b},\tilde{x}) d\tilde{b}d\tilde{x}.
\end{align*}
This shows Lemma \ref{lemma: general observed hetero H0 transformation}.~~~~$\Box$



\section{Additional Simulation Results}

We consider the case with heterogeneous number of bidders and for implementation of such test; see Appendix \ref{sec: hetero numbers} for details. To study the size and power properties, we use the same quantile function of the true bid distribution defined in (\ref{eq:quantile}). We set $k=0.5$ for the size analysis and set $10$, $20$, and $40$ for the power analysis. For all values of $k$, we set $N_t=[2,3,4]$ and $L_t=a[60,40,20]$ with $a=1,2,3$. Figure \ref{fig:bidder_value_DGP3} presents the quasi-inverse equilibrium strategy $\xi(b)$ for different values of $k$ and $N$.

\begin{figure}[t]
\centering
\hspace*{-1cm}
\includegraphics[width=15cm,height=10cm]{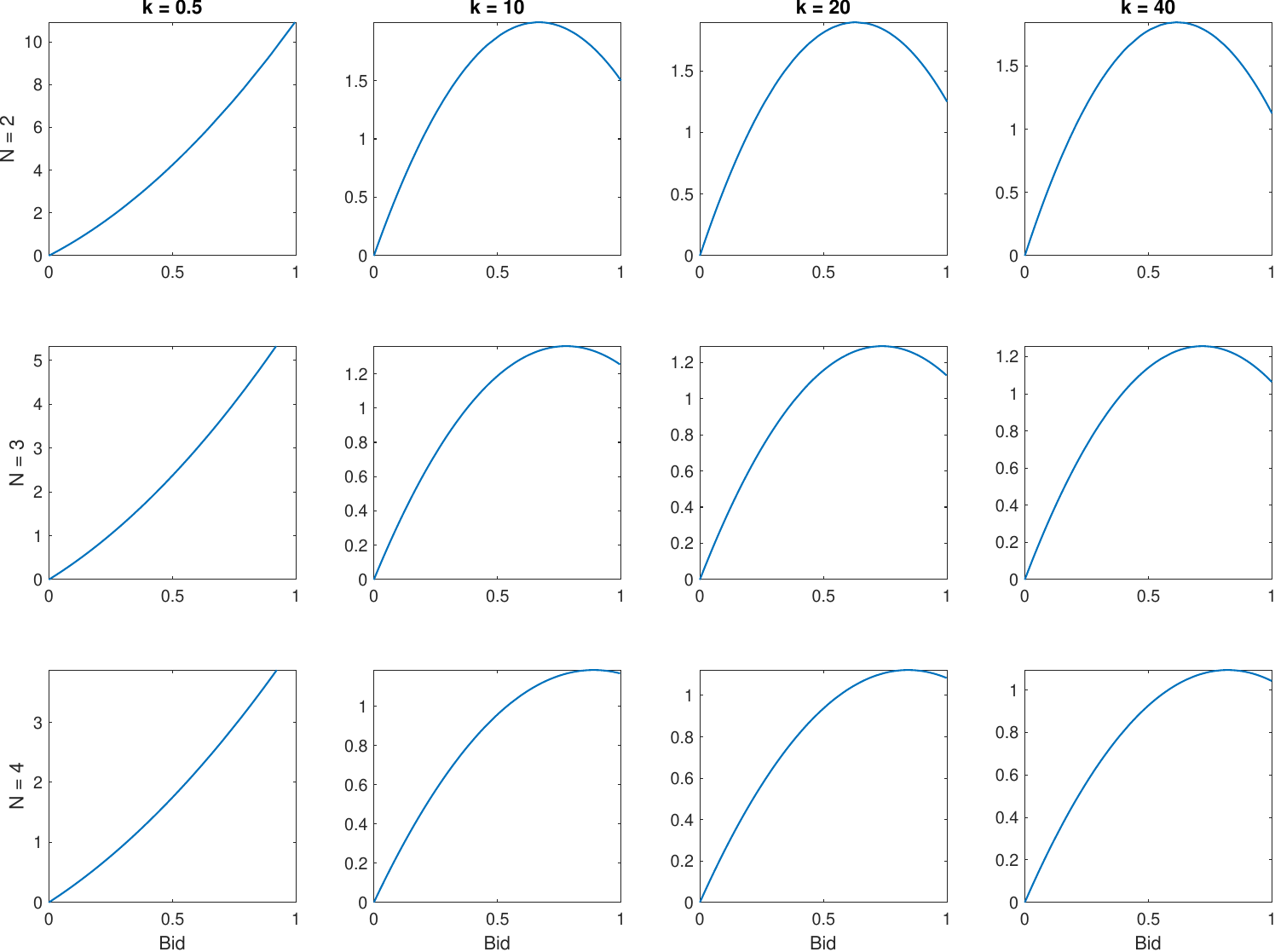}
\caption{Quasi-inverse equilibrium strategy for $N=2$, $3$, or $4$}\label{fig:bidder_value_DGP3}
\end{figure}

Table \ref{table:DGP3} shows the rejection probabilities of our tests for the case with heterogeneous number of bidders. Like the results in Table \ref{table:DGP1}, the proposed test controls size well for $k=0.5$, and the power increases with the sample size and with $k$ for $k\geq 10$. Table \ref{table:DGP3} also shows that the choices of $q_1$ do not affect the test performance much.

\begin{table}[h]
\begin{small}
\begin{center}
\caption{Rejection probabilities for the case without covaraites and with heterogeneous number of bidders}\label{table:DGP3}
\begin{tabular}{cccccc}
\toprule
k&a&$n_c$=15&$n_c$=20&$n_c$=25&$n_c$=30\\
\midrule
0.5&1&0.000&0.000&0.000&0.000\\
0.5&2&0.000&0.000&0.000&0.000\\
0.5&3&0.000&0.000&0.000&0.000\\
\midrule
10&1&0.195&0.204&0.175&0.166\\
10&2&0.209&0.227&0.225&0.229\\
10&3&0.298&0.296&0.275&0.298\\
\midrule
20&1&0.449&0.413&0.442&0.369\\
20&2&0.669&0.640&0.629&0.618\\
20&3&0.800&0.800&0.796&0.805\\
\midrule
40&1&0.723&0.674&0.659&0.616\\
40&2&0.922&0.924&0.920&0.893\\
40&3&0.989&0.988&0.987&0.982\\
\bottomrule
\end{tabular}
\end{center}
\end{small}
\end{table}

\end{document}